\newif\ifsubmission
\definecolor{DarkRed}{RGB}{150,0,0}
\definecolor{DarkGreen}{RGB}{0,150,0}
\definecolor{DarkBlue}{RGB}{0,0,150}
\definecolor{purple}{RGB}{200,0,200}
\newtheorem{theorem}{Theorem}[section]
\newtheorem{lemma}[theorem]{Lemma}
\newtheorem{corollary}[theorem]{Corollary}
\newcounter{definition}
\newenvironment{definition}[1][]{\refstepcounter{definition}\par\medskip
   \noindent \textbf{Definition~\thedefinition. #1} \rmfamily}{\medskip}	
 \gdef\xxxmark{%
   \expandafter\ifx\csname @mpargs\endcsname\relax 
     \expandafter\ifx\csname @captype\endcsname\relax 
     \marginpar{\textcolor{red}{\textbf{xxx}}}
          \else
       \textbf{\textcolor{red}{xxx}} 
    \fi
   \else
     \textbf{\textcolor{red}{xxx}} 
   \fi}
 \gdef\xxx{\@ifnextchar[\xxx@lab\xxx@nolab}
 \long\gdef\xxx@lab[#1]#2{\textbf{[\xxxmark \textcolor{{blue}#2} ---{\sc {\color{blue}#1}}]}}
 \long\gdef\xxx@nolab#1{\textbf{[\xxxmark \textcolor{blue}{#1}]}}
\newcommand{\klcs}[1][]{\ifthenelse{\equal{#1}{}}{$k$-LCS}{${#1}$-LCS}}
\newcommand{\kwlcs}[1][]{\ifthenelse{\equal{#1}{}}{$k$-WLCS}{${#1}$-WLCS}}
\newcommand{\kNLstC}[1][]{\ifthenelse{\equal{#1}{}}{$k$-NLstC}{${#1}$-NLstC}}
\newcommand{\kELstC}[1][]{\ifthenelse{\equal{#1}{}}{$k$-ELstC}{${#1}$-ELstC}}
\newcommand{\czkc}[1][]{\ifthenelse{\equal{#1}{}}{FZ$k$C}{FZ${#1}$C}}
\newcommand{\czkch}[1][]{\ifthenelse{\equal{#1}{}}{FZ$k$CH}{FZ${#1}$CH}}
\newcommand{\cfkc}[1][]{\ifthenelse{\equal{#1}{}}{F$\mathfrak{f}k$C}{F$\mathfrak{f}{#1}$C}}
\newcommand{\cfkch}[1][]{\ifthenelse{\equal{#1}{}}{F$\mathfrak{f}k$CH}{F$\mathfrak{f}{#1}$CH}}
\newcommand{\ckov}[1][]{\ifthenelse{\equal{#1}{}}{F$k$-OV}{F${#1}$-OV}}
\newcommand{\ckovh}[1][]{\ifthenelse{\equal{#1}{}}{F$k$-OVH}{F${#1}$-OVH}}
\newcommand{\cksum}[1][]{\ifthenelse{\equal{#1}{}}{F$k$-SUM}{F${#1}$-SUM}}
\newcommand{\cksumh}[1][]{\ifthenelse{\equal{#1}{}}{F$k$-SUMH}{F${#1}$-SUMH}}
\newcommand{\fzkc}[1][]{\ifthenelse{\equal{#1}{}}{FZ$k$C}{FZ${#1}$C}}
\newcommand{\ckxor}[1][]{\ifthenelse{\equal{#1}{}}{F$k$-XOR}{F${#1}$-XOR}}
\newcommand{\ckxorh}[1][]{\ifthenelse{\equal{#1}{}}{F$k$-XORH}{F${#1}$-XORH}}
\newcommand{\ckfunc}[1][]{\ifthenelse{\equal{#1}{}}{F$k$-$\mathfrak{f}$}{F${#1}$-$\mathfrak{f}$}}
\newcommand{\ckfunch}[1][]{\ifthenelse{\equal{#1}{}}{F$k$-$\mathfrak{f}$H}{F${#1}$-$\mathfrak{f}$H}}
\newcommand{\erdosRen}{Erd{\H{o}}s-R{\'{e}}nyi }
\newcommand{\kxor}[1][]{\ifthenelse{\equal{#1}{}}{$k$-XOR}{${#1}$-XOR}}
\newcommand{\ksum}[1][]{\ifthenelse{\equal{#1}{}}{$k$-SUM}{${#1}$-SUM}}
\newcommand{\vksum}[1][]{\ifthenelse{\equal{#1}{}}{V$k$-SUM}{V${#1}$-SUM}}
\newcommand{\ukov}[1][]{\ifthenelse{\equal{#1}{}}{U$k$-OV}{U${#1}$-OV}}
\newcommand{\nukov}[1][]{\ifthenelse{\equal{#1}{}}{NU$k$-OV}{NU${#1}$-OV}}
\newcommand{\ukxor}[1][]{\ifthenelse{\equal{#1}{}}{U$k$-XOR}{U${#1}$-XOR}}
\newcommand{\GoodDPolys}[1][]{\ifthenelse{\equal{#1}{}}{Good $d$-Degree Polynomials}{Good ${#1}$-Degree Polynomials}}
\newcommand{\goodDPoly}[1][]{\ifthenelse{\equal{#1}{}}{good $d$-degree polynomial}{good ${#1}$-degree polynomial}}
\newcommand{\OKPolys}[1][]{\ifthenelse{\equal{#1}{}}{Fine $d$-Degree Polynomials}{Fine ${#1}$-Degree Polynomials}}
\newcommand{\okPoly}[1][]{\ifthenelse{\equal{#1}{}}{fine $d$-degree polynomial}{fine ${#1}$-degree polynomial}}
\def\edgepartitionc{E_{a_1,...,a_c}}
\def\gchosenlabels{G^{\ell_1,...\ell_{\binom{k}{c}}}}
\def\Otil{\tilde{O}}
\title{Worst-Case and Average-Case Hardness of\\
Hypercycle and Database Problems}
\author{
    Cheng-Hao Fu \thanks{Boston University, \texttt{chenghao@bu.edu}.}
    \and 
    Andrea Lincoln \thanks{Boston University, \texttt{andrea2@bu.edu}.}
    \and 
    Rene Reyes \thanks{Boston University, \texttt{rdreyes@bu.edu}. Supported by NSF Grant No. 2209194.}
}
\date{April 25, 2025}
\begin{document}

\maketitle
\thispagestyle{empty}
\begin{abstract}
In this paper we present tight lower-bounds and new upper-bounds for hypergraph and database problems. 
We give tight lower-bounds for finding minimum hypercycles. 
We give tight lower-bounds for a substantial regime of unweighted hypercycle. We also give a new faster algorithm for longer unweighted hypercycles. 
We give a worst-case to average-case reduction from detecting a subgraph of a hypergraph in the worst-case to counting subgraphs of hypergraphs in the average-case. 
We demonstrate two applications of this worst-case to average-case reduction, which result in average-case lower bounds for counting hypercycles in random hypergraphs and queries in average-case databases. Our tight upper and lower bounds for hypercycle detection in the worst-case have immediate implications for the average-case via our worst-case to average-case reductions.

\end{abstract}

\thispagestyle{empty}
\newpage

\ifsubmission
\else
\pagenumbering{roman}
\tableofcontents
\newpage
\pagenumbering{arabic}
\fi 

\section{Introduction}
\label{sec:intro}
The fine-grained hardness of hypergraph problems is known to have interesting connections to both theoretical and practical problems in computer science. For example, hypergraph problems have deep connections to solving constraint satisfaction problems: algorithms for hypercycle problems in hypergraphs can be used to solve various Constraint Satisfaction Problems (CSPs), including Max-$k$-SAT \cite{LVW18}.
On the practical side, a recent line of work in database theory (initiated by Carmeli and Kr\"oll \cite{kroll2021}) has shown that certain queries of interest can be interpreted as problems about detecting, listing and counting small structures in hypergraphs.
Nevertheless, the fine-grained study of hypergraph problems remains largely under-explored, especially when compared to the volume of work on fine-grained hardness of graph problems.
In this paper, we aim to extend the understanding of hypergraphs in the average-case setting.
We focus on two main directions: the worst-case hardness of hypercycle problems and the average-case hardness of counting small subhypergraphs.

To our knowledge, this work is the first to show that the conditional hardness of hypercycle detection is dependent on both the length of the hypercycle and the size of the hyperedges.
This is somewhat surprising, given that for other problems, such as hyperclique detection, hardness is conjectured to depend only on the size of the hyperclique.
In the weighted case we get tight lower bounds for all hypercycle lengths.
In the unweighted case, we show that brute-force is necessary for  short hypercycles, resulting in tight upper and lower bounds. 
For the parameter regime where we can't show the brute force lower bound, we give a new faster algorithm for longer unweighted hypercycles using matrix multiplication. 
This shows the brute-force lower bound is actually false past the point where our reduction stops working.

For our average-case results, we build on a line of work that has used the error-correcting properties of polynomials to show average-case fine-grained hardness.
This technique, introduced by Ball et al. \cite{BallWorstToAvg} has already been applied to hypergraph problems by Boix-Adser\`a et al. \cite{UniformCliqueABB}, who show that counting small hypercliques in random hypergraphs where all hyperedges have the same size is as hard as in the worst-case.
In the graph setting, their approach was generalized by Dalirroyfard et al.\cite{factoredProblems} who showed similar worst-case-to-average-case reductions for counting any small subgraph.
We further generalize their result to show that this is true for counting small subhypergraphs in random hypergraphs, including ones where the hyperedges have different sizes.
The worst-case to average-case reduction also allows us to prove that counting queries on random databases are hard on average.

These two results are highly complementary.
Applying our worst-case-to-average-case reduction to our hypercycle hardness results immediately gives us average-case hardness of hypercycle problems.
Furthermore, understanding the worst-case hardness of other hypergraph substructures would shed light on what types of counting queries are hard on average.
Finally, in the process of exploring these areas, we have found many new avenues to explore, which range from understanding the parameter regimes where we do not get tight upper and lower bounds to improving some generic techniques from traditional worst-case reductions.

\subsection{Hypercycle Algorithms and Lower Bounds}

A $k$-hypercycle in a $u$-uniform hypergraph is a list of $k$ distinct nodes $v_1, \ldots, v_k$ such that every hyperedge of $u$ adjacent nodes exists in the hypergraph. 
That is, a $k$-hypercycle consists of the edges $(v_i, \ldots, v_{i+u-1 \mod k})$ for all $i \in [k]$.
In the literature, this is often called a tight hypercycle, but we will omit the `tight' and call these hypercycles throughout this paper (following the norm of \cite{LVW18}). 

We demonstrate new algorithms and conditional lower bounds for the minimum hypercycle and unweighted hypercycle problems.
Our upper and lower bounds for minimum hypercycle are tight, as shown in Figure \ref{fig:weightedHypercycleBounds}.
In the unweighted case, we show that the hardness of hypercycle detection depends on the relationship between the hyperedge size and hypercycle length, as shown in Figure \ref{fig:unweightedHypercycleBounds}.
The conditional lower bounds are derived from the minimum $k$-clique hypothesis and the $(k, u)$-hyperclique hypothesis:

\begin{definition}[MIN WEIGHT k-CLIQUE HYPOTHESIS (e.g. \cite{AbboudWW14,BackursT17})] 
\label{def:minClique}There is a constant $c > 1$ such that, on a Word-RAM with $O(\log(n))$-bit words, finding a $k$-Clique of minimum total edge weight in an $n$-node graph with non-negative integer edge weights in
$[1,n^{ck}]$ requires $n^{k-o(1)}$ time.
\end{definition}

\begin{definition}[$(k, u)$-HYPERCLIQUE HYPOTHESIS \cite{LVW18}]
\label{def:hypercliqueHypothesis}
Let $k > u > 2$ be integers. On a Word-RAM with $O(\log(n))$ bit
words, finding a $k$-hyperclique in a 
$u$-uniform hypergraph on $n$ 
nodes requires $n^{k-o(1)}$ time.
\end{definition}

As evidence for this hypothesis, \cite{LVW18} give a reduction from the maximum degree-$u$-CSP problem to the $(k,u)$-hyperclique problem. 
They also give a reduction from hyperclique detection to hypercycle detection.
The hard instances output by this reduction only cover a specific hypercycle length $k$ for each uniformity $u$ and this seems inherent to the technique.
A natural question to ask is whether the hardness of finding $k$-hypercycles in $u$-uniform hypergraphs depends on the relationship between $u$ and $k$.
We answer this question in the affirmative for both weighted and unweighted hypercycle problems.
This is surprising since the hardness of other hypergraph problems such as $(k,u)$-hyperclique is conjectured to be independent of this relationship.

For min-weight $k$-hypercycle, we show that brute-force is required when $k$ is less than $2u-1$.
When $k \geq 2u-1$, we give an algorithm with runtime independent of $k$.
For unweighted hypercycle, we show that brute force is needed for all $k$ up to the length of the instances output by the \cite{LVW18} reduction.
For longer hypercycles, we show this is not the case by giving novel algorithms which use matrix multiplication to get a speedup.

We now provide a more detailed overview of our techniques along with formal theorem statements for our upper and lower bounds.



\subsubsection{Weighted Hypercycle}
In Section \ref{sec:tight_weighted} we give the results for weighted hypercycle, depicted in Figure \ref{fig:weightedHypercycleBounds}. 
The matching lower bounds come from a reduction between cliques in graphs (2-uniform hypergraphs) and hypercycles. Informally the minimum $k$-clique hypothesis states that finding a minimum $k$-clique in a graph can't be done faster than the naive algorithm which takes $n^{k-o(1)}$ time (see definition \ref{def:minClique}). We prove the following. 
\begin{restatable}{theorem}{weightedCycleLB}
    If the minimum $k$-clique hypothesis holds then the minimum $k$-hypercycle problem in a $u$-uniform hypergraph requires $n^{k-o(1)}$ time for $k \in [u+1 , 2u-1]$. 
\end{restatable}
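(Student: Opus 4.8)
The plan is to give a fine-grained reduction from minimum $k$-clique in edge-weighted \emph{graphs} to minimum $k$-hypercycle in $u$-uniform hypergraphs, blowing up the vertex count by only the constant factor $k$ and the weights by only a polynomial factor, so that the $n^{k-o(1)}$ lower bound transfers. Given a graph $G$ on $[n]$ with weights $w\colon E(G)\to[1,n^{ck}]$, I would build a $u$-uniform hypergraph $H$ on vertex set $[k]\times[n]$, viewing the first coordinate as a ``color'' indexing the $k$ positions of a prospective hypercycle. Call a $u$-element subset $W\subseteq\mathbb{Z}_k$ a \emph{window} if it consists of $u$ cyclically consecutive positions. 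For each window $W$ and each tuple $(b_j)_{j\in W}\in[n]^W$ whose entries are pairwise distinct and pairwise adjacent in $G$, put the hyperedge $\{(j,b_j):j\in W\}$ into $H$. Then a ``layered'' $k$-hypercycle $(1,a_1),(2,a_2),\dots,(k,a_k)$ exists in $H$ if and only if $a_1,\dots,a_k$ are distinct and form a $k$-clique of $G$, since the hyperedge of each window $W$ exactly certifies that the labels inside $W$ are distinct and pairwise adjacent.

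The crux, and the reason the statement is restricted to $k\in[u+1,2u-1]$, is the elementary observation that in $\mathbb{Z}_k$ \emph{every pair of positions lies in a common window if and only if $k\le 2u-1$}: positions $i$ and $i+d$ fail to share a size-$u$ window precisely when $u\le d\le k-u$, and that range of $d$ is empty iff $k<2u$. For $k\le 2u-1$ this does double duty. First, for \emph{any} $k$-hypercycle of $H$ (not just a layered one), each pair of its $k$ vertices lies in a common hyperedge, and every hyperedge has all distinct colors, so the $k$ vertices receive $k$ distinct colors, i.e.\ the hypercycle uses exactly one vertex per color class and its labels form a $k$-clique of $G$. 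Second, the condition $k\ge u+1$ is needed just to exclude the degenerate regime $k\le u$ where the window structure collapses.

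To carry weights across, let $m_{i,j}$ be the number of windows containing the pair of positions $\{i,j\}$; this depends only on the cyclic distance, is a positive integer for $k\le 2u-1$, and is bounded by the constant $L:=\operatorname{lcm}\{m_{i,j}\}$ (depending only on $k,u$). I would give the hyperedge of window $W$ with labels $(b_j)_{j\in W}$ the integer weight $\sum_{\{j,j'\}\subseteq W}(L/m_{j,j'})\,w(b_j,b_{j'})\in[1,n^{O(k)}]$. Summing over the $k$ hyperedges of an arbitrary $k$-hypercycle of $H$, each unordered pair of positions is counted exactly $m_{i,j}$ times --- here one uses that the $k$ ``color-windows'' traversed are exactly the $k$ windows of $\mathbb{Z}_k$, which holds even if the color permutation is not a rotation, since the $k$ consecutive position-blocks map to $k$ distinct windows and $\mathbb{Z}_k$ has only $k$ windows. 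Hence the hypercycle weight equals exactly $L$ times the weight of the corresponding $k$-clique of $G$, so $\min\text{-}k\text{-hypercycle}(H)=L\cdot\min\text{-}k\text{-clique}(G)$ (both $+\infty$ if $G$ has no $k$-clique), and an optimal hypercycle exhibits an optimal clique.

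Finally $H$ has $kn=O(n)$ vertices and $O(kn^u)$ hyperedges, and is built in $\tilde O(n^u)=O(n^{k-1/2})$ time since $u\le k-1$, with polynomially bounded weights; so a hypothetical $N^{k-\epsilon}$ algorithm for minimum $k$-hypercycle would solve minimum $k$-clique in time $O(n^{k-\epsilon}+n^{k-1/2})$, contradicting the minimum $k$-clique hypothesis. I expect the main obstacle to be the correctness argument for \emph{arbitrary} hypercycles of $H$ rather than only the layered ones: one must check both that no hypercycle can cheat the color structure and that the weight identity telescopes for every admissible color permutation. The window-covering fact above is precisely what makes both work, and identifying its threshold at $k=2u-1$ is the conceptual heart of the argument.
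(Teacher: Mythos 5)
Your proposal is correct, and it reaches the theorem by a genuinely more direct route than the paper. The paper never builds a clique-to-hypercycle gadget for general target uniformity: it invokes the known reduction of Theorem \ref{thm:hclique_to_hcycle} (from \cite{LVW18}) with source uniformity $2$, which only yields hardness at the extremal lengths $k=2u-1$ (odd case, Corollary \ref{cor:min2u-1ishard}) and $k=2u-2$ (even case, Corollary \ref{cor:min2u-1ishardEven}), and then lifts these to larger uniformities via a separate self-reduction (Lemma \ref{lem:growUniformityFree}) that pads each hyperedge with all possible extensions and uses randomized color-coding (about $k^k\lg^2 n$ repetitions) to impose the circle-layered structure; the final statement is assembled through an odd/even case split. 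You instead give a single deterministic reduction covering every $k\in[u+1,2u-1]$ at once: layered vertices $[k]\times[n]$, window hyperedges, and the observation that size-$u$ windows of $\mathbb{Z}_k$ cover every pair of positions exactly when $k\le 2u-1$ --- the same combinatorial threshold that $\gamma_2(k)$ encodes in the paper --- which you exploit twice, first to force an \emph{arbitrary} hypercycle of the constructed hypergraph to use one vertex per color with clique labels (replacing the paper's layering/backward-cycle analysis), and second to make the LCM-normalized weight splitting telescope to exactly $L$ times the clique weight. The one step that needed care, that the $k$ consecutive blocks of an arbitrary hypercycle hit each of the $k$ color windows exactly once, is justified soundly (distinct position sets, injective position-to-color map, and only $k$ windows exist since $u<k$). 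What each approach buys: yours is self-contained, avoids randomization and the parity case split, and gives an exact weight identity rather than only a min-weight correspondence; the paper's modular route reuses the \cite{LVW18} reduction as a black box and isolates a uniformity-lifting lemma whose unweighted analogue it also needs elsewhere. If you write yours up, state explicitly that the instance has $kn$ vertices (so $k=O(1)$ is needed for the $n^{k-o(1)}$ bound to transfer) and that a hyperedge's weight is well defined because its vertex set determines both its color window and its labels.
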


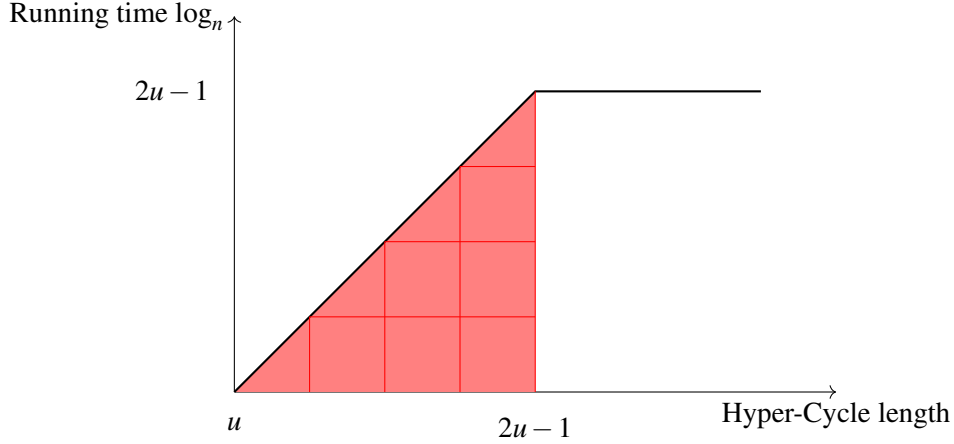
\begin{figure} [ht]
    \centering
    \begin{tikzpicture}
    \definecolor{shadecolor}{rgb}{1,0.5,0.5}

    \draw[->] (0,0) -- (8,0) node[anchor=north] {Hyper-Cycle length};
    \draw[->] (0,0) -- (0,5) node[anchor=east] {Running time $\log_n$};

    \fill[shadecolor] (0,0) -- (4,4) -- (4,0) -- cycle;

    \draw[thick] (0,0) -- (4,4) -- (7,4);

    \draw[red, thin] (1,0) -- (1,1);
    \draw[red, thin] (2,0) -- (2,2);
    \draw[red, thin] (3,0) -- (3,3);
    \draw[red, thin] (4,0) -- (4,4);
    \draw[red, thin] (1,1) -- (4,1);
    \draw[red, thin] (2,2) -- (4,2);
    \draw[red, thin] (3,3) -- (4,3);

    \node[anchor=north] at (0,-0.2) {$u$};
    \node[anchor=north] at (4,-0.2) {$2u-1$};
    \node[anchor=east] at (-0.2,4) {$2u-1$};

\end{tikzpicture}
    \caption{The running time of minimum weight hyper-cycle in a weighted $u$-uniform hyper-graph. The exponent of the running time is indicated by the black line and the lower bound is matching, as indicated by the hatched red area. Note the running time is $O(n^u)$ for a $u$ length hyper-cycle and $O(n^{2u-1})$ for a $2u-1$ length hyper-cycle. Then for all hyper-cycles of length $k>2u-1$ the running time continues to be $O(n^{2u-1})$.}
    \label{fig:weightedHypercycleBounds}
\end{figure}

\subsubsection{Unweighted Hypercycle} 

For the unweighted version of the problem we get tight upper and lower bounds in a $u$-uniform graph for all $k \leq \gamma_3^{-1}(u)$, where we define $ \gamma_3(k) = k-\lceil k/3 \rceil +1$ as is done in \cite{LVW18} and $k = \gamma_3^{-1}(u)$ is the \emph{largest} $k$ such that $\gamma_3(k)=u$.
Intuitively, this is the largest $k$ such that any three nodes in the hypercycle are covered by at least one hyperedge of size $u$, which happens when $k \approx 3u/2$. 
For larger hypercycles where $k \geq \gamma_3^{-1}(u)+1$, we can pick three partitions such that no hyperedge covers all three. 
This property allows us to reduce the problem to triangle-counting and then use matrix multiplication to obtain an algorithmic speedup.
This algorithm, given in Section \ref{sec:ub_unweighted}, runs in time $n^{k-3+\omega}$ where $\omega$ is the matrix multiplication constant. 
We depict the upper and lower bounds for unweighted hypergraphs in Figure \ref{fig:unweightedHypercycleBounds}.

We get tight lower bounds for hypercycle up to the ``phase transition'' point at $ \gamma_3(k)$. These lower bounds come from the $(u,k)$-hyperclique hypothesis. Informally, this says detecting a $k$-hyperclique in a $u$-uniform hypergraph if $k > u > 2$ requires $n^{k-o(1)}$ (see Definition \ref{def:hypercliqueHypothesis}).  We prove the following theorem in section \ref{subsec:tighthardnessshorthypercycles}.

\begin{restatable}{theorem}{tightShortHypercycle}\label{thm:tight short hypercycle}
    Under the $(3,k)$-hyperclique hypothesis, an algorithm for finding (counting)  $(u,k)$-hypercycle for $k \in [u,\gamma_3^{-1}(u)]$ requires $O(n^{k-o(1)})$ time.
\end{restatable}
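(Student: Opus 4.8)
The plan is to give a fine-grained reduction from detecting (counting) a $k$-hyperclique in a $3$-uniform hypergraph to detecting (counting) a $(u,k)$-hypercycle, so that an algorithm beating the exponent $k$ for hypercycles would refute the $(3,k)$-hyperclique hypothesis. Everything hinges on one combinatorial fact, and this is the only place the hypothesis $k \le \gamma_3^{-1}(u)$ is used. Since $\gamma_3$ is non-decreasing, $k \le \gamma_3^{-1}(u)$ gives $\gamma_3(k) = k - \lceil k/3 \rceil + 1 \le u$, and I claim this forces every $3$-element set $S$ of cyclic indices in $[k]$ (indices read modulo $k$) to lie inside some \emph{window} $W_i = \{i, i+1, \dots, i+u-1\}$: the three points of $S$ cut the cycle into three arcs of total length $k$, so the longest arc has at least $\lceil k/3 \rceil$ steps, and deleting its interior leaves an arc on at most $k - \lceil k/3 \rceil + 1 = \gamma_3(k) \le u$ vertices that still contains $S$; pad it to a window of exactly $u$ consecutive indices.

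Given a $3$-uniform hypergraph $H$ on $n$ vertices, I build a $u$-uniform hypergraph $G$ on $kn$ vertices. Take $k$ disjoint copies $V_1, \dots, V_k$ of $V(H)$, with projections $\pi_m \colon V_m \to V(H)$; these are the ``parts''. For each of the $\binom{k}{3}$ index-triples $S \subseteq [k]$ fix a window $w(S) \supseteq S$ (possible by the fact above). Declare a set $\{y_i, \dots, y_{i+u-1}\}$ with $y_j \in V_j$ to be a hyperedge of $G$ exactly when, for every triple $S = \{a,b,c\}$ with $w(S) = W_i$, we have $\{\pi_a(y_a), \pi_b(y_b), \pi_c(y_c)\} \in E(H)$. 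Every hyperedge of $G$ then consists of one vertex from each of $u$ cyclically consecutive parts, and $G$ has $O(n^u)$ hyperedges, computable in $\tO(n^u)$ time after $O(n^3)$ preprocessing of $E(H)$ (or answerable by $O(1)$-time membership queries).

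For correctness: if $\{x_1, \dots, x_k\}$ is a $k$-hyperclique of $H$, choose $w_m \in V_m$ with $\pi_m(w_m) = x_m$; every $\{w_j, \dots, w_{j+u-1}\}$ satisfies all triples assigned to $W_j$, since all triples among the $x_m$ are edges of $H$, so $w_1, \dots, w_k$ is a $(u,k)$-hypercycle. Conversely, let $w_1, \dots, w_k$ be a hypercycle of $G$ and let $p(\ell) \in [k]$ index the part containing $w_\ell$. Here I use $k \le \gamma_3^{-1}(u) < 2u$ (immediate from the formula for $\gamma_3$): any two cyclic indices lie within cyclic distance $u-1$, hence in a common window, so $p(\ell) = p(\ell')$ with $\ell \ne \ell'$ would put two vertices of that window's hypercycle edge in one part, which is impossible in $G$. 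Hence $p$ is a bijection; each hypercycle edge occupies $u$ distinct parts, which must form a window; and since $p$ is a bijection, sending each hypercycle edge to its window is itself a bijection onto $\{W_1, \dots, W_k\}$. Writing $\tilde x_m := \pi_m(w_{p^{-1}(m)})$, every triple $\{a,b,c\} \subseteq [k]$ is assigned to a window that is the window of a unique hypercycle edge, and that edge lying in $G$ forces $\{\tilde x_a, \tilde x_b, \tilde x_c\} \in E(H)$; in particular the $\tilde x_m$ are distinct, so $\{\tilde x_1, \dots, \tilde x_k\}$ is a $k$-hyperclique of $H$. For the counting version, a hypercycle of $G$ is determined by $p$, which ranges over a fixed finite set of ``admissible'' bijections depending only on $k$ and $u$, together with an ordered $k$-hyperclique of $H$, so the number of hypercycles of $G$ equals a fixed positive constant (depending only on $k, u$) times the number of $k$-hypercliques of $H$, which we recover by dividing.

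Combining these, an algorithm finding (counting) $(u,k)$-hypercycles in $O(m^{k-\varepsilon})$ time on $m$-vertex inputs, run on $G$ (which has $kn = O(n)$ vertices and whose preparation cost $\tO(n^u)$ is negligible against $n^{k-o(1)}$ when $k > u$, the boundary $k = u$ being handled by the query-access version), would find (count) $k$-hypercliques in $3$-uniform hypergraphs in $O(n^{k-\varepsilon'})$ time, contradicting the $(3,k)$-hyperclique hypothesis. I expect the main work to be in the converse direction: making the ``$p$ is an admissible bijection'' analysis fully rigorous and, for the counting claim, identifying the exact set of admissible bijections. The feature that makes this tractable is precisely the inequality $k < 2u$ that $k \le \gamma_3^{-1}(u)$ guarantees: it makes $p$ automatically injective and so excludes the ``folded'' pseudo-hypercycles that would otherwise require an extra gadget to rule out.
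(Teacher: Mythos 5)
Your proposal is correct in substance, but it takes a genuinely different route from the paper. The paper proceeds modularly: it cites the hyperclique-to-hypercycle reduction of Lincoln--Vassilevska--Williams (Theorem~\ref{thm:hclique_to_hcycle}) as a black box to get hardness of $(\gamma_3(k),k)$-hypercycle, proves monotonicity of $\gamma_3$ (Lemma~\ref{lem:gamma monotone}), and then closes the gap between $\gamma_3(k)$ and $u$ with a separate uniformity-raising self-reduction (Lemma~\ref{lem:uniform self reduction}), whose soundness rests on a ``no backward cycles'' lemma for $k$-circle-layered graphs with $k \not\equiv 0 \bmod u$ (Lemma~\ref{lem:no_short_backward_cycles}). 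You instead build a single direct reduction from $(3,k)$-hyperclique to $(u,k)$-hypercycle: your observation that every $3$-subset of cyclic positions fits in an arc of $\gamma_3(k)\le u$ consecutive positions is exactly the combinatorial core of the cited reduction, but you exploit the slack $u-\gamma_3(k)$ by padding the window assignment rather than by a second padding reduction on hyperedges, and your injectivity argument for the position-to-partition map (any two of the $k$ positions lie in a common length-$u$ window because $k\le 2u-1$) is a simpler special case of the paper's Lemma~\ref{lem:no_short_backward_cycles}. What each buys: the paper's decomposition reuses the published theorem verbatim and yields the uniformity-raising lemma as a standalone tool (it reappears in the weighted section as Lemma~\ref{lem:growUniformityFree}), whereas your merged construction is self-contained, avoids the $n^{u-u'}$ edge blow-up step, and gives the counting statement cleanly via the exact factorization (number of hypercycle lists) $=$ (number of admissible position-to-partition bijections) $\times\, k! \times$ (number of hypercliques), where the constant depends only on $k,u$ and can be computed by brute force over the $k!$ bijections, so you need not identify the admissible set explicitly. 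Two minor caveats, neither fatal: the boundary case $k=u$ is degenerate in your construction (all windows coincide as sets, and the ``bijection onto windows'' step collapses), though the paper's own proof is equally loose there since $k\equiv 0 \bmod u$; and the final division step should fix the counting convention of the hypercycle counter (lists versus cyclic/reflective equivalence classes), which only changes your constant by a factor depending on $k$.
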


When $k > \gamma_3(k)$ we show an algorithm which is better than brute force by a factor dependent on $\omega$ (where $\omega$ is the matrix multiplication constant). 
\begin{restatable}{theorem}{fastCycleDetect} \label{thm:fastCycledetect}
    There exists a time-$\tilde{O}(k^k(n+m+n^{2u-1-(3-\omega)}))$ algorithm for finding $k$-hypercycles in any $n$-node hypergraph $G$ when $k \geq 2u-1$. 
\end{restatable}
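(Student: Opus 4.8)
The plan is to generalise the classical $\tilde O(n^\omega)$-time colour-coding algorithm for detecting $k$-cycles in graphs (the $u=2$ case) to $u$-uniform hypergraphs by tracking a sliding window of $u-1$ vertices instead of a single endpoint, and to observe that the ``advance-by-one-vertex'' step of the resulting dynamic program is, once $u-2$ coordinates are frozen, an ordinary rectangular boolean matrix multiplication.

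First I would apply colour coding. Colour each vertex of $G$ independently and uniformly with a colour in $[k]$; any fixed $k$-hypercycle receives the colours $1,2,\dots,k$ in cyclic order with probability at least $k^{1-k}$, so it suffices to repeat the whole procedure $\tilde O(k^k)$ times (or to derandomise with a suitable splitter) and to search only for a \emph{colour-ordered} hypercycle, i.e.\ a sequence $w_1,\dots,w_k$ with $w_i$ of colour $i$ and $\{w_i,\dots,w_{i+u-1}\}\in E$ for every $i$ (indices mod $k$). Since the $w_i$ then have pairwise distinct colours they are automatically distinct, so no ``set of used colours'' needs to be carried as part of the state. Reading $G$ and building a hyperedge look-up structure costs $O(n+m)$, which explains the $n+m$ term.

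Next I would cut the cycle open, using $k\ge 2u-1$ so that the windows $(w_1,\dots,w_{u-1})$ and $(w_{k-u+2},\dots,w_k)$ are disjoint. Then a colour-ordered $k$-hypercycle is exactly a colour-ordered ``hyperpath'' realising the hyperedges $\{w_i,\dots,w_{i+u-1}\}$ for $i=1,\dots,k-u+1$ from window $(w_1,\dots,w_{u-1})$ to window $(w_{k-u+2},\dots,w_k)$, plus the $u-1$ remaining wrap-around hyperedges $\{w_{k-u+2},\dots,w_k,w_1\},\dots,\{w_k,w_1,\dots,w_{u-1}\}$. For $\ell\ge1$ let $f_\ell$ be the boolean table on $V^{u-1}\times V^{u-1}$ with $f_\ell[\vec a][\vec b]=1$ iff some colour-ordered length-$\ell$ hyperpath starts at window $\vec a$ and ends at window $\vec b$ (again, colour-ordering takes care of distinctness along the path). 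I would compute $f_1$ directly and then $f_2,\dots,f_{k-u+1}$ one step at a time; finally a colour-ordered $k$-hypercycle exists iff $f_{k-u+1}[(w_1,\dots,w_{u-1})][(w_{k-u+2},\dots,w_k)]=1$ for some endpoints with all $u-1$ wrap-around sets being hyperedges, which is an $\tilde O(n^{2u-2})$ scan once $f_{k-u+1}$ is available.

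The speedup is in the step $f_\ell\mapsto f_{\ell+1}$. Extending a hyperpath by one vertex only ``forgets'' the oldest vertex $b$ of the end-window, ``introduces'' a new vertex $c$, and checks one new hyperedge, so, writing the $u-2$ coordinates shared between the old and new end-windows as $w_2',\dots,w_{u-1}'$,
\[
f_{\ell+1}\bigl[\vec a\bigr]\bigl[(w_2',\dots,w_{u-1}',c)\bigr]\;=\;\bigvee_{b\in V} f_\ell\bigl[\vec a\bigr]\bigl[(b,w_2',\dots,w_{u-1}')\bigr]\,\wedge\,\bigl[\{b,w_2',\dots,w_{u-1}',c\}\in E\bigr].
\]
Fixing the $u-2$ shared coordinates $(w_2',\dots,w_{u-1}')$ (one of $n^{u-2}$ choices), this update is a single boolean matrix product of the $n^{u-1}\times n$ matrix $A[\vec a][b]:=f_\ell[\vec a][(b,w_2',\dots,w_{u-1}')]$ with the $n\times n$ ``link'' matrix $B[b][c]:=[\{b,w_2',\dots,w_{u-1}',c\}\in E]$. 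Splitting $A$ into $n^{u-2}$ blocks of size $n\times n$, this product costs $\tilde O(n^{u-2}\cdot n^\omega)$, so over all $n^{u-2}$ frozen coordinate tuples one step costs $\tilde O(n^{2u-4+\omega})$; with $k-u=O(k)$ steps, and all remaining work (computing $f_1$ and the link matrices, the final scan, reading the input) being $O(n+m+n^{2u-2}+n^u)\subseteq O(n+m+n^{2u-4+\omega})$ for $u\ge 2$, the total over all colour-coding trials is $\tilde O(k^k(n+m+n^{2u-4+\omega}))=\tilde O(k^k(n+m+n^{2u-1-(3-\omega)}))$.

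The main obstacle is to make the structural claim behind the speedup fully rigorous: that ``advance by one vertex'' is genuinely a rectangular boolean matrix product once the $u-2$ shared coordinates are frozen, and that the block decomposition really attains the $n^{2u-4+\omega}$ bound, together with the clean cut-open description of colour-ordered $k$-hypercycles --- which is precisely where the hypothesis $k\ge 2u-1$ is used, since for $k<2u-1$ the two endpoint windows overlap and the $f_\ell$/wrap-around decomposition degenerates (consistent with the matching brute-force lower bound of Theorem~\ref{thm:tight short hypercycle} in that regime). A secondary point is justifying the $k^k$ factor, i.e.\ forcing the colours to appear in cyclic order along the hypercycle, either via $\tilde O(k^k)$ random trials or an explicit derandomisation.
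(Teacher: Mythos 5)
Your proposal is correct and follows essentially the same route as the paper: colour coding to reduce to a layered/colour-ordered instance, a dynamic program over $(u-1)$-vertex end-windows (the paper's \uCLR/\uECLR{u}{k} reachability matrices), and a per-step speedup obtained by freezing all but three coordinates so that the window-extension becomes an $n^\omega$-time boolean matrix product, giving $n^{2u-4+\omega}=n^{2u-1-(3-\omega)}$ per trial. Your rectangular-product-with-block-decomposition phrasing is just a repackaging of the paper's ``triangle partition'' argument, and starting the DP at a single hyperedge rather than at the $(2u-1)$-layer base case is an immaterial difference.
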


Proving conditional lower bounds in this regime might unexpectedly imply conditional lower bounds on $\omega$.
In short, we get tight lower bounds for all values of $k$ where lower bounds wouldn't have implications for matrix multiplication's running time, $\omega$.

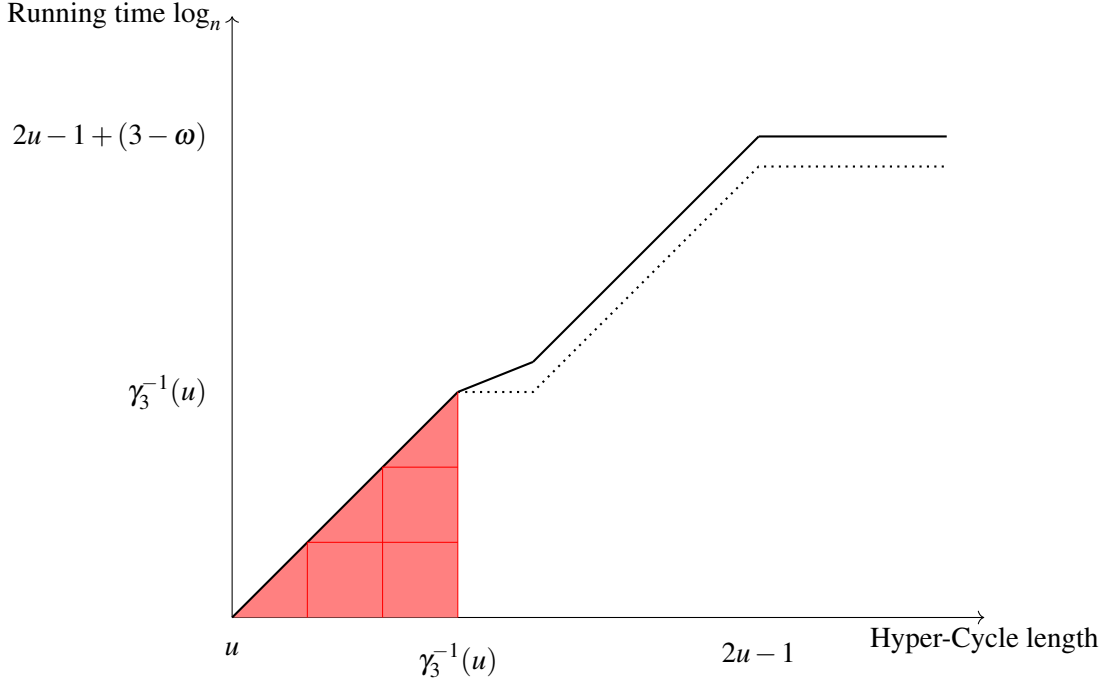
\begin{figure} [ht]
    \centering
        \begin{tikzpicture}
    \definecolor{shadecolor}{rgb}{1,0.5,0.5}
    \definecolor{linecolor}{rgb}{1,0,1}

    \draw[->] (0,0) -- (10,0) node[anchor=north] {Hyper-Cycle length};
    \draw[->] (0,0) -- (0,8) node[anchor=east] {Running time $\log_n$};

    \fill[shadecolor] (0,0) -- (3,3) -- (3,0) -- cycle;

    \draw[thick] (0,0) -- (3,3);
    \draw[thick] (3,3) -- (4,3.4);
    \draw[thick] (4,3.4) -- (7,6.4);
    \draw[thick] (7,6.4) -- (9.5, 6.4);

    \draw[dotted, thick] (3,3) -- (4,3) -- (7,6) -- (9.5,6);

    \draw[red, thin] (1,0) -- (1,1);
    \draw[red, thin] (2,0) -- (2,2);
    \draw[red, thin] (3,0) -- (3,3);
    \draw[red, thin] (1,1) -- (3,1);
    \draw[red, thin] (2,2) -- (3,2);

    \node[anchor=north] at (0,-0.2) {$u$};
    \node[anchor=north] at (3,-0.2) {$\gamma_3^{-1}(u)$};
    \node[anchor=north] at (7,-0.2) {$2u-1$};

    \node[anchor=east] at (-0.2, 3) {$\gamma_3^{-1}(u)$};
    \node[anchor=east] at (-0.2,6.4) {$2u-1+(3-\omega)$};

    \end{tikzpicture}
    \caption{The running time of \textit{unweighted} hyper-cycle in a $u$-uniform hyper-graph. The exponent of the running time is indicated by the black line. The lower bound is matching for the hatched red area from cycles of length $u$ to $\gamma_3^{-1}(u)$ and the running time is $O(n^k)$ for a $k$-hypercycle. Then from $\gamma_3^{-1}(u)$ to $2u-1$ the running time is $\tilde{O}(n^{k-3+\omega})$ for $k$-hypercycle where $\omega$ is the matrix multiplication constant ($\omega < 2.3716$ \cite{mmConstantOmega}). The dotted line represents the running time the algorithm would achieve if $\omega=2$. There is an algorithm running in $\tilde{O}(n^{2u-1-(3-\omega)})$ time for all $k \geq 2u-1$. The shading stops after $\gamma_3^{-1}(u)$ because we don't know of any tight lower bounds past this point. }
    \label{fig:unweightedHypercycleBounds}
\end{figure}

\subsubsection{Average-Case Implications}
We can apply our worst-case to average-case reduction techniques to get lower bounds for random hypergraphs. 
Notably, for constant-length unweighted-hypercycle, the hardness of counting in the worst-case is equivalent to the hardness in the average-case up to polylog factors. 
This gives tight upper and lower bounds for the average-case hardness of counting $k$-hypercycles in the same regime of cycle length for a given uniformity when $k \leq \gamma_3^{-1}(u)$. Informally, if the $(3,k)$-hyperclique hypothesis holds then counting $k$-hypercycles in $u$-uniform Erd{\H{o}}s-R{\'{e}}nyi hypergraphs when $k \in [u, \gamma_3^{-1}(u)]$ requires $\tilde{O}(n^{k-o(1)})$ time. We prove this in Section \ref{sec:lb_unweighted}.

\subsection{Counting Subhypergraphs on Average is as Hard as Detecting them in the Worst Case }
To enable our average-case results for databases and for counting hypercycles we provide a new reduction. 
Our reduction can handle non-uniform hypergraphs, meaning hypergraphs with edges of \emph{different} sizes (see Figure \ref{fig:exampleOfSmallGraph}).
This is necessary for the results to apply to the databases setting.
We transform the problem into a low degree polynomial and use a known framework to show average-case hardness. 

However, to get back to average-case graphs that aren't color-coded we use a new form of inclusion exclusion. 
Prior work introduced inclusion-edgesculsion \cite{LVW18}. 
We present a simultaneously generalized and simplified proof which allows us to show hardness for counting these subhypergraphs in uniformly randomly sampled hypergraphs (even those where there are mixed hyperedge sizes). 

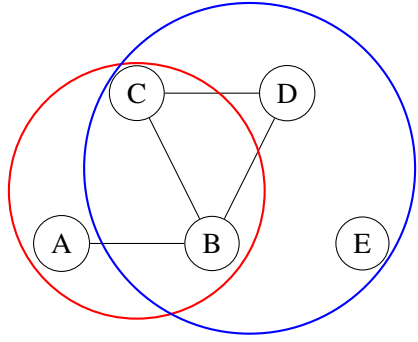
\begin{figure}[ht]
    \centering
    \begin{tikzpicture}

\node[draw, circle] (A) at (0,0) {A};
\node[draw, circle] (B) at (2,0) {B};
\node[draw, circle] (C) at (1,2) {C};
\node[draw, circle] (D) at (3,2) {D};
\node[draw, circle] (E) at (4,0) {E};

\draw (A) -- (B);
\draw (B) -- (C);
\draw (C) -- (D);
\draw (B) -- (D);

\draw[red, thick] (1,0.7) circle [radius=1.7];

\draw[blue, thick] (2.5, 1) circle [radius=2.2];

\end{tikzpicture}
    \caption{An example of a small subgraph with hyperedges of multiple different sizes. This graph contains hyperedges $\{A,B\}, \{B,C\}, \{B, D\}, \{C, D\}, \{A,B,C\},$ and $\{B,C,D,E\}$. We depict the $3$-width edge as a red circle, the $4$-width edge as a blue circle, and the $2$-width edges as black lines.}
    \label{fig:exampleOfSmallGraph} 
\end{figure}

\begin{theorem} [Informal]\label{thm: informal wc to ac}
Let $k$ be a constant. 
Counting the number of $k$-node hypergraphs $H$ made up of nodes $v_1,\ldots, v_k$ in $k$-partite hypergraphs $G$ with partitions $V_1, \ldots, V_k$ in the worst case where we only count copies of $H$ where $v_i \in V_i$ can be solved with polylog calls to a counter for hypergraphs $H$ in uniformly random hypergraphs.
\end{theorem}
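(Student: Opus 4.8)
The plan is to follow the template introduced by Ball et al.\ \cite{BallWorstToAvg} and refined by Boix-Adser\`a et al.\ \cite{UniformCliqueABB} and Dalirrooyfard et al.\ \cite{factoredProblems}, adapting it to handle hyperedges of mixed sizes and to remove the color-coding at the end. First I would encode the $k$-partite counting problem as the evaluation of a low-degree polynomial. Given $G$ with partitions $V_1,\dots,V_k$, introduce one variable $x_{i,v}$ for each $i\in[k]$ and $v\in V_i$, which is the indicator that $v$ is selected from $V_i$. For a target hypergraph $H$ on nodes $v_1,\dots,v_k$, the count of copies of $H$ respecting the partition is $\sum \prod_{i} x_{i,v_i} \prod_{e\in E(H)} A_e(\text{selected tuple})$, where $A_e$ is the indicator that the corresponding hyperedge is present in $G$; since $k$ is constant this is a polynomial of constant degree $d$ in the $x_{i,v}$ variables. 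Replacing the Boolean indicators by free variables over a field $\mathbb{F}_p$ (for a suitable prime $p$ larger than the maximum possible count, so the count is recovered exactly) gives a polynomial $P$ whose worst-case evaluation on $0/1$ inputs solves the $k$-partite counting problem.

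Second, I would invoke the standard random self-reducibility of low-degree polynomials over $\mathbb{F}_p$: to evaluate $P$ at an arbitrary point, pick a random line through that point, query $P$ at $d+1$ (or slightly more, to absorb errors via Berlekamp–Welch / Reed–Solomon decoding) uniformly random points on the line, and interpolate. The key point that needs care here — and the step I expect to be the main obstacle — is that a uniformly random point of $\mathbb{F}_p^{N}$ in the polynomial's variables must correspond to a uniformly random \emph{hypergraph} in the intended distribution, i.e.\ the Erd\H{o}s--R\'enyi-type model with the correct edge densities for each hyperedge size. With mixed hyperedge sizes the variables $A_e$ for different $e$ are not independent in the obvious encoding (a large hyperedge and the small hyperedges it contains are correlated), so I would need to choose the polynomial encoding so that the underlying random-graph variables are genuinely independent across all relevant hyperedges, or else argue that the induced distribution on hypergraphs is close enough to uniform that an average-case counter still succeeds with noticeable probability. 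This is exactly where the generalization beyond the uniform-hypergraph case of \cite{UniformCliqueABB} is nontrivial.

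Third, having reduced worst-case $k$-partite counting to average-case counting of $H$-copies \emph{with the color/partition constraint still present}, I would remove the partition structure via the generalized inclusion–exclusion ("inclusion-edgesclusion") argument promised in the text. The idea is the familiar color-coding-removal trick: counting partitioned copies of $H$ in $G$ can be written as an alternating sum, over subsets $S$ of the partition classes, of counts of (unrestricted) $H$-copies in the induced subhypergraph on $\bigcup_{i\in S}V_i$, with signs $(-1)^{k-|S|}$ and appropriate multiplicities; each such term is a count in a uniformly random hypergraph on fewer vertices, which the average-case oracle handles. The simplification over \cite{LVW18} presumably comes from doing this inclusion–exclusion at the level of vertex classes rather than edges, which also makes it transparent that mixed hyperedge sizes cause no new difficulty. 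Combining the three steps — polynomial encoding, random self-reduction to the random-hypergraph distribution, and inclusion–exclusion to strip the coloring — yields polylog-many calls to the average-case counter (the polylog coming from error amplification over the random lines), establishing the theorem; the formal statement in the body would then spell out the exact distribution, field size, and success probability.
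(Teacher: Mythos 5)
Your high-level skeleton (low-degree polynomial encoding, worst-case-to-average-case via random self-reduction, then inclusion--exclusion to strip the partition structure) matches the paper's outline, but there are two genuine gaps at exactly the points where the technical work lives. First, the self-reduction mechanism you describe --- a random line through the worst-case point, queries at uniformly random points of $\mathbb{F}_p^N$, and Reed--Solomon/Berlekamp--Welch decoding --- does not produce instances of the average-case problem: a uniformly random point of $\mathbb{F}_p^N$ is nothing like a $0/1$ Bernoulli hypergraph, and no ``close to uniform'' argument can bridge that. You flag this as the main obstacle but leave it unresolved. The paper resolves it by writing the count as a \emph{\goodPoly} (Definition \ref{def: GLDP}): a polynomial whose variables are the hyperedge indicators themselves (one independent variable per potential hyperedge, so mixed hyperedge sizes cause no correlation issue --- your worry there is misplaced, since a large hyperedge and the small hyperedges inside it are independent variables in the model), of degree $|E_H|$ and \emph{strongly $d$-partite}; Theorem \ref{thm: GLDP} of \cite{factoredProblems} then gives a reduction whose random queries are genuinely Bernoulli $0/1$ inputs. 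Relatedly, your encoding in node-selection variables $x_{i,v}$ is the wrong object to randomize: the average-case distribution randomizes edges, not vertex selections, so the polynomial must be in the edge variables (as in Definition \ref{def: GLDP H}).

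Second, your step for removing the partition structure handles only the vertex-level constraint (copies missing some partition, via signed sums over subsets of partitions), which the paper also does with $2^k$ oracle calls in Lemma \ref{lem:WCtoAC}. But the average-case oracle counts \emph{unlabeled} copies of $H$ in an Erd\H{o}s--R\'enyi hypergraph, so after you add random hyperedges to make the $H$-partite instance look Erd\H{o}s--R\'enyi, the oracle also counts copies that realize $H$ with a permuted assignment of $H$'s vertices to partitions and that use the newly added hyperedges. Disentangling the count of copies that use only the original edges of $G$ from these spurious counts is the main technical content of the paper's reduction: it constructs $b^{2^k}$ correlated relabeled hypergraphs $S_G$ (each individually Erd\H{o}s--R\'enyi), and runs a recursion over counts of all \emph{labeled} subgraphs of $H$ (Lemmas \ref{lem: inclusion} and \ref{lem: AC to HP count}, the generalized ``inclusion-edgesclusion''). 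Your proposal has no mechanism for this labeled-versus-unlabeled correction, so as written the reduction would return an overcount that simple vertex-subset inclusion--exclusion cannot fix.
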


Via color-coding this means that if detecting a hypergraph $H$ in the worst case is $T(n)$ hard then counting that hypergraph in the uniform average-case is hard. So, starting from the hardness of either detection or the counting problem in a $k$-partite graph implies the hardness of the counting problem in the uniform average-case. 

These approaches allow us to show hardness for database problems and the problem of counting subhypergraphs. We demonstrate the usefulness of the improved approach by showing its applications to these two problem areas.

\subsection{Database Results}
We now provide a high-level introduction to the databases setting we are interested in. For full definitions of these problems, see Section \ref{sec:database}. 
A table in a database is called a relation. 
A relation, $R_i$, is a set of tuples. 
For example, a relation on $(user\_id, user\_name)$ would be a set of tuples grouping user ids and user names. 
A database may have many relations, and the relations may have tuples of different sizes (e.g. the example database could also include a table of $(user\_id, purchase\_id, purchase\_date)$). 

Queries over a database can take many forms, but a classic form is a join query. 
To be concrete, if we have a table with two relations $R_1(a,b)$ and $R_2(a,c,d)$ then we can have a query $Q_1(a,b,c,d) \leftarrow R_1(a,b), R_2(a,c,d)$. 
If we ask to enumerate all answers to $Q_1$ then we want all tuples $(a',b',c',d')$ such that $(a',b') \in R_1$ and $(a',c',d')\in R_2$. 
Some query types (e.g. `FIRST' in SQL) ask to give a single answer. 
In this paper we will focus on queries that ask to count the number of matching outputs in the query (e.g. `COUNT' in SQL). 
In database theory, enumeration queries are the most studied.
While these queries are often trivial in the average-case, non-enumerating query types are still often hard-on-average.
See Appendix \ref{subsec:whyEnumIsEasy} for more discussion of enumeration and counting on average. 

You might notice that relations look like lists of hyperedges and queries look like requests to enumerate, detect, or count subhypergraphs of the implied hypergraph. 
However, these database ``hyperedges'' are directed. 
The tuple $('Jane', 'Doe')$ and $('Doe', 'Jane')$ have different meanings, whereas in an undirected hypergraph the edges are simply sets. 
However, these problems are similar enough that we can apply analogous worst-case to average-case reduction techniques to show hardness on average for these database problems. 

We now work through an example to help solidify the similarity between databases and hypergraphs.
The hypergraph from Figure \ref{fig:exampleOfSmallGraph} could be represented as a database by the following list of relations: 
$$R_1(A,B), R_2(B,C), R_3(B, D), R_4(C, D), R_5(A,B,C), R_6(B,C,D,E).$$
Then, to count the number of appearances of the hypergraph from Figure \ref{fig:exampleOfSmallGraph} within the hypergraph representing the entire database, we could make the following query:
$$Q_1(A,B,C,D,E) \leftarrow R_1(A,B), R_2(B,C), R_3(B, D), R_4(C, D), R_5(A,B,C), R_6(B,C,D,E).$$ 

A natural question for databases is this: how hard are queries on uniformly random databases? 
If our real world case involves uniformly distributed data, when can we hope to find algorithmic improvements?
We answer this question by giving lower bounds for counting queries in Section \ref{sec:database} and explaining why \emph{small} enumeration queries will be easy in Appendix \ref{sec:appendix_discussion}. 

\subsection{Context and Prior Works}

There has been a lot of work on average-case fine-grained complexity. Ball et al. 
\cite{BallWorstToAvg} showed that starting from popular fine-grained hypotheses, evaluating certain functions could be shown to be hard on average. Later Ball et al. \cite{BallRSV18} showed that you can build a fine-grained proof of work using these functions. 
\cite{UniformCliqueABB} showed that there is an equivalence, up to polylog factors, between counting constant sized hypercliques in the worst-case and in the average-case of  Erd{\H{o}}s-R{\'{e}}nyi Hypergraphs. 
Goldreich presented an alternative method to count cliques mod 2 \cite{Goldreich20}. 
Dalirrooyfard et al. \cite{factoredProblems} generalized these ideas and presented a method to produce a worst-case to average-case reduction for any problem which can be represented by a `good' low degree polynomial.
Additionally they showed that counting small subgraphs (not just cliques) is equivalently hard in the worst-case and in Erd{\H{o}}s-R{\'{e}}nyi graphs. 
In this paper we further generalize this result to the hypergraph setting.
Specifically, we show that counting small sub\textbf{hyper}graphs is equivalently hard in the worst-case and average-case. 
We also give worst-case to average-case hardness for many classes of \textbf{counting database queries}. 
Our contributions include expanding the previous results and connecting fine-grained average-case complexity to database theory.

Recently, the database theory community has had increased interest in the fine-grained hardness of various database queries. 
This is highlighted by many recent papers in the area. 
Carmeli and Kröll \cite{kroll2021} use fine-grained hypotheses to show that answering unions of conjunctive queries are hard.
Carmeli and Segoufin \cite{selfJoinsHard2023} explore the fine-grained complexity of enumeration queries with self joins. 
Bringman et. al. \cite{tightFineGrainedCarmeli} characterize the amount of preprocessing time needed to get polylogarithmic access time. 
We will show that for counting queries over a constant number of relations that don't involve self-joins, the worst-case and average-case hardness are equivalent up to polylogarithmic factors.

Carmeli et al \cite{CarmeliTGKR23} eschew enumeration to explore  logarithmic time access to the $k^{th}$ answer to a conjunctive query after a quasi-linear database preprocessing. 
They provide fast algorithms for these problems.
Recent work from Wang, Willsey, and Suciu \cite{WangWS23} proposed a \emph{free join} which achieves worst-case optimality. 
These are algorithms where the runtime matches the worst-case output size. 
As we explore counting queries where the output size is a single word, $O(\lg(n))$ bits, worst-case optimality is impossible as long as the full input must be read to answer the query. 
We thus focus on efficiency more generally. 
A 2023 Simons program on Logic and Algorithms in Database Theory and AI ran a workshop on Fine-Grained Complexity, Logic, and Query Evaluation\cite{simons_fine_grained_complexity_2023}. 
In this workshop, the open problem of the average-case hardness of database queries was brought up in the open problem session on September 27th. 
In this paper we have explored and proved this average-case hardness for many kinds of counting queries in databases. 
In this paper we seek to give a new definition of an average-case for database theory and provide a worst-case to average-case reduction using fine-grained complexity techniques. 

Hypercycles (``tight hypercycles'' in much of the literature) have been well-studied in both math and computer science. 
Allen et. al. \cite{allen2015tight} showed that for $k$-hypercycles in a $u$-uniform graph where $k \equiv 0 \mod u$ must exist if there are at least $(k/n + \delta)\binom{n}{u}$ hyperedges for some constant $\delta>0$ . 
It is unclear if when $\delta = \Theta(1/n)$ you can still guarantee a hypercycle exists. In this paper we show hardness for $k < 2u$, where these results do not apply. 
Huang and Ma \cite{tightCyclesHypergraphsExistance} continued the exploration of this existential hypercycle question, showing that, in some sense, the previous result is tight up to constants. 
Later, Allen et al. \cite{AllenKPP18} gave a faster algorithm for tight Hamiltonian hypercycles in random graphs, that is in average-case graphs. 
The results in our paper rely on $k$ being small, for larger $k$ the worst-case to average-case reduction becomes increasingly expensive. 
So their work and ours shows an interesting dynamic where short cycles have more similar hardness in the worst-case and average-case and there is a divergence as the cycle length grows.
Lincoln, Vassilevska and Williams \cite{LVW18} give a reduction from max-$k$-SAT to hypercycle and then to cycle. 
However, their results for cycle are not tight. 
They give a bound of $m^{3/2-o(1)}$ for $k$-cycle. They give a tighter bound for $7$-cycle of $m^{7/5-o(1)}$ in graphs with $m = n^{5/4}$. 
In theorem C.1 they show a lower bound for hypercycle. 
However, this is \textit{dense} hypercycle. 
In our paper we present\textit{ tight} bounds for sparse hypercycle. 
We also present a new faster algorithm for unweighted hypercycle. 
Our weighted lower bounds are tight. 
Our unweighted lower bounds are tight until the regime where matrix multiplication is used in the fastest algorithm. 

\subsection{Paper Structure}

We present the basic background and definitions in Section \ref{sec:prelims}. 
We show tight lower bounds for hypercycles in Section \ref{sec:lb_unweighted}. 
We give fast algorithms for hypercycles in Section \ref{sec:ub_unweighted}. 
We give the tight upper and lower bounds for minimum hypercycle in Section \ref{sec:tight_weighted}. 
We give the worst-case to average-case reduction for counting subhypergraphs in Section \ref{sec:wc_to_ac}. 
We show the average-case hardness for (self-join-free) counting queries in Section \ref{sec:database}. 
We present open problems in Section \ref{sec:conc_and_open_prob}. 
Finally, we include extra discussion in Appendix \ref{sec:appendix_discussion}.

\section{Preliminaries}
\label{sec:prelims}
In this paper we combine ideas and techniques from average-case fine-grained complexity, databases, and worst-case fine-grained complexity and algorithms. As a result we will mostly define the needed terms in the corresponding sections. Our preliminaries are short and focus on hypergraphs as these appear in most sections.

\begin{definition}
    A $u$-uniform hypergraph $G$ has a vertex set $V$ and a set of hyperedges $E$ where each hyperedge is a set of $u$ vertices from $V$. 

    A hypergraph of mixed sizes with hyperedge set sizes of $u_1, \ldots, u_k$ has a vertex set $V$ and a set of hyper edges $E$ where each hyperedge is a set of vertices from $V$ and the size of that set must be $u_1, \ldots,$ or $ u_k$.
\end{definition}

\begin{definition}
    A tight $k$-hypercycle in a $u$-uniform hypergraph $G$ is a list of $k$ nodes $v_1, \ldots, v_k \in V$ where every hyperedge
    $$(v_i, v_{i+1 \mod k}\ldots, v_{i+u-1 \mod k})$$
    exists in $E$. That is, every hyperedge of $u$ adjacent vertices on the cycle exists.
\end{definition}


\begin{definition}
    A $k$-hypercycle in a $u$-uniform hypergraph $G$ is a list of $k$ nodes $v_1, \ldots, v_k \in V$ where every hyperedge $(v_{i_1},\ldots, v_{i_u})$ where $i_j \in [1,k]$ exists $E$. That is, every possible hyperedge between the $k$ nodes exists in the graph. 
\end{definition}

\begin{figure}[ht]
\centering
\begin{tikzpicture}[scale=1.0, every node/.style={circle,draw,inner sep=1pt}]

\node[draw=none] at (0.5,3.7) {\Large $H$};

\node (t1) at (0,2) {};
\node (t2) at (1,2) {};
\node (t3) at (0,1) {};
\draw (t1)--(t2)--(t3)--cycle;

\node (b1) at (0,0) {};
\node (b2) at (1,0) {};
\node (b3) at (1,-1) {};
\node (b4) at (0,-1) {};
\draw (b1)--(b2)--(b3)--(b4)--cycle;

\node[draw=none] at (5,3.7) {\Large Example of an $H$-partite Graph};

\begin{scope}[xshift=4.5cm, yshift=1.3cm]
  \foreach \i in {1,...,4} {
    \node (A\i) at (-1,1.6-0.5*\i) {};
  }
  \foreach \i in {1,...,4} {
    \node (B\i) at (0,1.6-0.5*\i) {};
  }
  \foreach \i in {1,...,4} {
    \node (C\i) at (1,1.6-0.5*\i) {};
  }

  \foreach \i in {1,...,4} {
    \foreach \j in {1,...,4} {
      \draw (A\i)--(B\j);
      \draw (B\i)--(C\j);
      \draw (A\i)--(C\j);
    }
  }

  \draw[dotted] (-1,0.8) ellipse (0.4cm and 1.2cm);
  \draw[dotted] ( 0,0.8) ellipse (0.4cm and 1.2cm);
  \draw[dotted] ( 1,0.8) ellipse (0.4cm and 1.2cm);
\end{scope}

\begin{scope}[xshift=4.5cm, yshift=-1.8cm]
  \foreach \i in {1,...,4} {
    \node (P\i) at (-1.5,1.6-0.5*\i) {};
  }
  \foreach \i in {1,...,4} {
    \node (Q\i) at (-0.5,1.6-0.5*\i) {};
  }
  \foreach \i in {1,...,4} {
    \node (R\i) at (0.5,1.6-0.5*\i) {};
  }
  \foreach \i in {1,...,4} {
    \node (S\i) at (1.5,1.6-0.5*\i) {};
  }

  \foreach \i in {1,...,4} {
    \foreach \j in {1,...,4} {
      \draw (P\i)--(Q\j);
      \draw (P\i)--(R\j);
      \draw (P\i)--(S\j);
      \draw (Q\i)--(R\j);
      \draw (Q\i)--(S\j);
      \draw (R\i)--(S\j);
    }
  }

  \draw[dotted] (-1.5,0.8) ellipse (0.4cm and 1.2cm);
  \draw[dotted] (-0.5,0.8) ellipse (0.4cm and 1.2cm);
  \draw[dotted] ( 0.5,0.8) ellipse (0.4cm and 1.2cm);
  \draw[dotted] ( 1.5,0.8) ellipse (0.4cm and 1.2cm);
\end{scope}

\end{tikzpicture}

\caption{An example of two small graphs and corresponding $H$-partite graphs. }
\label{fig:Hpartite}
\end{figure}
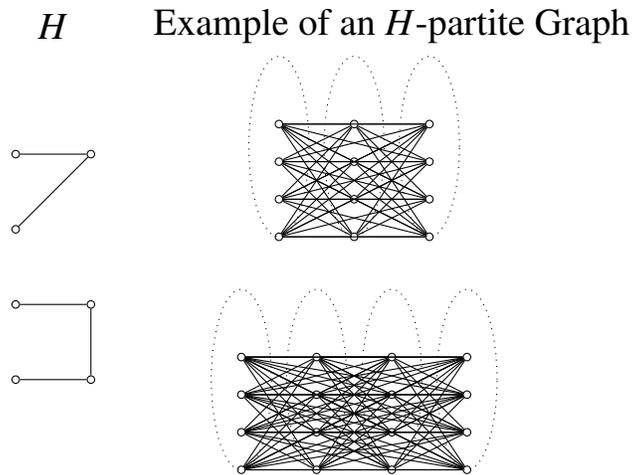

In this paper we use the term $k$-hypercycle to refer to a tight $k$-hypercycle. We use these concepts throughout the paper. We also use a more general notion of a subgraph of a hypergraph. 

\begin{definition}
    A subhypergraph or subgraph of a hypergraph $G$ (the hypergraph could have mixed uniformity or not) is a graph $H$ whose vertex set and edge set is a subset of $G$'s vertex and edge set. That is, if $H$ has a vertex set $V_H$ and edge set $E_H$ then $H$ is a subgraph of $G$ if $V_H \subset V$ and $E_H \subset E$.
\end{definition}

\begin{theorem}[Theorem 3.1 from \cite{LVW18}] \label{thm:hclique_to_hcycle}

    Let $G$ be a $u$-uniform hypergraph on $n$ vertices $V$, partitioned into $k$ parts $V_1,\ldots,V_k$. 
    
    Let $\gamma_u(k)=k-\lceil k/u\rceil +1$.
    
    In $O(n^{\gamma_u(k)})$ time we can create a $\gamma_u(k)$-uniform hypergraph $G'$ on the same node set $V$ as $G$, so that $G'$ contains an $k$-hypercycle if and only if $G$ contains an $k$-hyperclique with one node from each $V_i$.
    
    If $G$ has weights on its hyperedges in the range $[-W,W]$, then one can also assign weights to the hyperedges of $G'$ so that a minimum weight $k$-hypercycle in $G'$ corresponds to a minimum weight $k$-hyperclique in $G$ and every edge in the hyperclique has weight between $[-\binom{\gamma_u(k)}{u}W,\binom{\gamma_u(k)}{u}W]$. Notably, $\binom{\gamma_u(k)}{u}\leq O(k^u)$.
\end{theorem}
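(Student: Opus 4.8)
The plan is to give an explicit gadget reduction from the $k$-partite $u$-uniform hyperclique problem to the $\gamma_u(k)$-uniform hypercycle problem, whose correctness rests on a single combinatorial fact about consecutive windows on a cycle: \emph{on a cycle of $k$ positions, every set of $u$ positions is contained in some arc of $w := \gamma_u(k) = k-\lceil k/u\rceil+1$ consecutive positions.} I would prove this by pigeonhole on gaps --- the $k-u$ positions outside a fixed $u$-set form at most $u$ cyclic runs, so one run has length at least $\lceil (k-u)/u\rceil = \lceil k/u\rceil-1 = k-w$, and the complementary arc then has length at most $w$ and contains the $u$-set. This is also exactly why no smaller uniformity would suffice.

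Next I would build $G'$ on the same vertex set $V$. Declare a $w$-element set $S\subseteq V$ to be a hyperedge of $G'$ precisely when (i) the vertices of $S$ lie one per part in $w$ cyclically-consecutive parts $V_i,V_{i+1},\dots,V_{i+w-1}$ and (ii) every $u$-subset of $S$ is a hyperedge of $G$. Constructing $G'$ just means enumerating the $\binom{n}{w}=O(n^{\gamma_u(k)})$ candidate sets and checking the $O(1)$ conditions for each, which gives the claimed running time. For the forward direction, if $c_1\in V_1,\dots,c_k\in V_k$ is a hyperclique of $G$ with one node per part, then $(c_1,\dots,c_k)$ read cyclically is a $k$-hypercycle of $G'$: each length-$w$ window lies in $w$ consecutive parts and all of its $u$-subsets are $u$-subsets of the clique, hence hyperedges of $G$. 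For the backward direction, let $(v_1,\dots,v_k)$ be a $k$-hypercycle of $G'$ (its vertices distinct by definition); every window $W_i=\{v_i,\dots,v_{i+w-1}\}$ is a hyperedge of $G'$, so all of its $u$-subsets are hyperedges of $G$, and by the window lemma every $u$-subset of $\{v_1,\dots,v_k\}$ sits inside some $W_i$ and is therefore a hyperedge of $G$. Thus $\{v_1,\dots,v_k\}$ is a hyperclique, and since (we may assume, by the standard color-coding reduction) $G$ is $k$-partite with no hyperedge repeating a part, while $k\ge u\ge 2$, these $k$ distinct vertices must fall one per part.

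For the weighted statement I would reuse the same $G'$ and exploit the fact that condition (i) gives every hyperedge $S$ of $G'$ a well-defined arc $A_S$ of $w$ consecutive parts. Fix once and for all a translation-equivariant selection $\mathrm{win}(P)$ of a length-$w$ arc containing each $u$-subset $P$ of the $k$ cyclic positions (it exists by the window lemma; break ties by a fixed canonical rule), and give $S$ the weight $\sum_T w_G(T)$, summed over those $u$-subsets $T\subseteq S$ whose spanned parts $P_T$ satisfy $\mathrm{win}(P_T)=A_S$. Equivariance guarantees that under the correspondence between hypercycles of $G'$ and hypercliques of $G$, each of the $\binom{k}{u}$ clique hyperedges is charged to exactly one of the $k$ windows, so the weight of the hypercycle equals the weight of the hyperclique; hence minimum-weight solutions correspond in both directions. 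Since each $G'$-weight is a sum of at most $\binom{w}{u}=\binom{\gamma_u(k)}{u}$ values from $[-W,W]$, all weights lie in $[-\binom{\gamma_u(k)}{u}W,\binom{\gamma_u(k)}{u}W]$, and $\binom{\gamma_u(k)}{u}\le\binom{k}{u}=O(k^u)$.

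The main obstacle is the weighted construction: the assignment of clique hyperedges to cycle windows must be compatible with the rotational symmetry of the cycle, so that summing $G'$-weights over the $k$ windows reconstructs each clique hyperedge's weight exactly once and nothing else. This is what forces the translation-equivariant choice $\mathrm{win}(\cdot)$, together with the observation that the part-membership of a $G'$-hyperedge pins down its position on the cycle only up to a global rotation --- to which the clique weight is fortunately insensitive. The remaining delicate point, already handled above, is the backward direction's bookkeeping: confirming that a hypercycle of $G'$ genuinely witnesses a hyperclique of $G$ with one vertex in each part, which is clean once the hyperclique instance is taken to be $k$-partite.
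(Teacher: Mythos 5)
The paper does not prove this statement at all---it is imported verbatim as Theorem 3.1 of \cite{LVW18}, with only a picture and intuition offered in Section \ref{subsec:prevCliquetoCycle}---so your proposal has to be judged against the standard construction, and your unweighted half matches it and is correct: the pigeonhole ``window lemma'' ($\lceil (k-u)/u\rceil = \lceil k/u\rceil - 1 = k-\gamma_u(k)$), the definition of a $G'$-hyperedge as a one-per-part choice on $\gamma_u(k)$ consecutive parts all of whose $u$-subsets are edges of $G$, the $O(n^{\gamma_u(k)})$ construction time, and both directions of the equivalence (your ``one vertex per part'' bookkeeping is fine once useless hyperedges of $G$ with a repeated part are discarded, though calling that step ``color-coding'' is a misnomer---it is just pruning, and in fact condition (i) already forces distinct parts since any two cycle positions share a length-$\gamma_u(k)$ window).

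The weighted half, however, leans on a claim that is false as stated: a translation-equivariant selection $\mathrm{win}(P)$ of a covering arc need not exist. If $P$ has a nontrivial rotational symmetry, equivariance would force $\mathrm{win}(P)$ to be an arc invariant under that rotation, which a proper arc never is; concretely, for $k=4$, $u=2$, $w=\gamma_2(4)=3$ and $P=\{0,2\}$, the two admissible arcs $\{0,1,2\}$ and $\{2,3,0\}$ are swapped by rotation by $2$, so no equivariant tie-break exists (similarly $k=6$, $u=3$, $P=\{0,2,4\}$). Fortunately the requirement is superfluous, and dropping it repairs the argument: fix an \emph{arbitrary} function $\mathrm{win}$ from $u$-subsets of positions to covering length-$\gamma_u(k)$ arcs and define the weights exactly as you do. When $\gamma_u(k)<k$, the part-memberships of a $G'$-hyperedge determine its arc absolutely (not merely up to rotation), and for any hypercycle $v_1,\dots,v_k$ in $G'$ the $k$ index-windows are distinct sets, so (the listing being injective) their arcs are $k$ distinct arcs, i.e.\ \emph{every} length-$\gamma_u(k)$ arc is realized by exactly one window. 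Hence each $u$-subset $T$ of the clique is charged precisely in the unique window whose arc equals $\mathrm{win}(P_T)$, in both directions of the correspondence, and the cycle weight equals the clique weight; the bound $\binom{\gamma_u(k)}{u}W$ on edge weights then follows as you say. So the gap is real but local: replace ``equivariant selection exists'' (it may not) by ``any selection works, because a one-per-part hypercycle's windows biject onto the arcs.''
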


For intuition on this theorem see Appendix \ref{subsec:prevCliquetoCycle}.

\begin{definition}[From  \cite{factoredProblems}]
Let $H=(V_H,E_H)$ be a $k$-node graph with $V_H=\{x_1,\ldots,x_k\}$. 

An $H$-partite graph is a graph with $k$ partitions $V_1,\ldots,V_k$. This graph must only have edges between nodes $v_i \in V_i$ and $v_j \in V_j$ if e $(x_i,x_j)\in E_H$. (See Figure \ref{fig:Hpartite} in the appendix.)
\end{definition}








\section{Short Unweighted Hypercycles Require Brute-Force}
\label{sec:lb_unweighted}

We will begin by showing that short hypercycles are tightly hard. We will do so for hypercycles of lengths at most $\gamma_3^{-1}(u)$. Recall that $\gamma_3(k) = k - \lceil k/3 \rceil +1$, so $\gamma_3^{-1}(u) \approx 3u/2$.  On an intuitive level $\gamma_3^{-1}(u)$ corresponds to the largest length of cycle such that all sets of three nodes are included in at least one hyperedge. In this section we will show that given this constraint, the best algorithm for $k$-hypercycle is the naive brute force $n^k$ algorithm. In the next section we will show that if the cycle length is even one larger than $\gamma_3^{-1}(u)$, then we can beat brute force using fast matrix multiplication. In this sense our brute force lower-bound is tight, we couldn't extend the brute force lower bound to cycles of any longer length.

Our lower bounds are based on the min-weight $k$-clique and $(u,k)$-hyperclique hypotheses, which states that these problems require $O(n^{k-o(1))})$ time (see Definition \ref{def:hypercliqueHypothesis}). 
We use a reduction from \cite{LVW18} to show hardness for our hypercycle problems; to make this work self-contained, we now give the necessary intuition.

\subsection{Reduction from k-clique}
\label{subsec:prevCliquetoCycle}

Our lower bounds are based on the min-weight $k$-clique and $(u,k)$-hyperclique hypotheses, which states that these problems require $O(n^{k-o(1))}$ time (see Definition \ref{def:hypercliqueHypothesis}).
We will use a reduction from \cite{LVW18} to show hardness for our hypercycle problems.

\begin{theorem}[Theorem 3.1 from \cite{LVW18}] 

    Let $G$ be a $u$-uniform hypergraph on $n$ vertices $V$, partitioned into $k$ parts $V_1,\ldots,V_k$. 
    
    Let $\gamma_u(k)=k-\lceil k/u\rceil +1$.
    
    In $O(n^{\gamma_u(k)})$ time we can create a $\gamma_u(k)$-uniform hypergraph $G'$ on the same node set $V$ as $G$, so that $G'$ contains an $k$-hypercycle if and only if $G$ contains an $k$-hyperclique with one node from each $V_i$.
    
    If $G$ has weights on its hyperedges in the range $[-W,W]$, then one can also assign weights to the hyperedges of $G'$ so that a minimum weight $k$-hypercycle in $G'$ corresponds to a minimum weight $k$-hyperclique in $G$ and every edge in the hyperclique has weight between $[-\binom{\gamma_u(k)}{u}W,\binom{\gamma_u(k)}{u}W]$. Notably, $\binom{\gamma_u(k)}{u}\leq O(k^u)$.
\end{theorem}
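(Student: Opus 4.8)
The plan is to realize the $\binom{k}{u}$ hyperclique constraints on $k$ vertices as the $k$ ``consecutive‑window'' constraints of a tight hypercycle, and the whole reduction turns on one elementary combinatorial fact about cyclic intervals, which I expect to be the crux. Put $w := \gamma_u(k) = k-\lceil k/u\rceil+1$ and identify the $k$ cycle positions with $\mathbb{Z}/k\mathbb{Z}$ (we may assume $k>u$, since $k\le u$ forces $w=k$ and the statement is immediate). The covering lemma I would prove first is: every $u$-element set $S\subseteq\mathbb{Z}/k\mathbb{Z}$ lies inside some window $\{i,i+1,\dots,i+w-1\}$ of $w$ cyclically consecutive positions. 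By complementation this says $\overline S$ (of size $k-u$) contains $k-w=\lceil k/u\rceil-1$ consecutive positions. For this, note the maximal runs of $S$ and of $\overline S$ alternate around the cycle, so there are equally many, say $r$; since $S$'s $u$ elements occupy $r$ runs we get $r\le u$, hence the longest run of $\overline S$ has length $\ge\lceil|\overline S|/r\rceil\ge\lceil(k-u)/u\rceil=\lceil k/u\rceil-1$. I would also record the cheap consequence that $w>k/2$ whenever $u\ge2$, so \emph{any two} positions already lie in a common window.

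\emph{Construction.} Assume w.l.o.g.\ $V_1,\dots,V_k$ are disjoint (as $k$ is constant this costs only a constant factor in $|V|$, so $G'$ stays on essentially the vertex set of $G$), so each vertex has a part index. Let $G'$ be $w$-uniform; for each window index $i$ and each choice of one vertex $a_j\in V_j$ for every $j$ in that window, put $\{a_j : i\le j\le i+w-1\}$ into $E(G')$ exactly when every $u$-subset $\{a_j : j\in T\}$ with $T\subseteq\{i,\dots,i+w-1\}$ is a hyperedge of $G$, and add no other hyperedges. This takes $O(k\binom{w}{u}n^{w})=O(n^{\gamma_u(k)})$ time. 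For the weighted version, fix once and for all, for every $u$-subset $T$ of positions, a window $\mathrm{win}(T)\supseteq T$ (which exists by the lemma), and give the window-$i$ hyperedge the weight $\sum_{T:\,\mathrm{win}(T)=i}\mathrm{wt}_G(\{a_j:j\in T\})$; missing $G$-edges count as weight $+\infty$. Each such sum has at most $\binom{w}{u}\le O(k^{u})$ terms in $[-W,W]$, so every hyperedge weight of $G'$ lies in $[-\binom{\gamma_u(k)}{u}W,\ \binom{\gamma_u(k)}{u}W]$.

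\emph{Correctness.} Forward direction: a rainbow $k$-hyperclique $c_1\in V_1,\dots,c_k\in V_k$ makes every $u$-subset a $G$-edge, so every window hyperedge is present and $c_1,\dots,c_k$ is a tight $k$-hypercycle of $G'$; in the weighted case its total weight telescopes over the disjoint classes $\{T:\mathrm{win}(T)=i\}$ to the clique weight. Conversely, let $w_1,\dots,w_k$ be a tight $k$-hypercycle of $G'$ with hyperedges $e_\ell=\{w_\ell,\dots,w_{\ell+w-1}\}$. Each $e_\ell\in E(G')$, so its $w$ vertices occupy $w$ distinct parts; by the ``any two positions share a window'' remark, every pair $w_a,w_b$ co-occurs in some $e_i$ and hence lies in distinct parts, so $w_1,\dots,w_k$ are pairwise distinct and hit all $k$ parts exactly once. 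By the covering lemma every $u$-subset of $\{w_1,\dots,w_k\}$ sits inside some $e_i\in E(G')$ and is therefore a $G$-hyperedge, so $\{w_1,\dots,w_k\}$ is a rainbow $k$-hyperclique of $G$. For weights, since the clique has exactly one vertex per part, the window the construction attached to $e_\ell$ is determined by $e_\ell$, and $\ell\mapsto(\text{that window})$ must be a bijection of $\mathbb{Z}/k\mathbb{Z}$ (two $e_\ell$'s at the same window would coincide as sets, impossible since $e_1,\dots,e_k$ are distinct); hence the $G'$-weights of $e_1,\dots,e_k$ again telescope to the weight of $\{w_1,\dots,w_k\}$. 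The two directions give the claimed weight-preserving equivalence.

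\emph{Main obstacle.} Essentially all the content is in the covering lemma: one must pull the \emph{exact} window size $\gamma_u(k)$ — not merely an asymptotic one — out of the run-counting, and notice that $\gamma_u(k)$ landing just above $k/2$ is precisely what makes every \emph{pair} (not just every $u$-set) get covered, which is what forces the recovered hypercycle to be rainbow. Everything else — the construction, and the telescoping in the weighted case — is routine bookkeeping once the lemma is established.
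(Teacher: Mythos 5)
Your proposal is correct, and since the paper does not prove this statement at all (it is imported verbatim as Theorem~3.1 of \cite{LVW18}, with only a brief intuition sketch about windows of $\gamma_u(k)$ consecutive partitions covering every $u$-subset), your argument is a faithful, self-contained reconstruction of exactly that reduction: the run-counting covering lemma pins down the window size $\gamma_u(k)$, window-edges encode all clique constraints, and the $\mathrm{win}(T)$ assignment makes the weights telescope to the clique weight. The extra bookkeeping you supply in the reverse direction (distinctness and the rainbow property from $\gamma_u(k)>k/2$, injectivity of the edge-to-window map for the weighted case) is precisely what the cited proof needs, so this is the same approach rather than a new one.
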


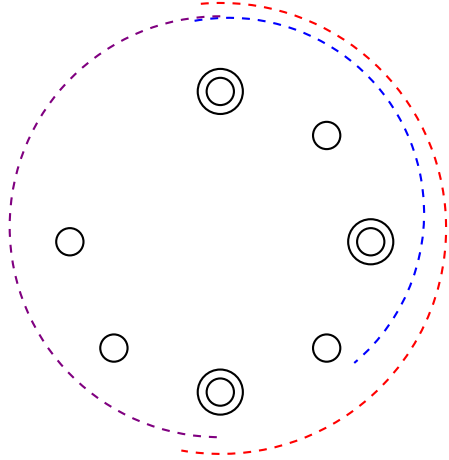
\begin{figure}[ht]
    \centering

\begin{tikzpicture}[scale=1]

\node[circle, draw, thick] (n1) at (90:2) {};
\node[circle, draw, thick] (n2) at (45:2) {};
\node[circle, draw, thick] (n3) at (0:2) {};
\node[circle, draw, thick] (n4) at (315:2) {};
\node[circle, draw, thick] (n5) at (270:2) {};
\node[circle, draw, thick] (n6) at (225:2) {};
\node[circle, draw, thick] (n7) at (180:2) {};

\draw[thick] (n1) circle (0.3);
\draw[thick] (n3) circle (0.3);
\draw[thick] (n5) circle (0.3);

\draw[dashed, thick, blue] (n1) ++(110:1) arc[start angle=100, end angle=-50, radius=2.6];

\draw[dashed, thick, violet] (n1) ++(90:1) arc[start angle=90, end angle=270, radius=2.8];

\draw[dashed, thick, red] (90:2.2) ++(105:1) arc[start angle=95, end angle=-100, radius=3];

\end{tikzpicture}
    \caption{The case of $k=7$ where we start with a $3$-uniform graph. The double circles indicate the `farthest apart' three nodes can be. Note the purple and blue arcs indicate the furthest a uniformity $4$ hyperedge can go while including the top node, and that neither hyperedge includes all three double circled nodes. Further note that the red edge, which is a hyperedge of uniformity $5$, covers all three double circled nodes. This is what $\gamma(\cdot)$ is capturing. }
    \label{fig:smallExampleGamma}
\end{figure}

To give the reader context and intuition we will explain the core ideas of this previous work. The main idea behind this reduction is to output $k$-partite hypergraphs such that the vertex partitions can be arranged in a circle and hyperedges only exist between adjacent partitions.
Then, the value $\gamma_u(k)$ corresponds to the smallest edge uniformity such that this type of edge will cover any possible subset of $u$ vertex partitions.
This allows this more restricted type of hyperedge to capture information about any possible hyperedge in the original $u$-uniform hypergraph. See Figure \ref{fig:smallExampleGamma} for a small illustrative example.

While this reduction shows a relationship between the hardness of hypercycle detection and hypercliques, it is important to observe that it requires a very specific uniformity $\gamma_u(k)$.
A natural question to ask is: what happens for other combinations of $u$ and $k$?

We show that the hardness varies based on the relationship between uniformity $u$ and cycle length $k$.
Note that for fixed $u$ and sufficiently large constant $k$, algorithms for finding or counting hypercycles yield runtimes independent of $k$.
This can be seen as a ``plateau'' in the hardness of hypercycle detection for any fixed $u$ seen in Figure \ref{fig:unweightedHypercycleBounds} and Section \ref{sec:ub_unweighted}.
For all $k$ where brute force is the best known algorithm, we give a matching lowerbound.


\subsection{Tight Hardness for Short Hypercycles}
\label{subsec:tighthardnessshorthypercycles}

We begin by showing that for finding relatively short hypercycles, brute-force algorithms are optimal unless the $(u,k)$-hyperclique hypothesis is false.
Note that the reduction from Theorem \ref{thm:hclique_to_hcycle} preserves the number of vertices in the graph $G$.
We use this second fact throughout the proof of our lower bound for this parameter regime.

Now, what exactly do we mean by ``short hypercycles''? 
To formally define the range in which we get tight results, we must introduce some notation, which allows us to reason about the uniformities we show hardness for.

\begin{definition}
    Recall the function $\gamma_u(k)=k-\lceil \frac{k}{u} \rceil+1$ from Theorem \ref{thm:hclique_to_hcycle}.
    Then, for constant $c$ define:
    
    \[\gamma_c^{-1}(u) = \max \{ k : \gamma_c(k) = u\}\]
\end{definition}
This function will make it easier to discuss the hypercycle lengths for which the hyperclique reduction yields hardness.
These correspond to the range $k \in [u, \gamma_3^{-1}(u)]$, which we will sometimes refer to as \textit{short hypercycles}.
In particular, we show that improving over brute-force search for short hypercycles would yield faster algorithms for finding hypercliques.
More formally we will prove that:

\tightShortHypercycle*

This conditional lower bound follows from the following ideas, which we prove as intermediate lemmas.
First, the function $\gamma_3$ is monotonically increasing, so applying the hyperclique reduction for $k\leq \gamma_3^{-1}(u)$ yields a uniformity $u' \leq u$.
Furthermore, we show a self-reducibility property of the hypercycle problem, which is that algorithms solving $k$-hypercycle on a given uniformity can be used to solve it on smaller uniformities.
Putting these ideas together, we are able to show that any hardness given by the hyperclique reduction on uniformities $u'<u$ can be extended to $u$.

\subsubsection{Understanding the Hyperclique to Hypercycle Reduction}

We begin by showing monotonicity of $\gamma$.

\begin{lemma} \label{lem:gamma monotone}
    For all $c\geq2$, the function $\gamma_c(k)$ is monotonically increasing.
\end{lemma}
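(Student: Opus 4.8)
The plan is to verify that $\gamma_c(k) = k - \lceil k/c \rceil + 1$ is nondecreasing in $k$ by looking at the single-step difference $\gamma_c(k+1) - \gamma_c(k)$ and showing it is always nonnegative. First I would write
\[
\gamma_c(k+1) - \gamma_c(k) = \bigl((k+1) - \lceil (k+1)/c \rceil + 1\bigr) - \bigl(k - \lceil k/c \rceil + 1\bigr) = 1 - \bigl(\lceil (k+1)/c \rceil - \lceil k/c \rceil\bigr).
\]
So the claim reduces to the elementary fact that the ceiling function can only jump by $0$ or $1$ when its argument increases by $1/c$ with $c \geq 2$ a positive integer; more precisely $\lceil (k+1)/c \rceil - \lceil k/c \rceil \in \{0,1\}$.

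To justify that, I would use the standard identity $\lceil x \rceil = \lfloor x \rfloor$ or $\lfloor x \rfloor + 1$, or more directly argue via the definition: since $k/c \le (k+1)/c \le k/c + 1$ (using $1/c \le 1$), we have $\lceil k/c \rceil \le \lceil (k+1)/c \rceil \le \lceil k/c + 1 \rceil = \lceil k/c \rceil + 1$. The monotonicity of the ceiling gives the lower bound and the shift identity gives the upper bound. Plugging this back in yields $\gamma_c(k+1) - \gamma_c(k) \in \{0, 1\} \ge 0$, so $\gamma_c$ is monotonically increasing (weakly; it is strictly increasing on all but a sparse set of $k$, but weak monotonicity is all that is needed for the subsequent argument about $\gamma_3^{-1}$).

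There is essentially no main obstacle here — this is a routine calculation with the ceiling function. The only place to be slightly careful is the direction of the inequality $1/c \le 1$, which needs $c \ge 2$ (indeed $c \ge 1$ suffices, but the hypothesis $c \ge 2$ comfortably covers it), and making sure the argument handles integer versus non-integer values of $k/c$ uniformly — using the shift identity $\lceil y + 1 \rceil = \lceil y \rceil + 1$ avoids any case split. If one prefers to avoid the ceiling manipulations entirely, an alternative is to note $\lceil k/c \rceil = \lfloor (k + c - 1)/c \rfloor = \lfloor (k-1)/c \rfloor + 1$ and reduce to monotonicity of $k \mapsto k - \lfloor (k-1)/c \rfloor$, which again follows from the floor increasing by at most $1$ per unit step in $k$. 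Either route is a few lines.
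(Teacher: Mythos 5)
Your proposal is correct and follows essentially the same route as the paper: both reduce the claim to the single-step inequality $\lceil (k+1)/c \rceil \leq \lceil k/c \rceil + 1$, which immediately gives $\gamma_c(k+1) \geq \gamma_c(k)$. Your write-up just spells out the justification of that ceiling inequality (via $(k+1)/c \leq k/c + 1$ and the shift identity) in slightly more detail than the paper does.
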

\begin{proof}
    We will show that for any $k$, $\gamma_c(k+1) \geq \gamma_c(k)$.
    Note that $\gamma_c(k+1)=(k+1)-\lceil \frac{k+1}{c} \rceil +1$.
    Then, since $\lceil \frac{k+1}{c} \rceil \leq \lceil \frac{k}{c} \rceil+1$, we get $\gamma_c(k+1) \geq (k+1)-(\lceil \frac{k}{c} \rceil+1)+1=\gamma_c(k)$.
\end{proof}

\begin{corollary}
For every $k$ such that $u\leq k<\gamma_3^{-1}(u):\gamma_3(k) \leq u$.
\end{corollary}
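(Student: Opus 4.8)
The plan is to derive this directly from the monotonicity of $\gamma_3$ established in Lemma~\ref{lem:gamma monotone}, together with the definition of $\gamma_3^{-1}$. First I would record that $\gamma_3^{-1}(u)$ is well defined. Since $\gamma_3(k) = k - \lceil k/3\rceil + 1 \geq 2k/3$, the set $\{k : \gamma_3(k) = u\}$ is bounded above (by roughly $3u/2$), so if it is nonempty it has a maximum. Nonemptiness follows because the consecutive differences $\gamma_3(k+1) - \gamma_3(k)$ lie in $\{0,1\}$ — the lower bound is exactly Lemma~\ref{lem:gamma monotone}, and the upper bound $\gamma_3(k+1) \leq \gamma_3(k)+1$ is just as easy since $\lceil (k+1)/3\rceil \geq \lceil k/3\rceil$ — while $\gamma_3(1) = 1$, so $\gamma_3$ attains every positive integer value on the way to infinity. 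Hence $\gamma_3^{-1}(u)$ makes sense for every $u \geq 1$.

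The remainder is a one-line argument. By the definition of $\gamma_3^{-1}(u)$ as a maximizer, $\gamma_3(\gamma_3^{-1}(u)) = u$. Now fix any $k$ with $u \leq k < \gamma_3^{-1}(u)$; in particular $k \leq \gamma_3^{-1}(u)$, so applying Lemma~\ref{lem:gamma monotone} (monotonicity of $\gamma_3$) gives $\gamma_3(k) \leq \gamma_3(\gamma_3^{-1}(u)) = u$, which is exactly the claim.

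There is essentially no obstacle here: the content is entirely carried by the monotonicity lemma already proven, and the only point needing any care — and it is minor — is verifying that $\gamma_3^{-1}(u)$ denotes a genuine quantity, i.e.\ that some $k$ realizes $\gamma_3(k) = u$ and that there is a largest such $k$. I would include that verification for completeness but expect it to take only a sentence or two. (Note the hypothesis $u \leq k$ plays no role in the inequality itself; it is carried along only because $k \in [u,\gamma_3^{-1}(u)]$ is the regime relevant to the short-hypercycle lower bound.)
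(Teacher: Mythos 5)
Your proposal is correct and follows the same route the paper intends: the corollary is stated immediately after Lemma~\ref{lem:gamma monotone} precisely because it follows from monotonicity of $\gamma_3$ together with $\gamma_3(\gamma_3^{-1}(u)) = u$, which is exactly your one-line argument. Your extra verification that $\gamma_3^{-1}(u)$ is well defined (steps of size $0$ or $1$ plus boundedness) is a reasonable, if optional, addition that the paper leaves implicit.
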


We next want to show that if finding hypercycles is hard on $u$-uniform graphs, it is also hard in graphs with uniformity $u'>u$. 
To do this, we will make use of $k$-circle-layered hypergraphs:

\begin{definition}[$k$-circle-layered hypergraphs]
    We say that a hypergraph $G$ is $k$-circle-layered if it is $k$-partite and its vertex partitions $V_1, \cdots, V_k$ can be arranged in a circle such that hyperedges only exist between adjacent vertex partitions. That is between $u$ partitions $V_i, V_{i+1 \mod k},...,V_{i+u-1 \mod k}$. 
\end{definition}

While this seems like a highly restrictive definition, \cite[Lemma~2.2]{LVW18} shows that a time-$O(T(n,m))$ algorithm for finding $k$-hypercycles in this type of hypergraph can be used to find $k$-hypercycles in arbitrary hypergraphs using time $\tilde{O}(k^k(n+m+T(n,m)))$.
Thus, for practical purposes we will treat the problems as equivalent.
Moreover, when indexing into a vertex $v_i$ in a $k$-hypercycle or a partition $V_i$ in a $k$-circle-layered graph, the index $i$ should be interpreted as shorthand for $(i \mod{k})$.


Now, we need one more property of $k$-circle-layered hypergraphs before we prove our self-reducibility result.
The structure of these graphs are useful for reasoning about hypercycles when we can ensure that any hypercycle in the graph uses exactly one vertex from each partition.
When this is true, we can think of the hypercycle as ``going around'' the circular structure of the hypergraph.
Nonetheless, there are some scenarios in which more than one vertex from some partition may appear in the hypercycle.
We can think of these as ``backward cycles'', since they must turn around at some point to re-visit the repeated partition.

We will now show that such hypercycles cannot exist when $k \not\equiv 0 \mod{u}$.

\begin{lemma}\label{lem:no_short_backward_cycles}
Suppose there exists a $k$-hypercycle $v_0,...,v_{k-1}$ in a $k$-circle layered hypergraph with uniformity $u$ that does not use exactly one node from each partition. Then, we have that $k\equiv 0 \mod{u}$.
\end{lemma}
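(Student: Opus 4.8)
The plan is to exploit the fact that if the hypercycle does not use each partition exactly once, then by pigeonhole (there are $k$ distinct cycle vertices and $k$ partitions, so if no partition is empty each would contain exactly one) some partition $V_q$ contains no cycle vertex, and this collapses the \emph{circular} layout into a \emph{linear} one. Write $\pi(i)$ for the index of the partition containing $v_i$, with cycle indices read mod $k$ and partition indices in $\mathbb{Z}_k$. Relabel the partitions so that the empty one is $V_0$; then $\pi$ takes values in $\{1,\dots,k-1\}$, and since each hyperedge $e_i=(v_i,\dots,v_{i+u-1})$ occupies $u$ cyclically consecutive partitions of $\mathbb{Z}_k$, none of which is $V_0$, the set $\{\pi(i),\pi(i+1),\dots,\pi(i+u-1)\}$ is a block of $u$ \emph{genuinely} consecutive integers in $\{1,\dots,k-1\}$. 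In particular these $u$ values are distinct (this is also immediate from $k$-partiteness) and form a complete residue system modulo $u$.

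The key local step will be to show $\pi(i+u)\equiv\pi(i)\pmod u$ for every $i$. Compare the partition blocks of $e_i$ and $e_{i+1}$: the set $\{\pi(i+1),\dots,\pi(i+u)\}$ is obtained from $\{\pi(i),\dots,\pi(i+u-1)\}$ by deleting $\pi(i)$ and inserting $\pi(i+u)$, and both are complete residue systems mod $u$. Deleting one element of a complete residue system of size $u$ and re-completing it to another complete residue system of size $u$ can only be done by inserting an element in the same residue class mod $u$; hence $\pi(i+u)\equiv\pi(i)\pmod u$. (If $\pi(i+u)=\pi(i)$ the congruence is trivial; otherwise $\pi(i+u)$ is genuinely new—it cannot coincide with $\pi(i+1),\dots,\pi(i+u-1)$ since those $u$ partitions of $e_{i+1}$ are distinct—and the residue argument applies.)

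Finally I would globalize. Iterating the local step, $\pi$ is constant modulo $u$ along every orbit of the shift $i\mapsto i+u$ on $\mathbb{Z}_k$, and such an orbit is a coset of $\langle u\rangle=\langle g\rangle$ where $g=\gcd(u,k)$. Suppose for contradiction that $g<u$. Then $0$ and $g$ are two distinct indices, both lying in the window $\{0,1,\dots,u-1\}$ and both in the orbit of $0$, so $\pi(0)\equiv\pi(g)\pmod u$. But $\pi(0)$ and $\pi(g)$ both lie in the block $\{\pi(0),\dots,\pi(u-1)\}$ of $u$ consecutive integers, which has exactly one element in each residue class mod $u$, forcing $\pi(0)=\pi(g)$—contradicting that the $u$ partitions of $e_0$ are distinct. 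Hence $g=u$, i.e.\ $u\mid k$. The degenerate small cases (where the path of remaining partitions is too short to even hold one hyperedge, or forces an immediate repeated partition) are absorbed by the same argument and merely make the hypothesis vacuous there.

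The step I expect to be the crux is the first one: recognizing that the absence of a single partition turns the cyclic adjacency constraint into a path constraint, and that the right invariant to carry along the cycle is the residue class mod $u$ of $\pi$. Once this "complete residue system" viewpoint is set up, both the local congruence $\pi(i+u)\equiv\pi(i)\pmod u$ and the closing $\gcd$ argument are short, and one avoids any induction over hyperedges or case analysis on how the hyperedge's partition-window slides around the circle.
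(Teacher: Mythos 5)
Your proof is correct. It shares the paper's opening move — by pigeonhole some partition is empty of cycle vertices, which linearizes the circular layout so every hyperedge's partition window is a block of $u$ genuinely consecutive indices in $\{1,\dots,k-1\}$ — but after that the mechanics differ. The paper marks the partitions whose (relabeled) index is a multiple of $u$, argues that every hyperedge contains exactly one vertex from a marked partition, proves by induction (anchored at a vertex $v_0$ placed in $V_0$) that the cycle vertices $v_{au}$ land in marked partitions, and then derives a contradiction from the single wrap-around hyperedge when $k\not\equiv 0 \pmod u$. You instead track the residue class mod $u$ of the partition index $\pi(i)$ itself: each window is a complete residue system, sliding the window by one forces $\pi(i+u)\equiv\pi(i)\pmod u$ for \emph{every} $i\in\mathbb{Z}_k$ (wrap-around included, since all $k$ hyperedges avoid the empty partition), and the orbit/gcd argument then yields $\gcd(u,k)=u$ directly. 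Your version buys uniformity — no special anchoring of the induction, no separate treatment of the closing edge, and the conclusion $u\mid k$ falls out of the subgroup structure of $\mathbb{Z}_k$ rather than a case-specific contradiction — which also sidesteps the slight indexing sloppiness in the paper's WLOG setup; the paper's version is more elementary in language and its intermediate claim (the vertices $v_{au}$ hit exactly the marked partitions) gives a concrete picture of how a legal hypercycle threads the layers. Your closing remark about degenerate cases is a bit breezy, but harmless: if $u>k-1$ no hyperedge can avoid the empty partition, so the hypothesis is vacuous there, exactly as you say.
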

\begin{proof}
We begin with the observation that such a hypercycle must have at least one partition where it does not have a node. Locate one partition such that it has a node from the cycle, but where one of its adjacent partitions does not. WLOG, label this partition $V_0$, and the adjacent partition that does not have a vertex from the cycle $V_{k-1}$, and the vertices between them $V_1$ to $V_{k-2}$ in order. This means there is no edge in the cycle that contains both a vertex from $V_0$ and $V_{k-1}$.

This choice allows us to make another observation. For every single edge in the cycle, it must be the case that there is exactly one vertex within it that comes from a partition where the label is a multiple of $u$. This is because all such partitions in consideration for the cycle are at least a distance of $u$ away from each other, and no edge spans that distance around the circle. However, every edge must also intersect the partition somewhere, as there are only $u-1$ partitions between two partitions labeled with multiples of $u$.

This allows us to make the following claim: For all $a$ such that $au < k-1$, $v_{au}\in V_{ru}$ for some $r\in \mathbb Z$.

We prove this claim using induction. The base case is trivial. Given that the statement is true for $a$, we aim to show that it is true for $a+1$. We have that $v_{au}, v_{au + 1}..., v_{a(u+1)-1}, v_{a(u+1)}$ is a sequence in the hypergraph, and that the first $u$ vertices in the sequence form an edge, as do the last ones. This means that $v_{au + 1}..., v_{a(u+1)-1}$ cannot contain a vertex from a partition with a multiple of $u$, and thus $v_{a(u+1)}$ must come from such a partition.

This proves the claim. Now, towards a contradiction, we suppose that $k$ is not a multiple of $u$. We observe the hyperedge $v_{k-u+1}, ..., v_{k-1}, v_0$. We note that there must be some vertex in the first $u-1$ vertices that has a label that is a multiple of $u$. Call this $v_i$. We also note that this hyperedge must span the partitions $V_0,...V_{u-1}$, as no vertex is in $V_{k-1}$. This, combined with the claim, implies that $v_i$ and $v_0$ are both in $V_0$, a contradiction.
\end{proof}

\subsubsection{Increasing Uniformity Preserves Hardness}

We can now show the following self-reducibility lemma about finding and counting $k$-hypercycles:

\begin{lemma} \label{lem:uniform self reduction}
    Let $k$ be a hypercycle length such that $k \not \equiv 0 \mod{u}$.
    Suppose there exists $\varepsilon>0$ such that there is an algorithm for finding (counting) $(u,k)$-hypercycles in $k$-circle-layered graphs that runs in time $O(n^{k-\varepsilon})$.
    Then, for any uniformity $u'<u$, there exists an algorithm for finding (counting) $(u',k)$-hypercycle in $k$-circle-layered graphs running in time $O(n^{k-\varepsilon}+n^u)$.
\end{lemma}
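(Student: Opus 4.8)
The plan is to prove Lemma~\ref{lem:uniform self reduction} by a ``padding'' reduction that turns a $k$-circle-layered hypergraph $G$ of uniformity $u'$ into a $k$-circle-layered hypergraph $G'$ of uniformity $u$ on the \emph{same} vertex partitions $V_1,\dots,V_k$. The construction: for every window of $u$ consecutive partitions $V_i,\dots,V_{i+u-1}$ (indices mod $k$) and every choice of one vertex $w_j\in V_j$ from each, I put the hyperedge $\{w_i,\dots,w_{i+u-1}\}$ into $G'$ if and only if every one of the $u-u'+1$ windows of $u'$ consecutive positions inside $\{i,\dots,i+u-1\}$ spans a hyperedge of $G$. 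Building $G'$ is a loop over the $O(n^u)$ candidate hyperedges with $O(1)$ dictionary lookups each (here $u,u',k$ are constants, and $\sum_i \prod_{j=i}^{i+u-1}|V_j| \le k\cdot n^u$), which supplies the additive $n^u$ term; we then run the assumed $O(n^{k-\varepsilon})$ algorithm for $(u,k)$-hypercycle on $G'$ and output its answer, for a total of $O(n^{k-\varepsilon}+n^u)$.

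For correctness I would argue a chain of equivalences at the level of vertex lists. Fix a list $x_0,\dots,x_{k-1}$ with each $x_j$ in some partition. It is a $k$-hypercycle of $G'$ iff all $k$ of its cyclic length-$u$ windows are hyperedges of $G'$; by the construction this holds iff every length-$u'$ window sitting inside some length-$u$ window of the list spans a hyperedge of $G$; and since $u'\le u\le k$, the length-$u'$ windows sitting inside the length-$u$ windows of the list are precisely all of the list's length-$u'$ cyclic windows, so this is equivalent to $x_0,\dots,x_{k-1}$ being a $k$-hypercycle of $G$. The subtle point is that a hyperedge of the circle-layered graph $G'$ can only be present if its $u$ vertices occupy $u$ distinct consecutive partitions, so the equivalence above is literally valid only for hypercycles that use exactly one vertex per partition (``forward'' hypercycles). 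This is exactly where the hypothesis $k\not\equiv 0\bmod u$ is used: by Lemma~\ref{lem:no_short_backward_cycles} applied to the uniformity-$u$ graph $G'$, \emph{every} $k$-hypercycle of $G'$ is forward, so the assumed algorithm on $G'$ detects (respectively, counts) exactly the forward $k$-hypercycles of $G$.

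The remaining issue, which I expect to be the only real obstacle, is accounting for $k$-hypercycles of $G$ that are \emph{not} forward. Applying Lemma~\ref{lem:no_short_backward_cycles} to the uniformity-$u'$ graph $G$ this time, such hypercycles exist only if $k\equiv 0\bmod u'$; in all applications of this lemma the smaller uniformity satisfies $k/2<u'<k$ (for instance $u'=\gamma_3(k)$, which lies strictly between $k/2$ and $k$ once $k\ge 4$), so $k\not\equiv 0\bmod u'$, there are none, and the reduction is an exact equality of counts and a faithful detector. To make the lemma airtight in the general case $k\equiv 0\bmod u'$ one would additionally enumerate the finitely many ``folding patterns'' of a non-forward $k$-hypercycle in $G$ — using the structure exposed in the proof of Lemma~\ref{lem:no_short_backward_cycles}, namely that the positions that are multiples of $u'$ are pinned to partitions that are multiples of $u'$ — and count completions of each pattern; the construction and the forward equivalence are routine, but it is cleanest to present the reduction in the regime where this extra divisibility rules non-forward cycles out entirely.
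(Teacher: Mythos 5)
Your reduction is essentially the paper's: you pad $u'$-uniform edges into $u$-uniform edges on the same partitions (the paper adds every possible extension of each $u'$-edge into the next $u-u'$ partitions rather than demanding all $u'$-subwindows be present, but the two constructions are interchangeable), prove the one-vertex-per-partition equivalence, invoke Lemma~\ref{lem:no_short_backward_cycles} on $G'$ via $k \not\equiv 0 \bmod{u}$, and account for the $O(n^{k-\varepsilon}+n^{u})$ runtime the same way. Your additional observation—that non-forward hypercycles of the $u'$-uniform input could escape the reduction when $k \equiv 0 \bmod{u'}$—is a legitimate subtlety the paper's proof passes over silently, and it is harmless in the paper's application exactly as you note, since $u'=\gamma_3(k)$ lies strictly between $k/2$ and $k$.
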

\begin{proof}
    We will show how to map an $n$-node $u'$-uniform hypergraph $G=(V,E)$ to an $n$-node $u$-uniform hypergraph $G'$ such that $G'$ has a $k$-hypercycle if and only $G$ has one.

    We construct $G'$ as follows.
    We let its vertex set $V'=V$, and we will add edges so that $G'$ is also $k$-circle-layered.
    Then, we construct hyperedges in $G'$ as follows.

    For each hyperedge $(v_i,v_{i+1}, \cdots , v_{i+u'-1})$, we create $n^{u-u'}$ hyperedges by creating all possible ``extensions'' of the hyperedge to the next $u-u'$ partitions.
    More concretely, we know that the vertices of this hyperedge satisfy $v_i \in V_i, v_{i+1} \in V_{i+1}, \cdots, v_{i+u'-1} \in V_{i+u'-1}$.
    Then, for all possible combinations of vertices $y_{i+u'} \in V_{i+u'}, \cdots, y_{i+u'-1} \in V_{i+u-1}$ we will add the hyperedge $(v_i,\cdots,v_{i+u'-1},y_{i+u'},\cdots, y_{i+u-1})$ to $E$.

    This mapping results in a hypergraph $G'$ such that $|V'|=|V|=n$ and $|E'|=|E|^{u-u'}$.
    This blowup in the number of edges will not be a problem since the algorithm we are assuming for $(u,k)$-hypercycle is agnostic to sparsity.

    Now we must show that $G'$ preserves $k$-hypercycles from $G$ and does not create any new ones.
    First, assume there exists a hypercycle $v_1, v_2, \cdots, v_k$ in $G$.
    Then, for all consecutive sets of $u'$ vertices, we have that the hyperedge $(v_{i},v_{i+1},\cdots, v_{i+u'-1}) \in E$.
    Since we added all possible extensions of this hyperedge to $E'$, this necessarily implies that for the specific extension corresponding to the next $u-u'$ vertices of the hypercycle, the hyperedge $(v_i,v_{i+1},\cdots,v_{i+u'-1},v_{i+u'}, \cdots, v_{i+u-1}) \in E$.
    By applying this reasoning to all of the edges in the hypercycle, we get that $v_i, \cdots, v_k$ also form a $k$-hypercycle in $G'$.

    Now we show why $G'$ does not contain any hypercycles that cannot be traced back to a hypercycle in $G$.
    The key idea for this will be the fact that for every hyperedge in $G'$, the first $u'$ nodes in the hyperedge form a hyperedge in $G$.
    
    Assume there exists a hypercycle $v_1, \cdots, v_k$ in $G'$.
    We want to show that these vertices also form a hypercycle in $G$, which means we want to show that for each consecutive set of $u'$ vertices we have $(v_i,\cdots, v_{i+u'-1}) \in E$.
    Now, we know that for any set of consecutive $u'$ vertices in the hypercycle, there is a hyperedge in $G'$ of the form $(v_i,\cdots, v_{i+u'-1},v_{i+u'},\cdots, v_{i+u-1})$.
    This is due to the fact that, by Lemma \ref{lem:no_short_backward_cycles}, any $k$-hypercycle in this parameter regime must use exactly one vertex from each partition.
    Then, if this edge was added to $E'$, this means the first $u'$ vertices must form a hyperedge in $G$.
    Consequently, $v_1, \cdots, v_k$ form a $k$-hypercycle in $G$.

    Now that our mapping is complete, we can specify how an algorithm $A$ solving $(u,k)$-hypercycle in $n^{k-\varepsilon}$ time can be used to solve $(u',k)$-hypercycle in the same runtime.
    Given an $n$-node, $u'$-uniform hypergraph $G$, we can compute a $u$-uniform hypergraph $G'$ following our reduction.
    This requires iterating through all hyperedges in $G$ then creating all $n^{u-u'}$ possible extensions and there can be up to $O(n^{u'})$ hyperedges, so the reduction takes $O(n^u)$ time.
    Then, since $G'$ also has $n$ nodes, we can run $A$ on $G'$ to determine whether there is a $k$-hypercycle in $O(n^{k-\varepsilon})$ time.

    The total runtime of this algorithm is $O(n^{k-\varepsilon}+n^u)$, which is what we wanted to show.
\end{proof}

\subsubsection{Getting the Tight Lower Bound}

We finally prove that short hypercycles require brute force:

\tightShortHypercycle*
\begin{proof}
    We want to show that for all $k \in [u, \gamma_3^{-1}(u)]$, finding a $(u,k)$-hypercycle requires $n^{k-o(1)}$ time.
    We will do this by applying Theorem \ref{thm:hclique_to_hcycle} to go from hardness of $(3,k)$-hyperclique to hardness of $(\gamma_3(k),k)$-hypercycle, then use monotonicity of $\gamma_3$ and the self-reducibility lemma to get the desired hardness for $(u,k)$-hypercycle.

    First, observe that the reduction from $(3,k)$-hyperclique to $(\gamma_3(k),k)$-hypercycle preserves the number of vertices in the hypergraph, and this reduction can be computed in time $O(n^{\gamma_3(k)})$.
    Moreover, this reduction outputs a $k$-circle-layered hypergraph.
    Consequently, an algorithm solving $(\gamma_3(k),k)$-hypercycle in $n^{k-\varepsilon}$ time would solve $3,k$-hyperclique in the same runtime.
    This would violate the $(3,k)$-hyperclique conjecture, so we get a conditional lower bound for $(\gamma_3(k),k)$-hypercycle.
    Now, since $k \leq \gamma_3^{-1}(u)$, Lemma \ref{lem:gamma monotone} tells us that $\gamma_3(k) \leq u$.
    In the case these are equal, we already have the desired lower bound.
    
    If the inequality is strict, we know from Lemma \ref{lem:uniform self reduction} that an $O(n^{k-\varepsilon})$ algorithm for $(u,k)$-hypercycle would imply an algorithm with the same runtime for $(\gamma_3(k),k)$-hypercycle since $k\leq \gamma_3^{-1}(u)<2u$ so it is not a multiple of $u$.
    This violates our conditional lower bound, so we can conclude that under the $(3,k)$-hyperclique assumption, finding a $(u,k)$-hypercycle requires $n^{k-o(1)}$ time.
\end{proof}

From out worst-case to average-case reduction in Lemma \ref{lem:WCtoAC}, we can conclude that the average-case version of counting $(u,k)$-hypercycles also requires brute-force:

\begin{corollary}
    If the $(3,k)$-hyperclique hypothesis holds then counting $k$-hypercycles in $u$-uniform Erd{\H{o}}s-R{\'{e}}nyi hypergraphs when $k \in [u, \gamma_3^{-1}(u)]$ requires $\tilde{O}(n^{k-o(1)})$ time for constant $u$.
\end{corollary}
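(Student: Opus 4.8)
The plan is to obtain this as a direct composition of two facts we already have: the worst-case lower bound of Theorem~\ref{thm:tight short hypercycle} and the worst-case-to-average-case reduction of Lemma~\ref{lem:WCtoAC}. First I would observe that the proof of Theorem~\ref{thm:tight short hypercycle} actually establishes hardness of \emph{counting} $(u,k)$-hypercycles in $k$-circle-layered hypergraphs (the hyperclique-to-hypercycle reduction and the self-reducibility lemma both preserve counts, and both stay within $k$-circle-layered graphs). A $k$-circle-layered hypergraph is in particular $H$-partite for $H$ the $k$-node tight hypercycle of uniformity $u$, with the convention that we only count copies placing one vertex in each part. Hence, under the $(3,k)$-hyperclique hypothesis, counting copies of this $H$ in $H$-partite host hypergraphs on $n$ vertices requires $n^{k-o(1)}$ time for every $k\in[u,\gamma_3^{-1}(u)]$.

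Next I would invoke Lemma~\ref{lem:WCtoAC} (the formal version of Theorem~\ref{thm: informal wc to ac}), which reduces counting $H$ in $k$-partite host hypergraphs to counting $H$ in uniformly random \erdosRen hypergraphs of the appropriate uniformity, with only $\polylog(n)$ oracle calls and $\polylog(n)$ overhead. Chaining the two: if there were an $O(n^{k-\varepsilon})$-time algorithm for counting $k$-hypercycles in $u$-uniform \erdosRen hypergraphs for some $\varepsilon>0$, then we could count $H$ in $H$-partite hosts in $\tilde O(n^{k-\varepsilon})$ time, contradicting the worst-case bound of the previous paragraph. This yields the claimed $\tilde O(n^{k-o(1)})$ average-case lower bound.

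The only point requiring care is the bookkeeping of the reduction's overhead. The blowup in Lemma~\ref{lem:WCtoAC} depends on $k$, on $u$, and on the number of hyperedges of $H$; in the range $k\in[u,\gamma_3^{-1}(u)]$ we have $\gamma_3^{-1}(u)<2u$, so $k$ and $u$ — hence also $|E_H|=k$ — are bounded by functions of $u$ alone. With $u$ constant these are all constants, so the entire overhead is $\polylog(n)$ and is absorbed into the $\tilde O(\cdot)$ notation. I expect this constant-parameter accounting (together with checking that the worst-case hardness really comes in the ``$H$-partite counting'' form required as input to Lemma~\ref{lem:WCtoAC}, which it does because Theorem~\ref{thm:tight short hypercycle} is already stated for counting in circle-layered, hence $H$-partite, graphs) to be the main — though mild — obstacle; everything else is immediate from the stated results.
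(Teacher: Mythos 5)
Your proposal is correct and follows essentially the same route as the paper: it composes the worst-case counting hardness of Theorem~\ref{thm:tight short hypercycle} with the worst-case-to-average-case reduction of Lemma~\ref{lem:WCtoAC}, noting that for constant $u$ (hence constant $k$ and constant $|E_H|$) the reduction's overhead is absorbed into the $\tilde{O}(\cdot)$. Your extra remarks on the circle-layered/$H$-partite form and the counting-preserving nature of the reductions only make explicit what the paper's brief proof leaves implicit.
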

\begin{proof}
    If $u$ is constant then so is $k$. Thus the number of edges and nodes in our subhypergraph is constant. 

    Recall we are limiting ourselves to $k \in [u, \gamma_3^{-1}(u)]$.
    By Theorem \ref{thm:tight short hypercycle} we have that deciding if a $k$-hypercycle exists in a $k$-partite graph is $\tilde{O}(n^{k-o(1)})$ hard if the $(3,k)$-hyperclique hypothesis is true. 

    We can then apply Lemma \ref{lem:WCtoAC} to conclude that counting $k$-hypercycles in $u$-uniform Erd{\H{o}}s-R{\'{e}}nyi hypergraphs is $\tilde{O}(n^{k-o(1)})$ hard as well.
    
\end{proof}

\section{Beating Brute Force for Longer Hypercycles}
\label{sec:ub_unweighted}

In this section we give a new faster algorithm for hypercycle. 
While the range in which we get tightness in the previous section may have seemed arbitrary, the brute-force lower bound is actually \textit{false} for any longer hypercycle.
In particular, we show two different algorithms which leverage matrix multiplication to beat brute force when $k$ is sufficiently bigger than $u$.
One of these algorithms allows improvement over $n^{k-o(1)}$ starting at exactly $k \geq \gamma_3^{-1}(u)+1$, which is the point at which we can no longer prove tightness. 
The second algorithm exploits the sparsity of its input, and yields even better runtimes than the first when $k \geq 2u-1$.

These algorithms suggest the following relationship between hypercycle length and potential lower bounds.
When $k \in [\gamma_3^{-1}(u),2(u-1)]$, hardness is still dominated by the hypercycle length, but any reasonable lower bounds must account for fast matrix multiplication.
Then, when $k \in [2u-1, \infty]$, since the algorithm exploits sparsity and the total number of possible hyperedges is $O(n^u)$, hardness becomes independent of hypercycle length and is dominated by uniformity. 

\subsection{Faster Algorithm via Triangle-Detection} \label{subsec:triangle_detection}

We now show how matrix multiplication can be used to speed up hypercycle algorithms.
The main idea behind this algorithm is that when $k>\gamma_3^{-1}(u)$, it is possible to find three vertex partitions that are sufficiently ``spread out'' that no hyperedge covers all three partitions. This lets us treat the relationship between these three partitions as edges, not hyperedges. Specifically, we can represent the problem with many $3$-partite graphs which contain a triangle if and only if the original hypergraph contains a $k$-hypercycle. 
Then, since triangle-detection can be solved in $n^\omega$ time using matrix multiplication, we can get some savings.

To present our algorithm, we will need the following lemma.

\begin{lemma} \label{lem: triangle partitions}
    Let $\delta = \lceil k/3 \rceil$.

    Let $G$ be an $n$-vertex, $u$-uniform $k$-circle layered hypergraph with vertex partitions $V_1, \cdots, V_k$.
    Then, for all $k>\gamma_3^{-1}(u)$ and $i \in [1,k]$, every hyperedge covers at most two out of the three vertex sets $V_i,V_{i+\delta},V_{i+2\delta}$.
\end{lemma}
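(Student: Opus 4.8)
The plan is to argue by contradiction: assume some hyperedge $e$ covers all three of $V_i$, $V_{i+\delta}$, $V_{i+2\delta}$, and show this forces $u \ge \gamma_3(k)$, which will contradict the hypothesis $k > \gamma_3^{-1}(u)$. Since $G$ is $k$-circle-layered with uniformity $u$, the hyperedge $e$ is supported on a block of $u$ cyclically consecutive partitions, say on the index set $I = \{j, j+1, \dots, j+u-1\} \subseteq \mathbb{Z}/k\mathbb{Z}$, and ``$e$ covers $V_\ell$'' means exactly that $\ell \in I$. The complement $\bar I = (\mathbb{Z}/k\mathbb{Z}) \setminus I$ is then a block of $k-u$ cyclically consecutive residues, and by the contradiction hypothesis it contains none of $i$, $i+\delta$, $i+2\delta$, since those all lie in $I$.

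Next I would use the fact that the three indices $i,\ i+\delta,\ i+2\delta$ cut the cycle $\mathbb{Z}/k\mathbb{Z}$ into three arcs whose lengths, measured as the number of steps between consecutive marked indices, are $\delta$, $\delta$, and $k-2\delta$ (note $k-2\delta \ge 0$ for all $k$ arising here, and if two of the marked partitions coincide the argument only becomes easier). A block of consecutive residues that avoids all three marked indices must lie in the interior of a single one of these three arcs, so $|\bar I| \le \max(\delta - 1,\ \delta - 1,\ k - 2\delta - 1)$. The key arithmetic point is that $k - 2\delta \le \delta$: indeed $3\delta = 3\lceil k/3 \rceil \ge k$. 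Hence $|\bar I| \le \delta - 1$, i.e. $k - u \le \delta - 1$, which rearranges to $u \ge k - \lceil k/3 \rceil + 1 = \gamma_3(k)$.

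Finally I would close the loop using monotonicity of $\gamma_3$ (Lemma~\ref{lem:gamma monotone}) together with the definition of $\gamma_3^{-1}(u)$ as the largest $k'$ with $\gamma_3(k') = u$. Since $k > \gamma_3^{-1}(u)$, monotonicity gives $\gamma_3(k) \ge \gamma_3(\gamma_3^{-1}(u)) = u$, and $k$ being strictly larger than the maximal preimage of $u$ forces $\gamma_3(k) \ne u$, hence $\gamma_3(k) \ge u+1$. This contradicts the inequality $u \ge \gamma_3(k)$ derived above, so no hyperedge can cover all three of $V_i, V_{i+\delta}, V_{i+2\delta}$, as claimed. I expect the only delicate part to be the cyclic index bookkeeping — in particular making precise that a consecutive block avoiding three marked residues sits inside one arc, and verifying that the degenerate cases (where $k - 2\delta = 0$, or where two of the three partitions coincide) still respect $|\bar I| \le \delta - 1$; everything else reduces to the one-line inequality $3\lceil k/3\rceil \ge k$.
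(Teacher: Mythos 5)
Your proof is correct and follows essentially the same route as the paper's: both arguments observe that a hyperedge covering all three of $V_i,V_{i+\delta},V_{i+2\delta}$ must omit only partitions lying inside one of the three gaps of sizes $\delta-1,\ \delta-1,\ k-2\delta-1$, and both rule this out using the fact that $k>\gamma_3^{-1}(u)$ forces $u<\gamma_3(k)=k-\lceil k/3\rceil+1$. Your contradiction framing via the complement block $\bar I$ and the one-line inequality $3\lceil k/3\rceil\ge k$ merely streamline the paper's three-case analysis of $k \bmod 3$, and you additionally make explicit (via Lemma~\ref{lem:gamma monotone} and the definition of $\gamma_3^{-1}$) the step $u<\gamma_3(k)$ that the paper asserts without elaboration.
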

\begin{proof}
    Without loss of generality let $i =1$. 
    
    Recall that $\gamma_3^{-1}(u)$ is the maximum hypercycle length such that the hyperclique reduction outputs uniformity $u$ when starting with a $3$-uniform graph.
    Thus, if $k>\gamma_3^{-1}(u)$, applying the reduction to $(3,k)$-hyperclique yields a uniformity $u'>u$.
    Note that the three vertex sets $V_1,V_{1+\delta},V_{1+2\delta}$ are chosen to be maximally spaced out, meaning that if we replace one of them with any other partition, the smallest arc covering the new triple will be shorter than the one covering these three.

    We have that $u < k - \lceil k/3 \rceil +1$, because we are choosing a $k$ above $\gamma_3^{-1}(u)$. Note that any hyperedge which covers all three partitions must cover all the partitions except for those between some pair of partitions. So, the question is what is this gap? First, the gap between $V_1$ and $V_{1+\delta}$ is $\delta-1$ partitions. Second, the gap between $V_{1+\delta}$ and $V_{1+2\delta}$ is $\delta-1$ partitions. Finally, the gap between $V_{1+2\delta}$ and $V_1$ is $k-2\delta-1$ partitions. To cover all three partitions a hyperedge must cover all $k$ partitions except those in the gap. So, no hyperedge covers all three partitions as long as 
    $$u < \min(k-\delta+1, 2\delta+1).$$
    We chose $\delta$ to be $\lceil k/3 \rceil$. So
    $$u < k - \lceil k/3 \rceil +1 = k-\delta +1 = k -  \lceil k/3 \rceil +1. $$
    Now we need to check the second case how does 
    $u < k - \lceil k/3 \rceil +1$ compare to $2 \lceil k/3 \rceil + 1$? We can quickly work through the three cases. If $k$ is a multiple of 3 
    $$u < 2k/3+1 = 2k/3 +1.$$
    If $k$ is congruent to $1$ mod 3 then 
    $$u < k - (k+2)/3 +1 = 2k/3 + 1/3 < 2k/3 + 7/3 = 2 \cdot (k+2)/3 +1.$$
    If $k$ is congruent to $2$ mod 3 then 
    $$u < k - (k+1)/3 +1 = 2k/3 + 2/3 < 2k/3 + 5/3 = 2 \cdot (k+1)/3  +1.$$

    So, regardless of the congruence class of $k$ if $\delta = \lceil k/3 \rceil$ then these partitions are not covered by a single hyperedge.

    
\end{proof}

Our algorithm uses this idea to combine brute-force search with known triangle-counting techniques, which yields a saving of $(3-\omega)$ in the exponent of the brute-force runtime.

\begin{restatable}{theorem}{matmultCycle} \label{thm: matmul_cycle}
    Let $G$ be an $n$-vertex $u$-uniform $k$-circle-layered hypergraph and $k > \gamma^{-1}_3(u)$.
    Then, there exists an $O(n^{k-3+\omega})$-time algorithm for finding (counting) $k$-hypercycles in $G$.
\end{restatable}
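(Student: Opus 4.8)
The plan is to interleave brute-force search over $k-3$ of the partitions with a matrix-multiplication triangle count over the remaining three. Set $\delta=\lceil k/3\rceil$ and single out the partitions $V_1,V_{1+\delta},V_{1+2\delta}$; call $S$ the set of the other $k-3$ partition indices. The point of choosing these three ``maximally spread out'' partitions is Lemma~\ref{lem: triangle partitions}: since $k>\gamma_3^{-1}(u)$, no hyperedge (a window of $u$ consecutive partitions on the circle) touches all three of them. Hence, once we commit to a single vertex from each partition in $S$, the only remaining hyperedge constraints that still involve the three special partitions decouple into constraints on a single special partition or on a pair of them — exactly the structure of a tripartite triangle instance, which matrix multiplication solves in $O(n^\omega)$ time.

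Concretely, I would iterate over all $n^{k-3}$ ways to pick one vertex from each partition in $S$. For a fixed choice: (i) every hyperedge whose window lies entirely inside $S$ is now completely determined, so if some such hyperedge is absent this choice contributes nothing and we skip it; (ii) every hyperedge touching exactly one of $V_1,V_{1+\delta},V_{1+2\delta}$ becomes a unary constraint, which we use to prune the candidate vertices in that one partition; (iii) every hyperedge touching exactly two of them, say $V_1$ and $V_{1+\delta}$, becomes, for each candidate pair $(a,b)\in V_1\times V_{1+\delta}$, a fully specified hyperedge (all other vertices are the fixed ones in $S$), which we test in $O(1)$ via a hash table on $E$. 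This yields a tripartite graph $H$ on the surviving vertices of $V_1,V_{1+\delta},V_{1+2\delta}$, built in $\tilde{O}(n^2)$ time, where we put an edge between surviving vertices of two different parts iff all relevant pairwise hyperedges are present. A triangle of $H$ with one vertex per part is then exactly a completion of the fixed choice to a valid $k$-hypercycle; invoking Lemma~\ref{lem:no_short_backward_cycles} tells us a $k$-hypercycle uses exactly one vertex per partition when $k\not\equiv 0\bmod u$, so counting such triangles counts hypercycles without over- or under-counting. For detection we check whether $H$ contains such a triangle; for counting we evaluate $\sum_{a,c}(A_{12}A_{23})[a,c]\,A_{31}[c,a]$ and sum over all fixed choices.

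For the running time: there are $n^{k-3}$ iterations; in each, pruning and assembling $H$ costs $\tilde{O}(n^2)$ (at most $k$ relevant hyperedges per special partition and per pair, each tested in $O(1)$), and triangle detection/counting on the three parts, each of size at most $n$, costs $O(n^\omega)$ by matrix multiplication. Since $\omega\ge 2$, the per-iteration cost is $O(n^\omega)$, for a total of $O(n^{k-3+\omega})$, treating $k,u$ as constants as in the statement. The step I expect to be the real content is justifying the decomposition in (ii)--(iii): that no surviving constraint couples all three special partitions — precisely what Lemma~\ref{lem: triangle partitions} gives — and, secondarily, dealing cleanly with the edge case $k\equiv 0\bmod u$, where backward hypercycles revisiting a partition are possible and must either be counted separately or argued away in that regime. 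The rest is routine bookkeeping over the at-most-$k$ hyperedges incident to each chosen partition.
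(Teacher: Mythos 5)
Your proposal is correct and follows essentially the same route as the paper's proof: fix the three maximally spread partitions $V_1,V_{1+\delta},V_{1+2\delta}$ with $\delta=\lceil k/3\rceil$, invoke Lemma~\ref{lem: triangle partitions} to decouple the constraints, brute-force the other $k-3$ partitions, and reduce each fixed choice to an $O(n^2+n^\omega)$ tripartite triangle detection/counting instance, giving $O(n^{k-3+\omega})$ overall. Your extra care about pruning, pair-constraint bookkeeping, and the $k\equiv 0\bmod u$ backward-cycle caveat is slightly more detailed than the paper's write-up but does not change the argument.
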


\begin{proof}

    Let $\delta = \lceil k/3 \rceil$.
    
    Let $V_1, \cdots, V_k$ be the vertex partitions of $G$.
    The algorithm proceeds as follows.

    First, fix three vertex sets $V_1,V_{1+\delta},V_{1+2\delta}$, which we call \textit{triangle partitions}.
    These satisfy that no hyperedge covers all three partitions, because of Lemma \ref{lem: triangle partitions}.
    Now, we will use brute force to search over the remaining $k-3$ non-triangle partitions. For convince we call the brute force chosen vertex from partition $V_i$ $b_i$.
    In particular, for all $n^{k-3}$ possible combinations of vertices, one from each partition, we do the following:
    
    Create a tri-partite graph $G'$ with three vertex sets $T_1=V_1,T_2=V_{1+\delta},T_3=V_{1+2\delta}$.
    Then, there will exist an edge between two vertices if and only if there is a hyperpath between them that goes through the current choice of vertices. In other words, we only include an edge between a node in e.g. $v_ \in T_1$ and $v_2 \in T_2$ if every edge which involves only brute forced nodes $b_i$, $v_1$, and $v_2$ exists in the original hypergraph\footnote{For a concrete example consider $u=3$, $k=5$, so $\delta = 2$ where the triangle partitions are $V_1, V_3, V_5$. There will be an edge between $v_1$ and $v_3$ iff hyperedges $(v_1,b_2,v_3)$, $(b_2,v_3,b_4)$ exist in the original graph.}.
    This graph can be constructed in $O(n^2)$ time since we need to look at all pairwise combinations of the three vertex sets.
    Finally, use matrix multiplication to run triangle detection (or counting) on $G'$ which takes $O(n^{\omega})$. 

    If the algorithm finds a triangle for any of the $n^{k-3}$ graphs, then there is a hypercycle in $G$ and we output YES.

    We complete this proof by analyzing the runtime of the algorithm.
    For each combination of vertices, we take $O(n^2+n^{\omega})$ time.
    Since this must be done for $O(n^{k-3})$ combinations, our total runtime is $O(n^{k-3+\omega})$.
\end{proof}
This algorithm shows that for hypercycles which are sufficiently longer than the hyperedge size, it is actually possible to beat brute force.
It is worth noting that the algorithm's complexity still increases with $k$.
This leaves open the possibility that we can find reasonably tight lower bounds dependent on $k$ for $k \geq \gamma^{-1}_3(u)$.

Nonetheless, we next show that such a lower bound is only possible for hypercycles up to length $2(u-1)$, since there is an algorithm which only depends on $u$ and is faster for $k \geq 2u-1$.

\subsection{A Faster Algorithm for Longer Hypercycles} \label{subsec:longer_hypercycles}

In this section, we explore a different approach to the hypercycle problem.
Once again, we exploit the structure of a $k$-circle-layered graph.
The main idea behind previous algorithms for e.g. weighted cycle is to start at some partition $V_1$, then for each vertex $v_1$ in $V_1$ iterate through $V_2,\cdots,V_k$ while keeping track of whether $v_1$ has a path to each of these \cite{LVW18}.
When going from $V_i$ to $V_{i+1}$, a node in $V_{i+1}$ will be reachable if it shares an edge with a node in $V_i$ that is reachable from $v$.
This reduces the cycle problem to this reachability subproblem between adjacent partitions.

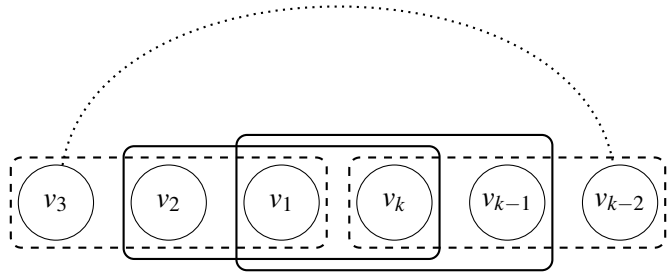
\begin{figure}[ht]
    \centering
    
\begin{tikzpicture}[scale=1.5, every node/.style={circle, draw, minimum size=1cm, inner sep=0pt}]

\node (v3) at (0,0) {$v_3$};
\node (v2) at (1,0) {$v_2$};
\node (v1) at (2,0) {$v_1$};
\node (vk) at (3,0) {$v_k$};
\node (vk1) at (4,0) {$v_{k-1}$};
\node (vk2) at (5,0) {$v_{k-2}$};

\draw[thick, rounded corners] (0.6,0.5) rectangle (3 + 0.4,-0.5);
\draw[thick, rounded corners] (1.6,0.6) rectangle (4 + 0.4,-0.6);
\draw[thick, dashed, rounded corners] (2.6,0.4) rectangle (5 + 0.4,-0.4);
\draw[thick, dashed, rounded corners] (-0.4,0.4) rectangle (2 + 0.4,-0.4);

\draw[dotted, thick, bend left=80] (v3) to (vk2);

\end{tikzpicture}
    \caption{Example with uniformity $u=3$ to demonstrate why you need $u-1$ nodes. These are the four hyperedges which involve only the vertices $v_1,v_2, v_2, v_k, v_{k-1},$ and $v_{k-2}$. The two thick hyperedges include at least one vertex from $v_1, v_2, v_3$ and at least one vertex from $v_k,v_{k-1},v_{k-2}$, the dashed hyperedges include only vertices from one side. Note that only $v_2, v_1, v_k,$ and $v_{k-1}$ are involved in the crossover edges. Note that $2 = u-1$ in this context.}
    \label{fig:reachabilityHypercycle}
\end{figure}

We define a more general reachability subproblem, modified in ways that are necessary to use it for hypercycle detection. 
The key difference is that instead of considering all possible starting vertices for the hypercycle, we have to consider all possible sets of $u-1$ starting vertices.
This is due to the fact that at the end of the hypercycle, we will need the last $u-1$ edges to contain all of these vertices. See Figure \ref{fig:reachabilityHypercycle} for a visual representation of why $u-1$ partitions are needed to finish the cycle. 

We define a reachability problem in circle-layered graphs to capture this idea.

\subsubsection{Reachability Problems in Uniform Hypergraphs}

\newcommand{\uCLR}{$u$-CLR}

Intuitively the following problem asks for all hyperpaths that cross $2u-1$ partitions in a $(2u-1)$-circle-layered hypergraph. The output will answer for all tuples of nodes of the first $u-1$ nodes and last $u-1$ nodes if a path exists, or what the minimum weight path is. 

\begin{definition}[(Weighted) $u$-uniform circle-layered reachability problem ((Weighted) \uCLR)] \label{def: u-CLR}
Let $G=(V,E)$ be an $n$-vertex $u$-uniform $(2u-1)$-circle-layered (weighted) hypergraph with vertex partitions $V_1, \cdots, V_{2u-1}$. 
Let $L=V_1 \times \cdots \times V_{u-1}$ and $R=V_{u+1} \times \cdots \times V_{2u-1}$ be sets of vertex partitions.
Then, the $u$-uniform circle-layered reachability problem asks for an output of size $n^{2u-2}$. 
For all $2u-2$ tuples of nodes $(x_1, \cdots, x_{u-1}, y_1, \cdots, y_{u-1})$  where $(x_1, \cdots, x_{u-1}) \in L$ and  $(y_1, \cdots, y_{u-1}) \in R$ you must return the following:
\begin{itemize}
    \item If unweighted return if there exists a $v \in V_{u}$ 
such that edges $(x_1, \cdots \, x_{u-1},v)$, $(x_2, \cdots \, x_{u-1},v,y_1)$,  $\ldots$, $(x_{u-1},v, y_1, \cdots, y_{u-2})$, $(v, y_1, \cdots, y_{u-1})$ exist  in $E$. (That is if a hyperpath $x_1, \cdots, x_{u-1}, v, y_1, \cdots, y_{u-1}$ exists.)  
    \item If weighted let $w()$ be a function which returns the weight of a hyperedge. Then return 
    $$ min_{v\in V_{u}}  \left( w(x_1, \cdots \, x_{u-1},v)+ w(v, y_1, \cdots, y_{u-1}) + \sum_{i\in [2,u-2]} w(x_i,x_{i+1},\ldots,x_{u-1}, v, y_1, \ldots, y_{i-1}) \right) $$
\end{itemize}
  
The output of this problem is an $n^{u-1} \times n^{u-1}$ matrix, which we denote CLR($L,R$) (we will also call this output matrix CLR$_{2u-1}$($L,R$)). This matrix is binary if unweighted and is over field of the weights if weighted. 
\end{definition}

For convenience we will also define a second version of this which takes in a CLR$_{k-1}$($L,R$) matrix and outputs another reach-ability matrix, CLR$_{k}$($L,R$),  for a graph with one more circle layer. The key intuition for why we define the problem this way is that if you have reachability information from the first $u-1$ nodes to the last $u-1$ nodes you can extend this to another partition because no hyperedge in the path will need information from the `middle' nodes as the hyperedges are of uniformity $u$.
\newcommand{\uECLR}[2]{$(#1,#2)$-ECLR}

\begin{figure}[ht]
    \centering
    \begin{tikzpicture}

    \draw[thick, red] (-3.5,1.2) rectangle (-0.5,-0.2);
    \foreach \i/\label in {0/V_1, 1/V_2, 2/\cdots, 3/V_{u-1}} {
        \draw[thick] (-3+\i*0.7,0.5) ellipse (0.3 and 0.6);
        \node at (-3+\i*0.7, 0.5) {\scriptsize $\label$};
    }
    \node at (-2, 1.5) {$L$};

    \node at (0, 0.5) {\huge $\cdots$};

    \draw[thick, red] (2.15,1.2) rectangle (5,-0.2);
    \foreach \i/\label in {-1/V_{\text{\tiny{k-u+1}}}, 0/, 1/\cdots, 2/, 3/V_{k}} {
        \draw[thick] (2.5+\i*0.7,0.5) ellipse (0.3 and 0.6);
        \node at (2.5+\i*0.7, 0.5) {\scriptsize $\label$};
    }
    \node at (4.6, 1.5) {$R_k$};

    \draw[thick, purple] (1.25,1.3) rectangle (4.25,-0.3);
    \node at (1.8, -0.6) {$R_{k-1}$};

    \end{tikzpicture}
    \caption{The \uECLR{u}{k} problem.}
    \label{fig:ECLR}
\end{figure}

\begin{definition}[(Weighted) $(k,u)$-extension uniform circle-layered reachability problem ((Weighted) \uECLR{u}{k})] \label{def: uk-ECLR}
Let $G=(V,E)$ be an $n$-vertex $u$-uniform $k$-circle-layered (weighted) hypergraph with vertex partitions $V_1, \cdots, V_{k}$. 
Let $L=V_1 \times \cdots \times V_{u-1}$ and $R_k=V_{k-(u-1)+1} \times \cdots \times V_{k}$ be sets of vertex partitions.
Additionally let CLR$_{k-1}(L, R_{k-1})$ be a reachability matrix which gives (minimum weight) hyperpath reachability from $L$ to $V_{k-(u-1)-1} \times \cdots \times V_{k-1}$.

Then, the \uECLR{u}{k} problem asks for an output of size $n^{2u-2}$. 
For all $2u-2$ tuples of nodes \\
$(x_1, \cdots, x_{u-1}, y_1, \cdots, y_{u-1})$  where $(x_1, \cdots, x_{u-1}) \in L$ and  $(y_1, \cdots, y_{u-1}) \in R$ you must return the following:
\begin{itemize}
    \item If unweighted return if there exists a hyperpath which starts at nodes $x_1, \cdots, x_{u-1}$ and ends at nodes $ y_1, \cdots, y_{u-1}$. 
    \item If weighted then return the minimum weight hyperpath which starts at $x_1, \cdots, x_{u-1}$ and ends at nodes $ y_1, \cdots, y_{u-1}$. 
\end{itemize}
  
The output of this problem is an $n^{u-1} \times n^{u-1}$ matrix, which we denote CLR$_k$($L,R$). This matrix is binary if unweighted and is over field of the weights if weighted. See Figure \ref{fig:ECLR} for a visual of this problem. 
\end{definition}

\subsubsection{Hypercycle Algorithm from Reachability Algorithm}

We now show that a pair of algorithms solving these problems can be used efficiently to solve hypercycle.

\begin{lemma} \label{lem: dp-like algorithm}
    
Suppose there exists a $T_1(n)$-time algorithm for \uCLR~and a $T_2(n)$-time algorithm for \uECLR{u}{k}. 
Then, there exists an $O(T_1(n)+T_2(n)+n^{2(u-1)}))$-time algorithm for finding (counting) $k$-hypercycles in $n$-vertex $u$-uniform $k$-circle-layered hypergraphs.
\end{lemma}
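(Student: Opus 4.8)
The plan is to mimic the standard dynamic-programming approach to cycle detection in circle-layered graphs, but to let the ``state'' at partition $V_j$ be an entire $n^{u-1}\times n^{u-1}$ reachability matrix between $(u-1)$-tuples of vertices rather than a single vertex. Throughout we assume $k\geq 2u-1$, which is what makes \uCLR{} and \uECLR{u}{k} well-posed. The structural fact we use is that a $k$-hypercycle $v_1,\dots,v_k$ in a $k$-circle-layered graph---which, by Lemma~\ref{lem:no_short_backward_cycles}, uses exactly one vertex from each partition whenever $k\not\equiv 0\bmod u$---has its $k$ hyperedges split into two groups: the $k-u+1$ \emph{linear} hyperedges $(v_i,\dots,v_{i+u-1})$ for $i\in[1,k-u+1]$, whose indices never wrap around and which together form a tight path spanning $V_1,\dots,V_k$; and the $u-1$ \emph{wrap-around} hyperedges starting at $i\in[k-u+2,k]$, each of which (using $u$-uniformity and $k\geq 2u-1$) involves only the $2(u-1)$ boundary vertices $v_1,\dots,v_{u-1},v_{k-u+2},\dots,v_k$ and no ``middle'' vertex from $V_u,\dots,V_{k-u+1}$. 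Computing reachability for the linear (tight-path) part is exactly what \uCLR{} and \uECLR{u}{k} do; the wrap-around edges will be handled by a brute force over boundary vertices.

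First I would run the assumed $T_1(n)$-time algorithm on the sub-hypergraph induced by $V_1,\dots,V_{2u-1}$ (with $L=V_1\times\cdots\times V_{u-1}$, $R=V_{u+1}\times\cdots\times V_{2u-1}$) to obtain $\mathrm{CLR}_{2u-1}(L,R)$; its $(x,y)$ entry records whether the $(u-1)$-tuples $x,y$ are joined by a tight path through a vertex of $V_u$ (resp.\ the minimum such weight, resp.\ the number of such paths). Then I would iterate: for $j=2u,2u+1,\dots,k$, feed $\mathrm{CLR}_{j-1}$ into the $T_2(n)$-time \uECLR{u}{j} algorithm to obtain $\mathrm{CLR}_{j}$, which extends the reachability information to one more partition by incorporating the hyperedge $(v_{j-u+1},\dots,v_j)$; note that since $j\geq 2u$, this edge only touches the current right boundary $(v_{j-u+1},\dots,v_{j-1})$ and the new vertex $v_j$, so no information about the middle or the left boundary is needed. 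After $k-(2u-1)$ extension steps we hold $\mathrm{CLR}_k$, which, for every $(x_1,\dots,x_{u-1})\in V_1\times\cdots\times V_{u-1}$ and $(y_1,\dots,y_{u-1})\in V_{k-u+2}\times\cdots\times V_k$, records the status of a tight path from $x_1,\dots,x_{u-1}$ to $y_1,\dots,y_{u-1}$ realizing all $k-u+1$ linear hyperedges.

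The final step closes the cycle by brute force over the $n^{2(u-1)}$ choices of boundary tuples $(x_1,\dots,x_{u-1})$ and $(y_1,\dots,y_{u-1})$. For each such choice I would check in $O(1)$ time that the $u-1$ wrap-around hyperedges determined by these $2(u-1)$ vertices all lie in $E$, and then combine this with the corresponding entry of $\mathrm{CLR}_k$: for detection, output YES iff some choice passes the edge test and has a nonzero reachability entry; for counting, sum the $\mathrm{CLR}_k$ entries over all choices whose wrap-around edges are present (which requires the $\mathrm{CLR}$ matrices to store path counts---a routine modification replacing disjunction by addition and, in each extension step, summing over the newly introduced middle vertex); for the weighted version, minimize the sum of the $\mathrm{CLR}_k$ entry and the weights of the wrap-around edges over admissible choices. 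This step runs in $O(n^{2(u-1)})$ time, so the total is $O(T_1(n))+O(k)\cdot O(T_2(n))+O(n^{2(u-1)})$, which is $O(T_1(n)+T_2(n)+n^{2(u-1)})$ for constant $k$.

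The main obstacle is the correctness proof rather than the accounting. One must verify (a) that every $k$-hypercycle of the desired form corresponds to exactly one pair (tight path on $V_1,\dots,V_k$, valid set of $u-1$ wrap-around edges) and conversely, which hinges on $u$-uniformity confining every hyperedge of the cycle to $u$ cyclically consecutive partitions so that the path-plus-boundary split loses no edge; (b) that the index ranges in the \uCLR{} and \uECLR{u}{k} definitions line up edge-for-edge with the linear hyperedges (and, for counting, that no hypercycle is double-counted, e.g.\ under rotation or reflection of the circle-layered structure); and (c) the one-vertex-per-partition assumption, which fails precisely when $k\equiv 0\bmod u$ (``backward'' hypercycles), and so must either be ruled out by hypothesis on $k$ or handled by a short separate argument via Lemma~\ref{lem:no_short_backward_cycles}.
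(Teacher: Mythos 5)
Your proposal is correct and follows essentially the same route as the paper's proof: compute $\mathrm{CLR}_{2u-1}$ with the \uCLR{} algorithm, extend it layer by layer with the \uECLR{u}{k} algorithm for $i \in [2u,k]$, and then close the cycle by an $O(n^{2(u-1)})$ brute force over boundary tuples that checks the $u-1$ wrap-around hyperedges, with the same runtime accounting (constantly many calls to $A_2$ since $k$ is constant). The correctness caveats you flag (the linear/wrap-around edge split, one vertex per partition via Lemma~\ref{lem:no_short_backward_cycles}, and count modifications) are reasonable and, if anything, spelled out more explicitly than in the paper's own proof.
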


\begin{proof}
    Let $A_1$ be the algorithm for \uCLR, $A_2$ be the algorithm for \uECLR{u}{k} and let $V_1, \cdots, V_k$ be the vertex partitions of the $k$-circle-layered hypergraph.
    Let $L$ and $R_i$ be as defined in Definition \ref{def: uk-ECLR}.

    Note that given the matrix \uCLR$_k$($L,R_k$), we can solve hypercycle by ``completing the hyperpath'' as follows.
    For each possible sequence $(x_1, \cdots, x_{u-1})$ in $L$, we iterate through each possible sequence $(y_1, \cdots, y_{u-1})$ in $R_k$ and check its reachability.
    If the value in the matrix is a $1$, we can then check whether all of the hyperedges $(y_1, \cdots, y_{u-1},x_1), \cdots , (y_{u-1},x_1, \cdots, x_{u-1})$ exist.
    If they do, we have found a hypercycle. 
    
    Note that this procedure requires $O(1)$ work for each possible pair of sequences, and there are $n^{2(u-1)}$ pairs, so this requires $O(n^{2(u-1)})$ time.

    Now we must show how to efficiently obtain \uCLR$_k(L,R_k$) using the algorithms $A_1$ and $A_2$.
    First, we use $A_1(L,R_{2u-1})$ to get \uCLR$_{2u-1}(L,R_{2u-1}$) in $T_1(n)$ time.
    Next, we can repeat the following for $i \in [2u,k]$: Run $A_2$ on inputs $L,R_i$ and \uCLR$_{i-1}(L,R_{i-1}$) to get \uCLR$_{i}(L,R_{i}$).
    Once we get to $i=k$, we can carry out the hyperpath completion mentioned previously and check for hypercycles. 

    This process requires running $A_1$ once and $A_2$ a constant $k-2u+1$ number of times, which takes $T_1(n)+(k-2u+1)T_2(n)=O(T_1(n)+T_2(n))$ time.
    Thus, the total runtime of the algorithm is $O(T_1(n)+T_2(n)+n^{2(u-1)}))$.
\end{proof}

Note that if $T_1(n)$ and $T_2(n)$ have runtimes independent of $k$, then so will this algorithm.
Inutitively this should be true since one problem is completely characterized by $(2u-1)$-circle-layered graphs, and the other is simply asking about extending the reachability by one layer.
In the next subsection, we actually prove this by constructing algorithms for \uCLR~and \uECLR{u}{k}~ which will let us prove the following theorem:

\begin{restatable}{theorem}{matMulReach}\label{thm:matmul_reachability}
There exists a time-$O(n^{2u-1-(3-\omega)})$ algorithm for finding (counting) $k$-hypercycles in $n$-node $u$-uniform $k$-circle layered hypergraphs when $k \geq 2u-1$.
\end{restatable}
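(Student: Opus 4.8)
The plan is to invoke Lemma~\ref{lem: dp-like algorithm}, which reduces the theorem to giving fast algorithms for \uCLR~and \uECLR{u}{k}: if both run in time $O(n^{2u-4+\omega})$, then since $\omega\ge 2$ gives $n^{2(u-1)}=O(n^{2u-4+\omega})$, Lemma~\ref{lem: dp-like algorithm} yields a $k$-hypercycle algorithm running in $O(n^{2u-4+\omega})=O(n^{2u-1-(3-\omega)})$ time. So the real work is building these two reachability algorithms, and both follow the template of Theorem~\ref{thm: matmul_cycle}: pick three ``spread-out'' vertices among the path vertices whose pairwise constraints, once the remaining path vertices are fixed by brute force, factor through a single intermediate vertex, so that the relevant slice of the output matrix becomes a Boolean matrix product (an integer matrix product for counting) computable via fast matrix multiplication. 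The step I expect to require the most care is verifying this factorization — specifically, that fixing the brute-forced vertices really does leave a two-hop structure through one vertex with no residual direct constraint between the two outer distinguished vertices.

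For \uCLR, the hyperpath is $x_1,\dots,x_{u-1},v,y_1,\dots,y_{u-1}$, one vertex per partition, and the relevant hyperedges are exactly the $u$ length-$u$ windows of consecutive positions in this (non-cyclic) path. I would take the distinguished positions to be $x_1$ (position $1$), $v$ (position $u$), and $y_{u-1}$ (position $2u-1$): no window of $u$ consecutive positions contains both position $1$ and position $2u-1$, so no relevant hyperedge constrains $x_1$ and $y_{u-1}$ simultaneously. I would then brute-force over all $n^{2u-4}$ choices of the remaining vertices $x_2,\dots,x_{u-1},y_1,\dots,y_{u-2}$. With these fixed, each window touching only $v$ and brute-forced vertices becomes a unary predicate cutting $V_u$ down to a set $S$; the window $(x_1,\dots,x_{u-1},v)$ becomes an $n\times n$ Boolean matrix $A$ indexed by $x_1$ and $v\in S$; and $(v,y_1,\dots,y_{u-1})$ becomes an $n\times n$ Boolean matrix $B$ indexed by $v\in S$ and $y_{u-1}$. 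Then the $n\times n$ slice of $\mathrm{CLR}_{2u-1}$ obtained by fixing the brute-forced coordinates equals the Boolean product $AB$, computed in $O(n^\omega)$ time, and summing over all $n^{2u-4}$ brute-force choices fills the full $n^{2u-2}$-entry output in $O(n^{2u-4+\omega})$ time.

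For \uECLR{u}{k}, the given matrix $\mathrm{CLR}_{k-1}(L,R_{k-1})$ already encodes all hyperpath constraints among $V_1,\dots,V_{k-1}$, and extending the path to $V_1,\dots,V_k$ introduces exactly one new hyperedge, $(v_{k-u+1},\dots,v_k)$ (using $k\ge 2u-1$, this hyperedge is disjoint from the $L$-coordinates $v_1,\dots,v_{u-1}$). Writing an $R_{k-1}$-tuple as $(v_{k-u+1},W')$ and an $R_k$-tuple as $(W',v_k)$, where $W'=(v_{k-u+2},\dots,v_{k-1})$ are the $u-2$ shared coordinates, I would brute-force over all $n^{u-2}$ choices of $W'$. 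For each, form the $n^{u-1}\times n$ matrix $A$ with $A[X][v_{k-u+1}]=\mathrm{CLR}_{k-1}[X][(v_{k-u+1},W')]$ and the $n\times n$ matrix $B$ with $B[v_{k-u+1}][v_k]=\mathbf{1}\!\left[(v_{k-u+1},W',v_k)\in E\right]$; then $\mathrm{CLR}_k[X][(W',v_k)]=(AB)[X][v_k]$. The rectangular product of an $n^{u-1}\times n$ matrix by an $n\times n$ matrix is computed by cutting the left factor into $n^{u-2}$ blocks of size $n\times n$ and performing $n^{u-2}$ square products, costing $O(n^{u-2+\omega})$; over all $n^{u-2}$ choices of $W'$ this is $O(n^{2u-4+\omega})$, and this bound is independent of which layer is being extended, so the $k-2u+1=O(1)$ ECLR calls in Lemma~\ref{lem: dp-like algorithm} cost $O(n^{2u-4+\omega})$ in total. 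Combining the two algorithms via Lemma~\ref{lem: dp-like algorithm} gives the claimed $O(n^{2u-1-(3-\omega)})$ bound; the counting version is identical, replacing Boolean matrices by integer matrices holding path counts and Boolean products by integer matrix products (still $O(n^\omega)$).
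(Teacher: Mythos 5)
Your proposal is correct and follows essentially the same route as the paper: it invokes Lemma~\ref{lem: dp-like algorithm} and then solves both reachability subproblems by brute-forcing the non-distinguished path vertices and reducing each output slice to an $n\times n$ Boolean (or integer) matrix product, exactly as in Lemmas~\ref{lem: uECLR algorithm} and~\ref{lem: uCLR algorithm} (your distinguished triple $x_1, v, y_{u-1}$ is the paper's $V_1, V_u, V_{2u-1}$). The only differences are cosmetic: you make the ``middle windows become unary predicates on $v$'' step explicit, and in the extension step you organize the computation as a rectangular $n^{u-1}\times n$ times $n\times n$ product over the shared coordinates $W'$ rather than the paper's triangle-style formulation, which is the same computation at the same cost.
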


\subsubsection{Algorithms for Unweighted Reachability}

To achieve the desired runtime, we need algorithms for \uCLR~and \uECLR{u}{k} running in time $O(n^{2u-1- (3-\omega}))$. 
We will show that this can be achieved using matrix multiplication in a similar way as done in Theorem \ref{thm: matmul_cycle}.

We first show how to do this for $k=2u-1$.
\begin{lemma} \label{lem: uECLR algorithm}
    There exists a time-$O(n^{2u-1-(3-\omega)})$ algorithm for solving \uCLR~ in $n$-node $u$-uniform hypergraphs.
\end{lemma}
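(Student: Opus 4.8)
The plan is to adapt the fast-matrix-multiplication trick of Theorem~\ref{thm: matmul_cycle} to the reachability setting. The hyperpath we are hunting for, $x_1,\ldots,x_{u-1},v,y_1,\ldots,y_{u-1}$, has exactly $2u-1$ nodes, and the hyperedges it must contain (as in Definition~\ref{def: u-CLR}) are precisely the $u$ windows of $u$ consecutive nodes along it, namely $e_j=(x_{j+1},\ldots,x_{u-1},v,y_1,\ldots,y_j)$ for $j=0,\ldots,u-1$, where $e_0=(x_1,\ldots,x_{u-1},v)$ and $e_{u-1}=(v,y_1,\ldots,y_{u-1})$. I would brute-force over $2u-4$ of the $2u-1$ path nodes and handle the remaining three with a single Boolean matrix product (the special case of the triangle technique where no edge joins two of the three free coordinates), at cost $O(n^{2u-4+\omega})=O(n^{2u-1-(3-\omega)})$.

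Concretely, I would take the three free nodes to be $x_1$, $v$, and $y_{u-1}$, sitting at positions $1$, $u$, and $2u-1$ of the path, and brute-force the others, $x_2,\ldots,x_{u-1}$ and $y_1,\ldots,y_{u-2}$ (that is $2u-4$ nodes). The key structural observation, the analogue of Lemma~\ref{lem: triangle partitions} here, is that no hyperedge touches all three free nodes: a window of $u$ consecutive positions starting at $i$ covers $\{i,\ldots,i+u-1\}$, and covering $\{1,u,2u-1\}$ would force $i\le 1$ and $i+u-1\ge 2u-1$, i.e. $i\ge u$, impossible for $u\ge 2$. Classifying the $u$ hyperedges by which free nodes they touch: $e_0$ involves $x_1$ and $v$; $e_{u-1}$ involves $v$ and $y_{u-1}$; and each middle edge $e_1,\ldots,e_{u-2}$ involves only $v$ (it starts at some $x_{j+1}$ with $j+1\ge 2$ and ends at $y_j$ with $j\le u-2$). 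Note also that since partitions in a circle-layered graph are disjoint, $v\in V_u$ is automatically distinct from all $x_i$ and $y_j$, so no distinctness constraint is an issue.

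For each of the $n^{2u-4}$ assignments of the brute-forced nodes I would: precompute a per-$v$ bit $c(v)$ that is $1$ iff all middle edges $e_1,\ldots,e_{u-2}$ are present (in $O(n)$ time, as $u$ is constant); build the $n\times n$ Boolean matrix $A$ with $A[x_1][v]=1$ iff $e_0\in E$ and $c(v)=1$, and the $n\times n$ Boolean matrix $B$ with $B[v][y_{u-1}]=1$ iff $e_{u-1}\in E$ (each in $O(n^2)$ time); and compute $AB$ over the integers by fast matrix multiplication in $O(n^\omega)$ time, noting that $(AB)[x_1][y_{u-1}]$ counts exactly the $v\in V_u$ for which $e_0,\ldots,e_{u-1}$ all exist, hence is nonzero iff the hyperpath can be completed. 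I would then write that bit into the unique entry of the output matrix $\mathrm{CLR}_{2u-1}(L,R)$ indexed by $(x_1,\ldots,x_{u-1},y_1,\ldots,y_{u-1})$ for the current assignment. Each output entry is written exactly once, namely by the assignment matching its index, so the full $n^{u-1}\times n^{u-1}$ matrix is filled correctly. The total time is $n^{2u-4}\cdot O(n^2+n^\omega)=O(n^{2u-4+\omega})=O(n^{2u-1-(3-\omega)})$ since $\omega\ge 2$.

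I expect the only delicate points to be the positional argument showing $\{1,u,2u-1\}$ is spread out enough that no hyperedge covers all three free coordinates, and the matching split of the $u$ hyperedges into the left matrix, the right matrix, and the per-$v$ indicator; the remaining bookkeeping (the degenerate case $u=2$, where there is nothing to brute-force and $AB$ is itself the reachability matrix, and the routine indexing of output entries) is straightforward. Everything else is a direct analogue of Theorem~\ref{thm: matmul_cycle}.
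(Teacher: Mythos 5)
Your proposal is correct and follows essentially the same route as the paper: fix the three extreme partitions $V_1,V_u,V_{2u-1}$, brute-force the remaining $2u-4$ nodes, and resolve the two-hop reachability through $V_u$ with one fast matrix product per assignment, giving $O(n^{2u-4+\omega})=O(n^{2u-1-(3-\omega)})$. Your explicit per-$v$ indicator for the middle hyperedges (those touching only $V_u$ among the three free positions) is just a more carefully spelled-out version of the bookkeeping the paper's proof leaves implicit.
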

\begin{proof}
    The algorithm heavily relies on the fact that any given hyperedge will cover at most two out of the three vertex partitions $V_1,V_u,V_{2u-1}$.
    We fix these three, then for each of the $n^{2u-1-3}$ possible combination of vertices from the other $2u-1-3$ vertex partitions, we will create a graph $G$ which captures the reachability between these three partitions.

    The graph will have three vertex partitions, corresponding to $V_1, V_u, V_{2u-1}$, with edges between $V_1$ and $V_u$ as well as between $V_u$ and $V_{2u-1}$.
    We add an edge between any pair of vertices in $V_1$ and $V_u$ which has a hyperpath going through the current choice of vertices in $V_2, \cdots, V_{u-1}$.
    Similarly, we add an edge between a pair of vertices in $V_u$ and $V_{2u-1}$ if there is a hyperpath between them going through the current choice of vertices in $V_{u+1}, \cdots, V_{2u-2}$.

    Once the graph has been constructed, assume we have two adjacency matrices, one for each pair of vertex sets with edges between them.
    Then, we can use matrix multiplication to obtain the reachability between $V_1$ and $V_{2u-1}$ in $O(n^{\omega})$ time.
    Since we do this for $n^{2u-1-3}$ graphs, the total runtime of the algorithm is $O(n^{2u-1-(3-\omega)})$ time.
\end{proof}

We now show how to extend the technique to solve reachability when $k>2u-1$.

\begin{lemma} \label{lem: uCLR algorithm}
    There exists a time-$O(n^{2u-1-(3-\omega)})$ algorithm for solving \uECLR{u}{k} in $n$-node $u$-uniform hypergraphs.
\end{lemma}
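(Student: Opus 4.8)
The plan is to mirror the proof of Lemma~\ref{lem: uECLR algorithm} (the $k=2u-1$ base case), but now taking advantage of the matrix $\mathrm{CLR}_{k-1}(L,R_{k-1})$ handed to us as input, as in Definition~\ref{def: uk-ECLR}. The structural fact driving everything is that extending a hyperpath from spanning $V_1,\dots,V_{k-1}$ to spanning $V_1,\dots,V_k$ introduces \emph{exactly one} new hyperedge, the window on $V_{k-u+1},\dots,V_k$; because the hypergraph is $u$-uniform this window cannot reach the ``middle'' of the path, and each partition $V_1,\dots,V_k$ occurs at most once on a path, so the prefix of an extended path, restricted to $V_1,\dots,V_{k-1}$, is itself a legitimate $\mathrm{CLR}_{k-1}$-witness. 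The first step I would carry out is to prove the resulting recurrence: writing $y=(y_1,\dots,y_{u-1})$ for the last $u-1$ vertices of the sought path, $\mathrm{CLR}_k[x,(y_1,\dots,y_{u-1})]=1$ if and only if there is some $w\in V_{k-u+1}$ with $\mathrm{CLR}_{k-1}[x,(w,y_1,\dots,y_{u-2})]=1$ and the hyperedge $(w,y_1,\dots,y_{u-1})$ present in $E$. (For the counting variant the ``if and only if'' becomes a sum over $w$ of products.)

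Given the recurrence, the algorithm is a batched matrix product. I would fix, one at a time, each of the $n^{u-2}$ choices of the shared suffix $(y_1,\dots,y_{u-2})$ ranging over $V_{k-u+2}\times\cdots\times V_{k-1}$. For a fixed suffix, form the $n^{u-1}\times n$ matrix $A$ with $A[x,w]=\mathrm{CLR}_{k-1}[x,(w,y_1,\dots,y_{u-2})]$ (just a slice of the input matrix), and the $n\times n$ matrix $B$ with $B[w,y_{u-1}]=1$ iff $(w,y_1,\dots,y_{u-1})\in E$; then compute the Boolean product $C=AB$ (or the integer product over $O(\log n)$-bit entries, for counting) and record each $C[x,y_{u-1}]$ as $\mathrm{CLR}_k[x,(y_1,\dots,y_{u-1})]$. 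By the recurrence this produces $\mathrm{CLR}_k(L,R_k)$ correctly.

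For the running time, the product $C=AB$ with $A$ of shape $n^{u-1}\times n$ and $B$ of shape $n\times n$ is carried out by slicing the $n^{u-1}$ rows of $A$ into $n^{u-2}$ groups of $n$ rows and performing $n^{u-2}$ square $n\times n$ multiplications, at cost $O(n^{u-2+\omega})$; assembling all the $A$'s and $B$'s over the whole run (edges stored in a hash table) costs $O(n^{2u-2})$, which does not increase the bound. Looping over the $n^{u-2}$ suffixes gives total time $O(n^{u-2}\cdot n^{u-2+\omega})=O(n^{2u-4+\omega})=O(n^{2u-1-(3-\omega)})$, matching the $k=2u-1$ case of Lemma~\ref{lem: uECLR algorithm} so that Lemma~\ref{lem: dp-like algorithm} can chain the two to prove Theorem~\ref{thm:matmul_reachability}. (The weighted version is not claimed here; the analogous step would need min-plus matrix multiplication, for which no $(3-\omega)$ saving is known.)

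The step I expect to be the main obstacle is the first one: being careful that the prefix of a length-$k$ hyperpath, restricted to $V_1,\dots,V_{k-1}$, is a valid witness for $\mathrm{CLR}_{k-1}$ with last vertices $(w,y_1,\dots,y_{u-2})$ --- i.e.\ that no constraint of the $(k-1)$-layer reachability problem secretly depends on $V_k$, and that conversely any $w$ satisfying the two conditions glues into a genuine length-$k$ hyperpath without creating or destroying any required window. Once this equivalence is pinned down, the matrix-multiplication bookkeeping is routine and identical in spirit to the argument already used for $\mathrm{CLR}_{2u-1}$.
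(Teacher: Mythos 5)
Your proposal is correct and follows essentially the same route as the paper: extending by one layer introduces only the single new hyperedge window on $V_{k-u+1},\dots,V_k$, so $\mathrm{CLR}_k$ is obtained by combining the given $\mathrm{CLR}_{k-1}$ slices with an adjacency matrix for that window via fast matrix multiplication, for a total of $n^{2u-4}$ square $n\times n$ products and time $O(n^{2u-4+\omega})=O(n^{2u-1-(3-\omega)})$. The only difference is bookkeeping: the paper explicitly fixes the vertices in $V_2,\dots,V_{u-1}$ and $V_{k-u+2},\dots,V_{k-1}$ and builds tripartite graphs on $V_1,V_{k-u+1},V_k$, while you fix only the right suffix and block-decompose a rectangular product, which amounts to the same computation.
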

\begin{proof}
    Recall that we want to produce the matrix \uCLR$_k(L,R_k)$ given $L,R_k$ and the reachability matrix \uCLR$_{k-1}(L,R_{k-1})$, where $L,R_{k-1},R_k$ all consist of $n^{u-1}$ tuples of $(u-1)$ vertices.
    The algorithm essentially creates a reachability matrix from $R_{k-1}$ to $R_k$, then combines this with \uCLR$_{k-1}(L,R_{k-1})$ to extend the hyperpath.

    To create the reachability matrix $A$ from $R_{k-1}$ to $R_{k}$, we must look at all possible pairs of $(u-1)$-tuples in $R_{k-1} \times R_k$.
    If the two tuples have the form $(x,v_2, \cdots, v_{u-1})$ and $(v_2, \cdots, v_{u-1},y)$ then we check whether the hyperedge $(x, v_2, \cdots, v_{u-1},y) \in E$.
    If it is, then we set the corresponding entry in $A$ to $1$.
    Note that doing this for all possible pairs takes $n^{2(u-1)}$ time.

    To obtain \uCLR$_k(L,R_k)$ from $A$ and \uCLR$_{k-1}(L,R_{k-1})$ we use the same approach as in the previous lemma.
    We will fix three vertex sets $V_1,V_{k-(u-1)},V_k$ and then create $2^{2u-1-3}$ different graphs by taking all possible combinations of vertices in $V_2, \cdots, V_{u-1},V_{k-u}, \cdots, V_{k-1}$.
    For each combination, the graph will have three vertex partitions corresponding to $V_1,V_{k-(u-1)},V_k$ and edges will be added between $V_1$ and $V_{k-(u-1)}$ as well as between $V_{k-(u-1)}$ and $V_k$.
    Crucially, this can be done because $k$ is big enough so that any given hyperedge will cover at most two of these 
    To determine reachability between $V_1$ and $V_{k-(u-1)}$, we can check the corresponding entry in \uCLR$_{k-1}(L,R_{k-1})$.
    Even though this matrix contains a stricter condition, this condition is necessary for the reachability to $V_k$, so we enforce it with this connection.
    To determine reachability between $V_{k-(u-1)}$ and $V_k$, we can use the corresponding entries in $A$.

    Once we have this graph, we can use matrix multiplication to determine 2-hop reachability and fill out the appropriate entry in \uCLR$_k(L,R_k)$ in $O(n^{\omega})$ time. 
    Since we do this for all $n^{2u-1-3}$ combinations of vertices, the total runtime of the algorithm is $O(n^{2u-1-(3-\omega)})$.
\end{proof}

We can now prove that $k$-hypercycle can be solved in $O(n^{2u-1-(3-\omega)})$ for sufficiently large $k$.

\matMulReach*

\begin{proof}
    Recall from Lemma \ref{lem: dp-like algorithm} that given a $T_1(n)$-time algorithm for \uCLR~and a $T_2(n)$-time algorithm for \uECLR{u}{k} we can solve $k$-hypercycle in time $O(n^{2(u-1)}+T_1(n)+T_2(n))$. 
    Applying Lemma \ref{lem: uCLR algorithm}, we know we can set $T_1(n)=O(n^{2u-1-(3-\omega)})$.
    From Lemma \ref{lem: uECLR algorithm} we get $T_2(n)=O(n^{2u-1-(3-\omega)})$.

    Combining these, we conclude that $k$-hypercycle in $u$-uniform hypergraphs can be solved in $O(n^{2u-1-(3-\omega)})$ time when $k \geq 2u-1$.
\end{proof}

We can now prove our desired theorem.

\fastCycleDetect*

\begin{proof}
    Using the color-coding approach from \cite[Lemma~2.2]{LVW18} we can randomly color the vertices in $G$ to obtain a $k$-circle-layered hypergraph.
    Then, we can run our time-$O(n^{2u-1-(3-\omega)})$ algorithm on this new hypergraph.
    Finally, this process can be repeated $k^k \log{n}$ times to boost the success probability, giving us our runtime of $\tilde{O}(k^k(n+m+n^{2u-1-(3-\omega)}))$.
\end{proof}

\section{Tight Results for Min-Weight Hypercycle}
\label{sec:tight_weighted}
In this section we present the tight matching upper and lower bounds for minimum hypercycle represented in Figure \ref{fig:weightedHypercycleBounds}. We start by providing the algorithm for min-weighted hypercycle and then prove the lower bounds.

Notably, we demonstrate that the best algorithm for weighted $k$-hypercycle for $k \in [u+1, 2u-1]$ is the naive algorithm. Our algorithm relies on tracking the first $u-1$ nodes of a hyperpath and the last $u-1$ nodes of a hyperpath.  This lower bound shows that doing so is, in some sense, required. The tight upper and lower bound gives credence to the intuition that when thinking about hypercycles you should think of sets of $(u-1)$ nodes as being analogous to single nodes in a normal graph with uniformity $2$. Furthermore, this suggests that, perhaps surprisingly, for $k\leq 2u-1$ the $k$-hypercycle problem requires the same running time as the  $k$-hyperclique problem in a $u$ uniform graph. 

\subsection{Algorithms for Minimum Hypercycle}
We will present two algorithms in this section. The first is the naive algorithm, presented for completeness. The second uses the definitions of 
\uCLR~and \uECLR{u}{k}~from the previous section (see Definitions \ref{def: u-CLR} and \ref{def: uk-ECLR}) and demonstrates fast algorithms for them. We can then use the \uECLR{u}{k}~algorithm to solve the $k$-hypercycle problem for $k\geq 2u-1$.

\subsubsection{Algorithm for k less than 2u-1}
This is the totally naive algorithm for the problem. We simply try all possible orders of nodes and check if they form a hypercycle.

\begin{lemma}
    The minimum $k$-hypercycle problem in a $u$-uniform hypergraph can be solved in time $O(k \cdot n^k)$, when $k$ is constant this is $O(n^k)$. 
    \label{lem:shortweightedkhypercyclealg}
\end{lemma}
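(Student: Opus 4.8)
The plan is to solve this by exhaustive search over all vertex orderings, which is precisely the ``naive algorithm'' the lemma refers to. First I would make a single preprocessing pass over the hyperedge set $E$ to build a dictionary (hash table) keyed by the $u$-vertex sets of $G$, together with their weights, so that a query ``is $\{w_1,\dots,w_u\}\in E$, and if so what is its weight?'' is answered in $O(1)$ amortized time on the Word-RAM. This preprocessing costs $O(u\cdot |E|) = O(u\cdot n^u)$ time, which is dominated by the main loop below since $u\le k$.

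Then I would iterate over every ordered tuple $(v_1,\dots,v_k)$ of $k$ \emph{distinct} vertices of $V$; there are $n(n-1)\cdots(n-k+1)\le n^k$ of these. For each tuple I would test whether it forms a tight $k$-hypercycle by querying the dictionary for each of the $k$ hyperedges $\{v_i,v_{i+1\bmod k},\dots,v_{i+u-1\bmod k}\}$, $i\in[k]$; if all $k$ are present I sum their $k$ weights to obtain the total weight of this hypercycle, and otherwise I discard the tuple. Throughout, I maintain the minimum total weight observed, initialized to $+\infty$ (which is also the correct output when $G$ contains no $k$-hypercycle). Correctness is immediate from the definition of a tight $k$-hypercycle: the tuples that pass the test are exactly the $k$-hypercycles of $G$, each enumerated once for each of its $k$ rotations, so the minimum over accepted tuples equals the minimum hypercycle weight.

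For the running time, the per-tuple work is $O(k)$: we make $k$ constant-time dictionary lookups and perform $k$ weight additions (each weight fits in a constant number of $O(\log n)$-bit words for the weight ranges arising from our hypotheses, so each addition is $O(1)$). Multiplying by the $\le n^k$ tuples gives $O(k\cdot n^k)$, and adding the $O(u\cdot n^u)$ preprocessing term (which is $\le n^k$ in general, and $o(n^k)$ in the regime $k\in[u+1,2u-1]$ of interest) leaves the bound $O(k\cdot n^k)$; for constant $k$ this is $O(n^k)$. There is essentially no obstacle here; the only point requiring a little care is keeping the per-tuple cost $O(k)$ rather than $O(ku)$, which is exactly why we precompute the hyperedge dictionary and, if desired, slide the length-$u$ window of consecutive cycle vertices around the cycle incrementally so that each of the $k$ edge-keys is assembled in amortized $O(1)$ time.
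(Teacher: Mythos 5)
Your proposal is correct and follows essentially the same route as the paper: brute-force enumeration of all ordered $k$-tuples of distinct vertices, checking the $k$ consecutive hyperedges for each and tracking the minimum total weight, for $O(k\cdot n^k)$ time. The extra detail about hashing the hyperedge set to make each membership/weight query $O(1)$ is a reasonable implementation refinement that the paper's proof leaves implicit.
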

\begin{proof}
Say we are given a hypergraph $G$ with vertex set $V$ and hyperedge set $E$.

For all choices of nodes $v_1,\ldots, v_k$ where $v_i \in V$ and $v_i \ne v_j$ if $i\ne j$ we check if they form a hypercycle and if it is a hypercycle we compute its weight. Specifically for all $i \in [1,k]$ we check if $(v_{i}, v_{(i+1 \mod{k})}, \ldots, v_{(i+k-1 \mod{k})}) \in E$ and if they all exist we sum the associated weights. We track the minimum weight we have seen so far.  This takes $O(k)$ time for each choice of nodes. There are $O(n^k)$ choices of nodes. 

This gives  a running time of $O(k n^k)$. Note that this is $O(n^k)$ if $k$ is constant. 
\end{proof}

\subsubsection{Algorithm for k at least 2u-1}
We start with the base case, which can be solved with the naive algorithm. We present an algorithm which gives a speedup if the graph is sparse\footnote{We will note without proof that a $n^{u-1} |E|$ algorithm exists for finding the minimum hypercycle if instead of keeping a $n^{2u-2}$ sized output matrix you replace that matrix with a hashtable and ask only for the minimum between those sets of nodes in $L$ and $R$ which have a path.}. 

\begin{lemma}
    The weighted \uCLR~problem has a $O(u \cdot n^{u-1}|E| + n^{2u-2})$ algorithm, where $|E|$ is the number of hyperedges in the graph. Note this is also $O(n^{2u-1})$.
    \label{lem:weighteduCLRAlg}
\end{lemma}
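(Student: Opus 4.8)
The plan is to recast \uCLR{} as a layered shortest-hyperpath problem over the $2u-1$ partitions and solve it with a single left-to-right dynamic program whose states are the valid partial hyperpaths. It helps to describe a full hyperpath as a tuple $(p_1,\dots,p_{2u-1})$ with $p_j\in V_j$ such that each of the $u$ windows $e_j=(p_j,\dots,p_{j+u-1})$, one per block of consecutive partitions $V_j,\dots,V_{j+u-1}$ for $j\in[1,u]$, is a hyperedge of $G$; these are exactly the non-wrapping windows, and the $u-1$ wrapping windows of the $(2u-1)$-circle-layered graph never matter. The value we must write into the output is $\mathrm{CLR}[(p_1,\dots,p_{u-1}),(p_{u+1},\dots,p_{2u-1})]=\min_{p_u\in V_u}\sum_{j=1}^{u}w(e_j)$. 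First I would preprocess in $O(u|E|)$ time: bucket the hyperedges of $G$ by which of the $u$ non-wrapping windows they occupy, storing each bucket in a hash table keyed by the $u$-tuple of endpoints together with its weight, and initialize the $n^{u-1}\times n^{u-1}$ output matrix to $+\infty$ in $O(n^{2u-2})$ time.

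For each $t$ with $u\le t\le 2u-1$ I would maintain a dictionary $D_t$ that maps every \emph{valid partial hyperpath} $(p_1,\dots,p_t)$ -- one for which $e_1,\dots,e_{t-u+1}$ are all hyperedges -- to its weight $\sum_{j=1}^{t-u+1}w(e_j)$; since a partial hyperpath already determines these edges, there is nothing to minimize at this level. The base case $D_u$ has exactly one entry per hyperedge on the window $V_1,\dots,V_u$. To pass from $D_t$ to $D_{t+1}$, for each entry $(p_1,\dots,p_t)$ of $D_t$ and each $p_{t+1}\in V_{t+1}$ I would test with one hash lookup whether $(p_{t-u+2},\dots,p_{t+1})$ is a hyperedge and, if so, insert $(p_1,\dots,p_{t+1})$ into $D_{t+1}$ with the updated weight. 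Finally I would sweep $D_{2u-1}$ and, for each entry $(p_1,\dots,p_{2u-1})$ of weight $W$, update $\mathrm{CLR}[(p_1,\dots,p_{u-1}),(p_{u+1},\dots,p_{2u-1})]\gets\min(\cdot,W)$, which is where the minimization over the middle vertex $p_u$ actually happens.

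Correctness is a routine induction: $D_t$ holds precisely the valid partial hyperpaths on $V_1,\dots,V_t$ with the correct weights, and the final sweep then realizes the required minimum over $V_u$. The step I expect to need the most care is the running-time analysis, which hinges on bounding $|D_t|$. In any valid partial hyperpath the last completed window $e_{t-u+1}=(p_{t-u+1},\dots,p_t)$ is an actual hyperedge, so fixing which hyperedge it is pins down the $u$ coordinates $p_{t-u+1},\dots,p_t$ and leaves only $p_1,\dots,p_{t-u}$ free, giving $|D_t|\le |E|\cdot n^{t-u}$. Hence building $D_{t+1}$ from $D_t$ examines at most $|D_t|\cdot n\le |E|\,n^{t-u+1}$ pairs, each handled in $O(u)$ time, and summing the geometric series over $t=u,\dots,2u-2$ yields $O(u|E|\,n^{u-1})$, which also dominates the preprocessing and the final sweep; the output initialization adds the separate $O(n^{2u-2})$ term. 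The total is thus $O(u\cdot n^{u-1}|E|+n^{2u-2})$, and since $|E|=O(n^{u})$ for constant $u$ this is $O(n^{2u-1})$. Apart from the $|D_t|$ bound, the bookkeeping to watch is the window indexing and weight accounting at the boundary cases $t=u$ and small $u$, and the remark that carrying the middle vertex $p_u$ through all levels does not inflate $|D_t|$, because $e_{t-u+1}$ contains partition $V_u$ for every $u\le t\le 2u-1$.
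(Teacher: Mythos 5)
Your proof is correct, but it takes a more incremental route than the paper's. The paper's algorithm is a one-shot enumeration: after initializing the $n^{u-1}\times n^{u-1}$ output to $\infty$, it pairs every $L$-tuple $(x_1,\dots,x_{u-1})$ with every hyperedge spanning $V_u,\dots,V_{2u-1}$; such a pair already pins down all $2u-1$ nodes of a candidate hyperpath, so it suffices to verify the remaining windows and sum their weights in $O(u)$ time and then do a min-update of the corresponding matrix entry, giving $O(u\cdot n^{u-1}|E|+n^{2u-2})$ directly. You instead sweep the partitions left to right, maintaining dictionaries $D_t$ of valid partial hyperpaths, and your key observation --- that the last completed window is an actual hyperedge, so $|D_t|\le |E|\cdot n^{t-u}$ --- is exactly the counting that underlies the paper's enumeration set (your $D_{2u-1}$ consists of precisely the pairs the paper iterates over). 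Both arguments are sound and give the same bound; yours requires more bookkeeping and up to $|E|\,n^{u-1}$ dictionary entries of working space (versus only the $n^{2u-2}$ output in the paper's version), but it is closer in spirit to the paper's own extension technique for the \uECLR{u}{k} problem (Lemma \ref{lem:weightedukECLRAlg}) and would extend naturally to longer layered paths, whereas the paper's version is the more economical proof for the single $(2u-1)$-layer instance. One cosmetic note: you correctly treat all $u$ non-wrapping windows of the path, which matches the unweighted formulation in Definition \ref{def: u-CLR} (the weighted display there indexes the middle windows over $i\in[2,u-2]$, an apparent off-by-one, and your reading is the intended one).
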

\begin{proof}
Recall the input is a graph $G=(V,E)$ which is a $n$-vertex $u$-uniform $(2u-1)$-circle-layered (weighted) hypergraph with vertex partitions $V_1, \cdots, V_{2u-1}$. 
Recall that $L=V_1 \times \cdots \times V_{u-1}$ and $R=V_{u+1} \times \cdots \times V_{2u-1}$ are sets of vertex partitions defined in the problem.

We first initialize the output matrix CLR$_{2u-1}(L,R)$ with all entries being $\infty$. This takes $O(n^{2u-2})$ time.  Now in $O(n^{u-1} \cdot |E|)$ time we will consider all tuples in $L$ combined with all hyperedges which span partitions $V_{u}, \ldots,V_{2u-1}$. A combination of $L$ and such a hyperedge specifies all nodes in a path, in $O(u)$ time check the length of the path. We then lookup the corresponding entry in CLR$_{2u-1}(L,R)$ and overwrite the existing entry if this new path length is shorter. 

Note that $|E| = O(n^u)$, there are only $O(n^u)$ possible sets of $u$ nodes. Thus the overall running time is $O(n^{2u-1})$.
\end{proof}

The key for this next lemma is that a minimum hyperpath of length $k$ can be computed quickly given minimum reachability information for hyperpaths of length $k-1$. Specifically, by tracking the $u-1$ nodes on the `end' of the $(k-1)$-hyperpath we can extend by one node as the only hyperedge the new node is involved in only concern the last $u-1$ nodes of the previous hyperpath. This lets us extend our path information. 
\begin{lemma}
    The weighted \uECLR{u}{k}~problem has a $O(n^{2u-2} + n^{u-1} \cdot |E|)$ algorithm. Note this running time is also $O(n^{2u-1})$.
    
    \label{lem:weightedukECLRAlg}
\end{lemma}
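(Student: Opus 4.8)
The plan is to realize the path-extension idea described just above the statement as a single pass over hyperedges. Recall from Definition~\ref{def: uk-ECLR} that the input matrix CLR$_{k-1}(L,R_{k-1})$ records, for every $\ell=(x_1,\dots,x_{u-1})\in L$ and every $r\in R_{k-1}$ (the $(u-1)$-tuples drawn from $V_{k-u+1}\times\cdots\times V_{k-1}$), the weight of a minimum-weight hyperpath through $V_1,\dots,V_{k-1}$ having those first and last $u-1$ vertices; we must output CLR$_k(L,R_k)$, where $R_k$ ranges over $(u-1)$-tuples from $V_{k-u+2}\times\cdots\times V_k$. Since $G$ is $u$-uniform and $k$-circle-layered, such a hyperpath uses exactly one vertex from each partition, and the unique hyperedge the new vertex $z_u\in V_k$ belongs to is the last one, $e=(z_1,\dots,z_u)$ with $z_i\in V_{k-u+i}$; its prefix $(z_1,\dots,z_{u-1})$ is an index into $R_{k-1}$ and its suffix $(z_2,\dots,z_u)$ is an index into $R_k$, so no ``middle'' information is ever needed for the extension.

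Concretely I would first set every entry of CLR$_k(L,R_k)$ to $\infty$, which is $O(n^{2u-2})$ for the $n^{u-1}\times n^{u-1}$ matrix. Then I would loop over all pairs $(\ell,e)$ with $\ell\in L$ and $e=(z_1,\dots,z_u)$ a hyperedge of $G$ spanning $V_{k-u+1},\dots,V_k$: set $p=(z_1,\dots,z_{u-1})$ and $s=(z_2,\dots,z_u)$, read $w_p:=\text{CLR}_{k-1}(\ell,p)$, and if $w_p<\infty$ perform $\text{CLR}_k(\ell,s)\leftarrow\min\!\big(\text{CLR}_k(\ell,s),\,w_p+w(e)\big)$. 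Each pair costs $O(u)$ work (reading the tuples, hashing them to matrix indices, one lookup, one comparison); there are $n^{u-1}$ choices of $\ell$ and at most $|E|$ choices of $e$, so this phase takes $O(u\cdot n^{u-1}|E|)$. Adding the initialization yields the claimed $O(n^{2u-2}+n^{u-1}|E|)$, and since $|E|\le\binom{n}{u}=O(n^u)$ this is also $O(n^{2u-1})$.

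For correctness I would verify the two inequalities. First, every value written into $\text{CLR}_k(\ell,s)$ equals $w_p+w(e)$ for a hyperedge $e$ completing a genuine hyperpath certified by CLR$_{k-1}$, hence is the weight of an actual $\ell$-to-$s$ hyperpath, so the algorithm never underestimates. Second, fix $\ell$ and $s=(y_1,\dots,y_{u-1})$ and let $P=(a_1,\dots,a_k)$ be a minimum-weight $\ell$-to-$s$ hyperpath, with $a_i\in V_i$, $v:=a_k$, and last hyperedge $e=(a_{k-u+1},\dots,a_k)$; deleting $v$ leaves an $\ell$-to-$p$ hyperpath with $p=(a_{k-u+1},\dots,a_{k-1})\in R_{k-1}$ of weight $w(P)-w(e)$, and this truncation is itself minimum-weight, since any lighter $\ell$-to-$p$ hyperpath followed by $e$ would beat $P$. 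Thus $w(P)=\text{CLR}_{k-1}(\ell,p)+w(e)$, a value offered by the update for the pair $(\ell,e)$, so after the pass $\text{CLR}_k(\ell,s)\le w(P)$.

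I do not expect a real obstacle: the only care needed is the index bookkeeping around the overlapping tuple sets ($R_{k-1}$ and $R_k$ share their middle $u-2$ coordinates, and the prefix/suffix of the single appended hyperedge must align with them exactly), together with the observation that $u$-uniformity forces the appended vertex to lie in no earlier hyperedge of the path. As in Lemma~\ref{lem:weighteduCLRAlg}, the weighted case only changes ``there exists'' to ``take the minimum''; unlike the unweighted algorithm of Lemma~\ref{lem: uCLR algorithm} we do not invoke fast matrix multiplication here, since the extension step is a $(\min,+)$-style combination and brute-force enumeration over hyperedges already meets the $O(n^{2u-1})$ target.
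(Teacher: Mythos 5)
Your proposal is correct and follows essentially the same route as the paper: initialize CLR$_k$ to $\infty$ in $O(n^{2u-2})$ time, then for each pair of a left tuple $\ell\in L$ and a hyperedge spanning the last $u$ partitions perform the min-update $\mathrm{CLR}_k(\ell,r_k)\leftarrow\min(\mathrm{CLR}_k(\ell,r_k),\,\mathrm{CLR}_{k-1}(\ell,r_{k-1})+w(e))$, giving $O(n^{2u-2}+n^{u-1}|E|)=O(n^{2u-1})$. Your explicit two-sided correctness check (no underestimation plus optimal substructure of the truncated path) is a slightly more detailed write-up of the same argument the paper sketches.
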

\begin{proof}
    First, note that given the information of the shortest hyperpaths from all tuples of nodes in $L=V_1 \times \cdots \times V_{u-1}$ to $R_{k-1}=V_{k-(u-1)} \times \cdots \times V_{k-1}$ the shortest hyperpath from $L$ to $R_k$ can be computed with just the hyperedges which span $V_{k-(u-1)}, \ldots, V_k$. There is optimal substructure, given that we are tracking the first and last $u-1$ nodes of the path. This is because the last new edge we want to add on to the path will depend on the last $u-1$ nodes. 

    Now, given this we want to compute CLR$_{k}$ from CLR$_{k-1}$ and the hyperedges which span $V_{k-(u-1)}, \ldots, V_k$ in $O(n^{u-1}\cdot|E| + n^{2u-2})$ time. We will call the set of hyperedges which span  $V_{k-(u-1)}, \ldots, V_k$ $E_{k}$ for convenience. 

    Initialize CLR$_{k}(L,R)$ in $O(n^{2u-2})$ time with every entry starting as $\infty$ distance. 
     For a given entry $(v_1,\ldots, v_{u-1}) \in L$ and a hyperedge $e = (v_{k-(u-1)},\ldots,v_{k}) \in E_k$ we will update CLR$_k$. For convenience we will name the left tuple $\ell = (v_1,\ldots, v_{u-1})$, and both right tuples $r_{k-1} = (v_{k-(u-1)},\ldots,v_{k-1})$ and  $r_{k} = (v_{k-(u-1)+1},\ldots,v_{k})$. We will let $w(e)$ be the weight of the edge we are considering. Given a previous value of an entry of CLR$_k$ and our new value computed from CLR$_{k-1}$ and an edge we update the new value to:
     $$CLR_{k}(\ell,r_k) \leftarrow \min \left(CLR_{k}(\ell,r_k), CLR_{k-1}(\ell,r_{k-1}) + w(e) \right).$$
     After performing these updates for every edge in $E_k$  $CLR_{k}$ will be populated with the lengths of the shortest paths from $L$ to $R_k$. 

     The total running time includes $O(n^{2u-2})$ for initializing the output and $O(n^{u-1} \cdot |E|)$ for updating for every $\ell \in L$ and $e \in E_k$. This gives a running time of $O(n^{2u-2} + n^{u-1} \cdot |E|)$. Because $|E| = o(n^u)$ we can bound this as $O(n^{2u-1})$.
\end{proof}

Now we will use these split up problems solve the hypercycle problem. Note that the problems as they have been split are basically a dynamic programming setup.
\begin{lemma}
    The minimum $k$-hypercycle problem in a $u$-uniform graph has a  $O(k^k \cdot u \cdot k \cdot (n^{u-1}|E| + n^{2u-2}))$ algorithm $k \geq 2u-1$ which succeeds with probability at least $2/3$. Note that $|E| = O(n^u)$ and thus the run time is also $O(k^k \cdot u \cdot k \cdot n^{2u-1})$
    \label{lem:longweightedkhypercyclealg}
\end{lemma}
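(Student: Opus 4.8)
The plan is to combine color-coding with the dynamic-programming scheme underlying Lemma~\ref{lem: dp-like algorithm}, but instantiated with the \emph{weighted} subroutines of Lemmas~\ref{lem:weighteduCLRAlg} and~\ref{lem:weightedukECLRAlg}. First I would apply the color-coding reduction of \cite[Lemma~2.2]{LVW18}: color each of the $n$ vertices of $G$ independently and uniformly with a color in $[k]$, discard every hyperedge whose set of colors is not a block of $u$ cyclically-consecutive residues, and treat color $i$ as the partition $V_i$. This produces a $u$-uniform $k$-circle-layered hypergraph $G'$, and a fixed $k$-hypercycle of $G$ becomes a (rainbow) $k$-hypercycle of $G'$ with probability at least $k\cdot k^{-k}$ (one for each of its $k$ cyclic color rotations). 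I would repeat this independently $O(k^{k})$ times, compute the minimum weight of a $k$-hypercycle in each $G'$, and output the smallest value seen. This has error at most $1/3$: every hypercycle surviving some coloring is a genuine hypercycle of $G$, so the reported value is never smaller than the true optimum, while the true minimum-weight hypercycle survives at least one of the colorings except with probability at most $1/3$.

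Inside one repetition I would solve minimum $k$-hypercycle on the $u$-uniform $k$-circle-layered hypergraph $G'$ exactly as in the proof of Lemma~\ref{lem: dp-like algorithm}, but carrying minimum path weights instead of reachability bits. Run the weighted \uCLR~algorithm of Lemma~\ref{lem:weighteduCLRAlg} to obtain $\mathrm{CLR}_{2u-1}(L,R_{2u-1})$, the matrix of minimum hyperpath weights from $L=V_1\times\cdots\times V_{u-1}$ to $R_{2u-1}=V_{u+1}\times\cdots\times V_{2u-1}$, in time $O(u\,n^{u-1}|E| + n^{2u-2})$. Then, for $i=2u,\dots,k$, use the weighted \uECLR{u}{k}~algorithm of Lemma~\ref{lem:weightedukECLRAlg} to extend $\mathrm{CLR}_{i-1}(L,R_{i-1})$ to $\mathrm{CLR}_{i}(L,R_{i})$; optimal substructure holds because the single hyperedge added at step $i$ only touches the last $u-1$ partitions, and each step costs $O(n^{u-1}|E| + n^{2u-2})$ with $k-2u+1 = O(k)$ steps in total. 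Finally, ``close the cycle'': for every $(x_1,\dots,x_{u-1})\in L$ and $(y_1,\dots,y_{u-1})\in R_k$ whose $\mathrm{CLR}_k$ entry is finite, check whether the $u-1$ wrap-around hyperedges $(y_1,\dots,y_{u-1},x_1),\,(y_2,\dots,y_{u-1},x_1,x_2),\,\dots,\,(y_{u-1},x_1,\dots,x_{u-1})$ all lie in $E(G')$; if so, the candidate cycle has weight equal to the $\mathrm{CLR}_k$ entry plus the sum of these $u-1$ hyperedge weights. The minimum over all $n^{2(u-1)}$ such pairs is the answer for this repetition, computed in $O(u\,n^{2(u-1)})$ time; the vertices of any candidate are automatically distinct since they come one from each partition.

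Summing the cost of a repetition gives $O(u\,n^{u-1}|E| + n^{2u-2}) + O(k)\cdot O(n^{u-1}|E| + n^{2u-2}) + O(u\,n^{2(u-1)}) = O\big((u+k)(n^{u-1}|E| + n^{2u-2})\big)$, which is $O(u\,k\,(n^{u-1}|E| + n^{2u-2}))$ since $k\ge 2u-1 > u \ge 2$; multiplying by the $O(k^{k})$ repetitions yields the claimed $O(k^{k}\,u\,k\,(n^{u-1}|E| + n^{2u-2}))$, and $|E| = O(n^{u})$ turns this into $O(k^{k}\,u\,k\,n^{2u-1})$. I do not expect a genuine obstacle here: the statement is essentially the assembly of the two weighted reachability subroutines with a color-coding wrapper. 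The two points that need care are (i) checking that minimizing over the $O(k^{k})$ colorings recovers the true minimum \emph{weight} and not merely detection — which holds because a surviving hypercycle is always real while the optimum survives some coloring with good probability — and (ii) bookkeeping the $k$ hyperedges of the cycle so each is counted exactly once: the $k-u+1$ ``forward'' hyperedges are accumulated inside the $\mathrm{CLR}$ matrices, and the remaining $u-1$ ``wrap-around'' hyperedges are added only in the closing step.
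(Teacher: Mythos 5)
Your proposal is correct and follows essentially the same route as the paper: the same color-coding wrapper with $O(k^k)$ repetitions, the same use of the weighted \uCLR~and \uECLR{u}{k}~subroutines (Lemmas \ref{lem:weighteduCLRAlg} and \ref{lem:weightedukECLRAlg}) to build $\mathrm{CLR}_k(L,R_k)$ layer by layer, and the same cycle-closing step adding the $u-1$ wrap-around hyperedge weights over all $n^{2(u-1)}$ boundary pairs. Your explicit one-sided-error argument for taking the minimum over colorings is a slightly more careful statement of what the paper asserts implicitly, but the substance and runtime accounting are identical.
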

\begin{proof}
    The first step we will need to do is color-coding. We are given a graph with no layers and will turn the graph into a layered graph by randomly partitioning. We will take the minimum cycle we return over all possible partitionings. 

    The procedure works as follows, we take each node in our graph and randomly assign it to one of the partitions $V_1, \ldots, V_k$. We only include hyperedges that have exactly one node from $V_i, \ldots, V_{i+u \mod u}$ (that is we only include the circle edges). Note that we are simply deleting hypercycles, any hypercycle that exists in this graph also existed in the original graph. Further note that if a particular hypercycle exists with nodes $a_1 \rightarrow a_2 \rightarrow \cdots \rightarrow a_k$ then if we happen to assign $a_i$ to $V_i$ the hypercycle will exist in our new graph. So, with probability at least $k^{-k}$ such a cycle exists. If we repeat this procedure $10 + 10 \cdot k^{-k}$ times the chance we successfully include the minimum cycle will be at least $2/3$.

    First we will show that we can compute CLR$_{k}(L,R)$ in $O(k \cdot u \cdot (n^{u-1}|E| + n^{2u-2}))$ time. To compute CLR$_{2u-1}(L,R)$ we simply use the algorithm from Lemma \ref{lem:weighteduCLRAlg} which takes $O(u \cdot  (n^{u-1}|E| + n^{2u-2}))$ time. From there given CLR$_{i}(L,R)$ we can compute CLR$_{i+1}(L,R)$ in time $O(u \cdot n^{2u-1})$ using the algorithm from Lemma \ref{lem:weightedukECLRAlg}. We need to run this extension procedure $k-2u-1$ times which is $O(k)$ giving us a bound of $O(k \cdot u \cdot (n^{u-1}|E| + n^{2u-2}))$ time. 

    Given CLR$_{k}(L,R)$ we will now find the minimum hypercycle. Note that the minimum hypercycle which starts on nodes $v_1,\ldots, v_{u-1}$ and ends on nodes $v_{k-u+2}, \ldots, v_k$ has a length of the minimum hyperpath which starts and ends on those same nodes plus the $u-1$ edges which span between partitions $V_{k-u+2}$ to $V_{u-1}$. Which means the minimum hypercycle starting on nodes $v_1,\ldots, v_{u-1}$ and ends on nodes $v_{k-u+2}, \ldots, v_k$ has length 
    $$CLR_{k}(L,R)[v_1,\ldots, v_{u-1},v_{k-u+2}, \ldots, v_k] + \sum_{i \in [0,u-2]} (v_{u-1-i}, \ldots, v_{1}, v_{k}, \ldots, v_{k-i}).$$
    Note that given CLR$_{k}(L,R)$ this computation takes time $O(n^{2u-2}\cdot u)$ to complete for all possible choices of nodes $v_1,\ldots, v_{u-1},v_{k-u+2}, \ldots, v_k$. We can track the minimum cycle over all of these options and return that value. 

    In total for each random color coding we take time $O(u \cdot  (n^{u-1}|E| + n^{2u-2}) + k \cdot u \cdot (n^{u-1}|E| + n^{2u-2}) + n^{2u-2}\cdot u)$ which is $O(k \cdot u \cdot (n^{u-1}|E| + n^{2u-2}))$ time. If we return the minimum cycle length we returned over all $10 + 10 \cdot k^k$ random color codings we will return the minimum cycle length in the original graph with probability at least $2/3$. This gives a total time of $O(k^k \cdot k \cdot u \cdot (n^{u-1}|E| + n^{2u-2}))$.
\end{proof}

\subsubsection{Putting Both Algorithms Together}
We can now do the very simple operation of using the faster algorithm in the appropriate situation. 

\begin{theorem}
    An algorithm for finding the minimum $k$-hypercycle in a $u$-uniform hypergraph with $n$ nodes with probability at least $2/3$ exists which runs in $O\left(\min(n^k, n^{2u-1})\right)$ time for constant $k$.
\end{theorem}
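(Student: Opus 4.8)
The plan is a trivial case split on the relationship between $k$ and $u$, invoking the two algorithms already built in this section. Note first that since a $k$-hypercycle in a $u$-uniform hypergraph requires $u \le k$, declaring $k$ constant makes $u$ constant as well, so any multiplicative factor depending only on $k$ and $u$ (in particular the $k^k$ term) is absorbed into the $O(\cdot)$.

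\emph{Case $k \le 2u-2$.} I would run the naive algorithm of Lemma~\ref{lem:shortweightedkhypercyclealg}, which is deterministic and runs in $O(k\cdot n^k)$ time; for constant $k$ this is $O(n^k)$. Since $k < 2u-1$ we have $n^k \le n^{2u-1}$, hence $\min(n^k,n^{2u-1}) = n^k$ and the running time is $O(\min(n^k,n^{2u-1}))$. The success probability is $1 \ge 2/3$ since the algorithm is exact.

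\emph{Case $k \ge 2u-1$.} I would run the randomized algorithm of Lemma~\ref{lem:longweightedkhypercyclealg}, which succeeds with probability at least $2/3$ and runs in time $O(k^k \cdot u \cdot k \cdot (n^{u-1}|E| + n^{2u-2}))$. Using $|E| = O(n^u)$ this is $O(k^k \cdot u \cdot k \cdot n^{2u-1})$, which for constant $k$ (and hence constant $u$) is $O(n^{2u-1})$. Since $k \ge 2u-1$ we have $n^{2u-1} \le n^k$, so $\min(n^k,n^{2u-1}) = n^{2u-1}$ and the running time is again $O(\min(n^k,n^{2u-1}))$.

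In both cases the algorithm runs in $O(\min(n^k,n^{2u-1}))$ time and succeeds with probability at least $2/3$, which is exactly the claim. There is no real obstacle; the only point worth stating carefully is the absorption of the $k$- and $u$-dependent constants into the $O(\cdot)$ once $k$ is declared constant, together with the bound $|E| = O(n^u)$ used to convert the edge-sensitive running time of Lemma~\ref{lem:longweightedkhypercyclealg} into an $n^{2u-1}$ bound.
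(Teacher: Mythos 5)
Your proposal is correct and follows essentially the same route as the paper: a case split at $k \geq 2u-1$, applying Lemma~\ref{lem:shortweightedkhypercyclealg} for shorter cycles and Lemma~\ref{lem:longweightedkhypercyclealg} for longer ones, with the $k$- and $u$-dependent factors absorbed since constant $k$ forces constant $u$. Your additional remarks verifying that each case indeed realizes the minimum of $n^k$ and $n^{2u-1}$ are a slightly more careful version of the same argument.
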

\begin{proof}
    For $k<2u-1$ we apply Lemma \ref{lem:shortweightedkhypercyclealg} to get a $O(n^k)$ algorithm. 

    Note that if $k$ is constant then so is $u$.
    For $k\geq 2u-1$ we apply Lemma \ref{lem:longweightedkhypercyclealg} getting a $O(n^{2u-1})$ algorithm. 
\end{proof}

\subsection{Lower-Bounds for Minimum Hypercycle}

\begin{corollary}
    Let $G$ be a $2$-uniform hypergraph on $n$ vertices $V$, partitioned into $k$ parts $V_1,\ldots,V_k$. 
    
    Let $\gamma_2(k)=k-\lceil k/2\rceil +1$.
    
    In $O(n^{\gamma_2(k)})$ time we can create a $\gamma_2(k)$-uniform hypergraph $G'$ on the same node set $V$ as $G$, so that $G'$ contains an $k$-hypercycle if and only if $G$ contains an $k$-hyperclique with one node from each $V_i$.
    
    If $G$ has weights on its hyperedges in the range $[-W,W]$, then one can also assign weights to the hyperedges of $G'$ so that a minimum weight $k$-hypercycle in $G'$ corresponds to a minimum weight $k$-hyperclique in $G$ and every edge in the hyperclique has weight between $[-\binom{\gamma_2(k)}{2}W,\binom{\gamma_2(k)}{2}W]$. Notably, $\binom{\gamma_2(k)}{2}\leq O(k^2)$.
\end{corollary}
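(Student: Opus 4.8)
The plan is to obtain this statement as a direct instantiation of Theorem~\ref{thm:hclique_to_hcycle} (Theorem~3.1 of~\cite{LVW18}) with $u = 2$. That theorem is stated for an arbitrary uniformity $u$: it turns a $u$-uniform $k$-partite hypergraph into a $\gamma_u(k)$-uniform hypergraph on the same vertex set, where $\gamma_u(k) = k - \lceil k/u \rceil + 1$, and its correctness only relies on $\gamma_u(k)$ being the smallest uniformity for which a hyperedge spanning $\gamma_u(k)$ consecutive partitions of a circle-layered hypergraph necessarily touches every $u$-subset of partitions. Nothing in that argument uses $u > 2$, so the first step is to check that substituting $u = 2$ is legitimate: here $\gamma_2(k) = k - \lceil k/2 \rceil + 1 = \lfloor k/2 \rfloor + 1$, which is at least $2$ whenever $k \geq 2$, so $G'$ is a genuine $\gamma_2(k)$-uniform hypergraph, and the $u=2$ instance of the reduction simply identifies ``$k$-hyperclique in a $2$-uniform hypergraph'' with the ordinary $k$-clique problem in a graph.

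Next I would transcribe the two conclusions of Theorem~\ref{thm:hclique_to_hcycle} at $u=2$: (i) in $O(n^{\gamma_2(k)})$ time the construction produces $G'$ on the same vertex set $V$ as $G$ so that $G'$ has a $k$-hypercycle if and only if $G$ has a $k$-clique with one vertex from each $V_i$; and (ii) in the weighted case, with hyperedge weights of $G$ in $[-W,W]$, the induced weights on $G'$ make a minimum-weight $k$-hypercycle of $G'$ correspond to a minimum-weight $k$-clique of $G$, with each clique edge receiving weight in $[-\binom{\gamma_2(k)}{2}W,\binom{\gamma_2(k)}{2}W]$. I would also explicitly note that the reduction preserves the vertex count, since that is the feature used downstream to derive the min-weight hypercycle lower bound from the min-weight $k$-clique hypothesis.

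Finally, the only quantitative point is the bound $\binom{\gamma_2(k)}{2} \leq O(k^2)$, which is immediate from $\gamma_2(k) = \lfloor k/2 \rfloor + 1 \leq k$ for $k \geq 2$, hence $\binom{\gamma_2(k)}{2} \leq \binom{k}{2} = O(k^2)$. Because the whole statement is a specialization of an already-established theorem, I do not expect a real obstacle; the only thing that needs a moment's care is confirming that the $u = 2$ case of the hyperclique-to-hypercycle reduction degenerates correctly into a clique-to-hypercycle reduction, i.e. that the ``circle edge'' machinery of~\cite{LVW18} still makes sense when the starting hypergraph is just an ordinary graph.
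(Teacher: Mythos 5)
Your proposal is correct and matches the paper's own proof, which is exactly the one-line instantiation of Theorem~\ref{thm:hclique_to_hcycle} at $u=2$ (the paper writes simply ``plugging in $u=2$''). The extra checks you perform --- that $\gamma_2(k)=\lfloor k/2\rfloor+1\geq 2$, that the $2$-uniform ``hyperclique'' is an ordinary $k$-clique, and that $\binom{\gamma_2(k)}{2}=O(k^2)$ --- are harmless elaborations of the same argument.
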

\begin{proof}
Simply plugging in $u=2$ to Theorem \ref{thm:hclique_to_hcycle} (originally from \cite{LVW18}).
\end{proof}

\subsubsection{Odd Cycle Lengths}
We can now consider what value of $k$ corresponds to a given value of $u$. That is, if $u = k- \lceil k/2 \rceil +1$ given $u$ what is the $k$ we should consider? Consider $k=2\ell +1$ then note that $u = 2\ell+1 - \ell -1+1 = \ell +1$. So, for a given $u$ we should consider $\ell = u-1$ which corresponds to $k = 2u-2+1 =  2u-1$.

\begin{corollary}
    Let $G$ be a $2$-uniform hypergraph on $n$ vertices $V$, partitioned into $k = 2u-1$ parts $V_1,\ldots,V_{2u-1}$. Note that $u=k-\lceil k/2\rceil +1$.
    
    In $O(n^{u})$ time we can create a $u$-uniform hypergraph $G'$ on the same node set $V$ as $G$, so that $G'$ contains an $(2u-1)$-hypercycle if and only if $G$ contains an $k$-hyperclique with one node from each $V_i$.
    
    If $G$ has weights on its hyperedges in the range $[-W,W]$, then one can also assign weights to the hyperedges of $G'$ so that a minimum weight $k$-hypercycle in $G'$ corresponds to a minimum weight $(2u-1)$-hyperclique in $G$ and every edge in the hyperclique has weight between $[-\binom{u}{2}W,\binom{u}{2}W]$. Notably, $\binom{u}{2}\leq O(u^2)$.
    \label{cor:reverseUandKMin}
\end{corollary}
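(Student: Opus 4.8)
The plan is to obtain this corollary as a direct instantiation of Theorem~\ref{thm:hclique_to_hcycle} (equivalently, of the preceding $u=2$ corollary), specialized to the hypercycle length $k = 2u-1$. The only substantive thing to verify is that this choice of $k$ makes the output uniformity of the \cite{LVW18} reduction equal to exactly $u$; everything else is a transcription of the general statement with the parameters plugged in.

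First I would compute $\gamma_2(2u-1)$. Writing $k = 2u-1$, which is odd for every integer $u$, we have $\lceil k/2 \rceil = \lceil (2u-1)/2 \rceil = u$, so $\gamma_2(k) = k - \lceil k/2 \rceil + 1 = (2u-1) - u + 1 = u$. Hence applying Theorem~\ref{thm:hclique_to_hcycle} to a $2$-uniform input (an ordinary graph) partitioned into $k = 2u-1$ parts $V_1,\ldots,V_{2u-1}$ produces, in time $O(n^{\gamma_2(k)}) = O(n^u)$, a $\gamma_2(k)$-uniform, i.e.\ $u$-uniform, hypergraph $G'$ on the same vertex set such that $G'$ contains a $(2u-1)$-hypercycle if and only if $G$ contains a $(2u-1)$-hyperclique with one node from each $V_i$. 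This gives the first two assertions of the corollary, together with the remark that $u = k - \lceil k/2\rceil + 1$.

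Next I would carry over the weighted part of Theorem~\ref{thm:hclique_to_hcycle}: if the input graph has edge weights in $[-W,W]$, then $G'$ admits weights for which a minimum-weight $k$-hypercycle in $G'$ corresponds to a minimum-weight $k$-hyperclique in $G$, with every clique edge having weight in $[-\binom{\gamma_2(k)}{2}W, \binom{\gamma_2(k)}{2}W]$. Substituting $\gamma_2(k) = u$ turns this range into $[-\binom{u}{2}W, \binom{u}{2}W]$, and $\binom{u}{2} = u(u-1)/2 = O(u^2)$ gives the final bound $\binom{u}{2} \le O(u^2)$. Assembling these pieces yields the corollary exactly as stated.

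Since the whole argument is a parameter substitution into an already-established theorem, there is no real obstacle. The only points requiring a moment of care are the ceiling identity $\lceil (2u-1)/2 \rceil = u$ (which relies on $2u-1$ being odd) and confirming that the ``one node per part'' structure, the $O(n^u)$ construction time, and the weight-preservation correspondence all transfer verbatim from the general statement with no loss.
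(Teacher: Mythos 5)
Your proposal is correct and matches the paper's proof, which likewise obtains the corollary by plugging $k=2u-1$ into Theorem~\ref{thm:hclique_to_hcycle} (with $u=2$ input uniformity) and noting that $\gamma_2(2u-1)=u$. Your verification of the ceiling identity and the parameter substitutions simply spells out the same one-line argument in more detail.
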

\begin{proof}
We simply plug in $k = 2u-1$ and this makes $\gamma_2(k)=u$.
\end{proof}

This gives an immediate implication 

\begin{corollary}
The minimum $(2u-1,u)$-hypercycle problem requires $n^{2u-1-o(1)}$ time if the minimum $(2u-1)$-clique hypothesis holds. 
\label{cor:min2u-1ishard}
\end{corollary}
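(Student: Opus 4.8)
The plan is to prove this by a direct reduction from the minimum $(2u-1)$-clique problem, using Corollary~\ref{cor:reverseUandKMin} as the core gadget, and then invoking the minimum $k$-clique hypothesis (Definition~\ref{def:minClique}) with $k = 2u-1$.

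First I would reduce an arbitrary instance of minimum $(2u-1)$-clique on an $n$-node weighted graph to the $(2u-1)$-partite version: make $2u-1$ disjoint copies $V_1, \ldots, V_{2u-1}$ of the vertex set, keep an edge between copies in different parts exactly when the corresponding edge is present in the input graph (with the same weight), and omit edges that would force two parts to share an original vertex. A minimum-weight clique with one vertex per part in this blown-up graph corresponds to a minimum-weight $(2u-1)$-clique in the original; this costs a constant factor $2u-1$ in the number of vertices and $O(n^2)$ time, which is negligible next to $n^{2u-1}$.

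Next I would apply Corollary~\ref{cor:reverseUandKMin} to the $(2u-1)$-partite instance. Since that corollary is exactly the $k = 2u-1$ case where $\gamma_2(2u-1) = u$, in $O(n^u)$ time it produces a $u$-uniform hypergraph $G'$ on the same vertex set whose minimum-weight $(2u-1)$-hypercycle has weight equal to the minimum-weight $(2u-1)$-clique of the input. If the input weights lie in $[1, n^{c(2u-1)}]$ as in Definition~\ref{def:minClique}, the hypercycle weights lie in a range of magnitude $O(u^2 \cdot n^{c(2u-1)})$, still polynomial in $n$ for constant $u$, and a uniform additive shift of $\binom{u}{2} n^{c(2u-1)}$ makes them non-negative, so $O(\log n)$-bit words still suffice. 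This weight bookkeeping is the only slightly fussy part, and it is handled entirely by the quantitative bounds already built into Theorem~\ref{thm:hclique_to_hcycle} / Corollary~\ref{cor:reverseUandKMin}.

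Finally, suppose toward a contradiction that minimum $(2u-1, u)$-hypercycle admits an algorithm running in $O(n^{2u-1-\varepsilon})$ time for some $\varepsilon > 0$. Composing the two steps above with this algorithm solves minimum $(2u-1)$-clique on $n$-node graphs in time $O(n^2 + n^u + n^{2u-1-\varepsilon}) = O(n^{2u-1-\varepsilon'})$, using $u \le 2u-1$. That contradicts the minimum $(2u-1)$-clique hypothesis, which asserts $n^{2u-1-o(1)}$ time is required; hence no such algorithm exists. The main thing to be careful about (rather than a genuine obstacle) is that the reduction preserves the \emph{optimum} and not merely feasibility, and that the weight blow-up stays polynomial — both of which are immediate from the weighted guarantee in Corollary~\ref{cor:reverseUandKMin}.
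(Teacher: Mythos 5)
Your proposal is correct and follows essentially the same route as the paper: apply Corollary~\ref{cor:reverseUandKMin} (the $k=2u-1$, $\gamma_2(k)=u$ specialization of the weighted hyperclique-to-hypercycle reduction), note the $O(n^u)$ reduction cost is dominated by $n^{2u-1}$, and conclude by contradiction with the minimum $(2u-1)$-clique hypothesis. The extra steps you spell out — making the clique instance $(2u-1)$-partite via vertex copies and checking the polynomial weight blow-up — are details the paper leaves implicit, handled exactly as you describe.
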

\begin{proof}
    The minimum $(2u-1)$-clique hypothesis states that finding the weight of the minimum $(2u-1)$-clique requires $n^{2u-1-o(1)}$ time. Using the reduction from Corollary \ref{cor:reverseUandKMin} which takes $O(n^{u})$ time we can show that a $T(n)$ time algorithm for the minimum $(2u-1)$-hypercycle problem produces a $O(T(n) + n^u)$ time algorithm for minimum $(2u-1)$-clique. So the minimum $(2u-1)$-hypercycle problem requires $n^{2u-1-o(1)}$ time.
\end{proof}

\subsubsection{Even Cycle Lengths}

\begin{corollary}
    \label{cor: reverseUandKMineven}
    Let $G$ be a $2$-uniform hypergraph on $n$ vertices $V$, partitioned into $k = 2u-2$ parts $V_1,\ldots,V_{2u-2}$. Note that $u=2(u-1)-\lceil 2(u-1)/2\rceil +1 = u-1+1$.
    
    In $O(n^{u})$ time we can create a $u$-uniform hypergraph $G'$ on the same node set $V$ as $G$, so that $G'$ contains an $(2u-2)$-hypercycle if and only if $G$ contains an $k$-hyperclique with one node from each $V_i$.
    
    If $G$ has weights on its hyperedges in the range $[-W,W]$, then one can also assign weights to the hyperedges of $G'$ so that a minimum weight $k$-hypercycle in $G'$ corresponds to a minimum weight $(2u-1)$-hyperclique in $G$ and every edge in the hyperclique has weight between $[-\binom{u}{2}W,\binom{u}{2}W]$. Notably, $\binom{u}{2}\leq O(u^2)$.
\end{corollary}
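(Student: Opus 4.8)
The plan is to derive this as the $u=2$ specialization of Theorem~\ref{thm:hclique_to_hcycle}, exactly as was done for the odd case in Corollary~\ref{cor:reverseUandKMin}; the only real work is choosing the right hypercycle length and checking one arithmetic identity. In the notation of Theorem~\ref{thm:hclique_to_hcycle} we start from a $2$-uniform hypergraph (a graph) whose hyperclique/hypercycle length we call $k$, and the theorem produces a $\gamma_2(k)$-uniform hypergraph, where $\gamma_2(k) = k - \lceil k/2\rceil + 1$. We want the output uniformity to equal the parameter called $u$ in the corollary statement, so I would set $k = 2u-2$ and verify $\gamma_2(2u-2) = (2u-2) - \lceil (2u-2)/2\rceil + 1 = (2u-2) - (u-1) + 1 = u$, which is precisely the identity already noted parenthetically in the statement.

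With this choice, Theorem~\ref{thm:hclique_to_hcycle} (equivalently, the $u=2$ corollary stated at the start of this subsection) immediately gives everything claimed: the construction runs in $O(n^{\gamma_2(k)}) = O(n^{u})$ time, preserves the vertex set $V$, and produces $G'$ containing a $(2u-2)$-hypercycle iff $G$ contains a $(2u-2)$-hyperclique with one vertex per part $V_i$. For the weighted version, the same theorem's weight guarantee applies with $\gamma_2(k) = u$, so the hyperedge weights of $G'$ lie in $[-\binom{u}{2}W, \binom{u}{2}W]$, and $\binom{\gamma_2(k)}{2} = \binom{u}{2} = O(u^2)$.

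I do not expect any genuine obstacle: the proof is a one-line substitution into an already-established reduction. The one thing to watch is the overloaded symbol $u$, which denotes the constant $2$ \emph{inside} Theorem~\ref{thm:hclique_to_hcycle} but the \emph{output} uniformity in the corollary; for the same reason I would double-check that the weight clause consistently refers to the $(2u-2)$-hyperclique (rather than the $(2u-1)$-hyperclique) so that the even- and odd-length statements match. Once the identity $\gamma_2(2u-2)=u$ is recorded, the corollary follows, and — combined with the min-weight $k$-clique hypothesis applied at $k = 2u-2$, exactly as in Corollary~\ref{cor:min2u-1ishard} — it yields the matching lower bound for even-length minimum hypercycles that feeds into Figure~\ref{fig:weightedHypercycleBounds}.
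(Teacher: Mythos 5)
Your proposal is correct and matches the paper's proof, which is exactly the one-line substitution $k=2u-2$ into Theorem~\ref{thm:hclique_to_hcycle} (with the source graph being $2$-uniform) giving $\gamma_2(2u-2)=u$. Your side remark is also right that the ``$(2u-1)$-hyperclique'' in the weight clause of the statement should read ``$(2u-2)$-hyperclique''; that is a typo in the corollary statement, not a gap in your argument.
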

\begin{proof}
We simply plug in $k = 2u-2$ and this makes $\gamma_2(k)=u$.
\end{proof}

This gives an immediate implication 

\begin{corollary}
The minimum $(2u-2)$-hypercycle problem requires $n^{2u-2-o(1)}$ time if the minimum $(2u-2)$-clique hypothesis holds in a graph with uniformity $u$. 
\label{cor:min2u-1ishardEven}
\end{corollary}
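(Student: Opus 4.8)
The plan is to mirror the argument used for the odd case in Corollary~\ref{cor:min2u-1ishard}, now relying on the even-length reduction of Corollary~\ref{cor: reverseUandKMineven}. First I would unpack the minimum $(2u-2)$-clique hypothesis (an instance of Definition~\ref{def:minClique} with $k = 2u-2$): finding a $(2u-2)$-clique of minimum total edge weight in an $n$-node graph with non-negative integer edge weights in $[1,n^{c(2u-2)}]$ requires $n^{2u-2-o(1)}$ time on a Word-RAM with $O(\log n)$-bit words.

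Next I would invoke Corollary~\ref{cor: reverseUandKMineven}: given such a weighted $2$-uniform graph $G$ on $n$ vertices, partition it into $2u-2$ parts $V_1,\ldots,V_{2u-2}$ (this is the color-coding / partitioning step, harmless up to polylog overhead and the $k^k$ factor already discussed for the $k$-circle-layered setting, or it can be folded into standard color-coding), and build in $O(n^u)$ time a $u$-uniform hypergraph $G'$ on the same vertex set whose minimum-weight $(2u-2)$-hypercycle has weight equal to the minimum-weight $(2u-2)$-hyperclique of $G$ with one node per part. I would note that the weight blowup is only a factor of $\binom{u}{2} = O(u^2)$, so since $u$ is constant the new weights still fit in $O(\log n)$-bit words, and the reduction is therefore legitimate on the Word-RAM.

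Combining these: a hypothetical $T(n)$-time algorithm for minimum $(2u-2,u)$-hypercycle yields a $O(T(n)+n^u)$-time algorithm for minimum $(2u-2)$-clique. Since $2u-2 \ge u$ for all $u \ge 2$, we have $n^u = O(n^{2u-2})$, so the additive reduction cost is subsumed; hence $T(n) = o(n^{2u-2-\varepsilon})$ for some $\varepsilon>0$ would contradict the minimum $(2u-2)$-clique hypothesis. Therefore $T(n) = n^{2u-2-o(1)}$, which is the claim.

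\textbf{Main obstacle.} There is essentially no deep obstacle here — the content is entirely in Corollary~\ref{cor: reverseUandKMineven} (itself a specialization of Theorem~\ref{thm:hclique_to_hcycle}). The only points requiring a moment of care are (i) verifying $2u-2 \ge u$ so that the $O(n^u)$ reduction time does not dominate $n^{2u-2}$ (true for $u \ge 2$, and one should perhaps state $u \ge 3$ to keep the clique/hypercycle nontrivial), and (ii) confirming the polynomial weight blowup keeps everything within the $O(\log n)$-bit word model; both are routine.
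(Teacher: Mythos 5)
Your proposal is correct and follows essentially the same route as the paper: apply the even-length specialization (Corollary~\ref{cor: reverseUandKMineven}) of the hyperclique-to-hypercycle reduction, note the $O(n^u)$ reduction cost is dominated by $n^{2u-2}$, and conclude a contradiction with the minimum $(2u-2)$-clique hypothesis. Your extra remarks on the weight blowup and word size are harmless elaborations of details the paper leaves implicit.
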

\begin{proof}
    The minimum $(2u-2)$-clique hypothesis states that finding the weight of the minimum $(2u-2)$-clique requires $n^{2u-2-o(1)}$ time. Using the reduction from Corollary \ref{cor: reverseUandKMineven} which takes $O(n^{u})$ time we can show that a $T(n)$ time algorithm for the minimum $(2u-2)$-hypercycle problem produces a $O(T(n) + n^u)$ time algorithm for minimum $(2u-2)$-clique. So the minimum $(2u-2)$-hypercycle problem requires $n^{2u-2-o(1)}$ time.
\end{proof}

\subsubsection{Combining to Produce Lower Bound}
To prove the same statement for shorter cycles we will use the following strategy. Consider, for example $k = 2u-3$. We can show hypercycles of this length are hard when the uniformity is $u-1$. We will show that if the problem is hard with smaller uniformity then it is hard for larger uniformity. Together this will let us show hardness for smaller $k$ for uniformity $u$.

The intuition for this next lemma is that a $u' > u$ hypercycle is more constraining than a $u$ hypercycle. Notably we have edges which are `longer'. So, we can simply include all possible extensions of edges to compute the original problem. We use color-coding to make our graph $k$-circle-layered to make the proof more straightforward. 

\begin{lemma}
Let $k^k = n^{o(1)}$.
If the $k$-hypercycle problem is $n^{k-o(1)}$ hard for some uniformity $u < k$ then the $k$-hypercycle problem is  $n^{k-o(1)}$ hard for all uniformities $u'$ where $2u' > k > u' \geq u$.
\label{lem:growUniformityFree}
\end{lemma}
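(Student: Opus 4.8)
The plan is to prove this as the weighted analog of Lemma~\ref{lem:uniform self reduction}: exhibit a reduction that turns a $u$-uniform instance into a $u'$-uniform instance of (min-weight) $k$-hypercycle with the same optimal hypercycle weight, so that any $O(n^{k-\varepsilon})$ algorithm for $u'$-hypercycle yields one for $u$-hypercycle. Since the hypothesis gives hardness of the general $k$-hypercycle problem at uniformity $u$, I would first use color-coding (as in \cite[Lemma~2.2]{LVW18}, mirroring the proof of Lemma~\ref{lem:longweightedkhypercyclealg}) to reduce the worst-case $u$-hypercycle problem to the $u$-hypercycle problem on $k$-circle-layered hypergraphs, paying only the $k^k=n^{o(1)}$ overhead permitted by the hypothesis and tracking the minimum weight over the $\tilde{O}(k^k)$ random colorings.

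The core step is the extension reduction. Given a $k$-circle-layered $u$-uniform weighted hypergraph $G$ with partitions $V_1,\dots,V_k$, build $G'$ on the same vertex set by replacing every hyperedge $(v_i,\dots,v_{i+u-1})$ (which spans the consecutive window $V_i,\dots,V_{i+u-1}$) with all $n^{u'-u}$ extensions $(v_i,\dots,v_{i+u-1},y_{i+u},\dots,y_{i+u'-1})$ over choices $y_{i+j}\in V_{i+j}$, assigning each extension the weight of its base edge $(v_i,\dots,v_{i+u-1})$. This keeps $G'$ $k$-circle-layered of uniformity $u'$, is computable in $O(n^{u'})$ time, and is well-defined because in a circle-layered graph a hyperedge spans a unique window of consecutive partitions; it maintains the invariant that the first $u$ vertices (ordered by partition index) of every hyperedge of $G'$ form a hyperedge of $G$. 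For the forward direction, a $k$-hypercycle $v_1,\dots,v_k$ of $G$ survives in $G'$ since each $u'$-hyperedge $(v_i,\dots,v_{i+u'-1})$ is an extension of the $u$-hyperedge $(v_i,\dots,v_{i+u-1})$, and summing weights over $i\in[k]$ reproduces exactly the weight of the original hypercycle. For the backward direction I would invoke Lemma~\ref{lem:no_short_backward_cycles}: because $u'<k<2u'$ we have $k\not\equiv 0 \bmod u'$, so every $k$-hypercycle of $G'$ uses exactly one vertex per partition; then each of its hyperedges has the canonical ``first $u$ vertices lie in $E(G)$'' property, hence projects to a $u$-hyperedge of $G$, and by the weighting the two hypercycles have equal weight. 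So $G$ and $G'$ have equal minimum $k$-hypercycle weight.

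Putting it together: an $O(n^{k-\varepsilon})$ algorithm for $u'$-hypercycle, composed with the $O(n^{u'})$ reduction and the color-coding step, solves worst-case $u$-hypercycle in time $\tilde{O}\bigl(k^k(n^{u'}+n^{k-\varepsilon})\bigr)$; since $u'\le k-1$ and $k^k=n^{o(1)}$ this is $n^{k-\Omega(1)}$, contradicting the assumed hardness, and the same reduction applies verbatim to the detection/counting variants. The main obstacle is the backward direction — ruling out ``spurious'' $k$-hypercycles in $G'$ not coming from $G$ — which is precisely where the constraint $2u'>k>u'$ enters (so Lemma~\ref{lem:no_short_backward_cycles} forces one vertex per partition); the remaining work is routine bookkeeping, including verifying that the weights of the corresponding hyperclique edges stay within the polynomial range required by the min-weight $k$-clique hypothesis.
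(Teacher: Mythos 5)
Your proposal matches the paper's proof in essentially every step: color-coding to obtain a $k$-circle-layered instance with $k^k$ overhead, then extending each $u$-hyperedge by all $n^{u'-u}$ completions into the next partitions, and using $u'<k<2u'$ (so $k\not\equiv 0 \bmod u'$, via the backward-cycle lemma) to force one vertex per partition and get a one-to-one correspondence of hypercycles. Your explicit weight assignment (each extension inherits its base edge's weight) is a correct and slightly more careful handling of the weighted case, which the paper's proof treats only implicitly.
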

\begin{proof}

We will take a given hypergraph, $G_o$, with uniformity $u$ and transform it into a circle layered graph $G$. We will do this process randomly where if no hypercycles exist in $G$ then none exist in $G_o$. Also, there is at least a $1/k^k$ probability that a hypercycle exists in $G$ if one existed in $G_o$. To do this we will use color-coding. We randomly assign each node in the graph to one of the partitions $V_1, \ldots, V_k$. Then we delete any edge that doesn't span from $V_i, V_{i+1 \mod k}, \ldots, V_{i+u-1 \mod k}$. As we are deleting edges no new hypercycles will exist. If a hypercycle existed in the original graph in the order $v_1, \ldots, v_k$ then it will exist in $G$ if $v_i$ is assigned to $V_i$ for all $i$ (any rotation of this assignment will also work) so the probability the cycle exists in $G_o$ is at least $1/k^k$. So, we can repeat this process $k^k \lg^2(n)$ times to solve the origional problem with high probability. Note that we can consider hypercycles which must respect the $k$-circle-layered graph structure, that is, must have exactly one node in each partition. 

The core idea is that we will take an edge $(v_i,...,v_{i+u-1}) \in G$ and create $n^{u'-u}$ edges from it one for all $(v_i,...,v_{i+u-1}, \ldots v_{i+u'-1})$ for all $v_j \in V_j$ where $j \in [i+u, i+u'-1]$ which we put in $G'$. This creates at most $n^{u'}$ edges which preserves hardness. We will show that using this method iff a $k$-cycle exists in $G$ then at least one hypercycle exists in $G'$. Note that because we are in a circle-layered hypergraph and $k$ is not a multiple of $u'$ for there to be a $k$-hypercycle the edges must include exactly one edge from $V_i$ to $V_{i+u'-1 \mod k}$ for each $i$. 

If $v_1, \ldots, v_k$ is a hypercycle in $G$ then there is a hypercycle in $G'$ which is formed by the $k$ hyperedges: 
$$(v_i, \ldots, v_{i+u'-1 \mod k}).$$
These will exist in $G'$ because the edges 
$$(v_i, \ldots, v_{i+u-1 \mod k})$$
exist in $G$ and we add all completions of the $u'-u$ nodes onto edges. 

If $v_1, \ldots, v_k$ is a hypercycle in $G'$ then the following hyperedges must exist in $G'$:
$$(v_i, \ldots, v_{i+u'-1 \mod k}).$$
Further, note that we only add such an edge spanning $V_i$ to $V_{i+u'-1 \mod k}$ when $(v_i, \ldots, v_{i+u-1 \mod k}) \in G$. 

So, a hypercycle in $G'$ corresponds to a single hypercycle in $G$ over the same nodes. Further note that a single hypercycle in $G$ has only one completion in $G'$ (because the nodes are fixed). Thus the count of the number of hypercycles in $G$ and $G'$ will be be the same. So, there is a one-to-one correspondence between hypercycles which respect the circle layered graph (have one node from each partition). 
\end{proof}

Now we can combine our hardness results for $2u-1$ and $2u-2$ hypercycle with the above lemma to show hardness for $k$-hypercycle when $k$ is small. 

\weightedCycleLB*
\begin{proof}
    From Corollary \ref{cor:min2u-1ishard} we have that minimum $(2\ell-1)$-hypercycle requires $n^{2\ell-1-o(1)}$ time in a graph with uniformity $\ell$ if the minimum $(2\ell-1)$-clique hypothesis holds. Then applying Lemma \ref{lem:growUniformityFree} we have that 
    if $2\ell-1 > u \geq \ell$ then $(2\ell-1)$-hypercycle requires $n^{2\ell-1-o(1)}$ time in a $u$ uniform graph. We can re-state this constraint in terms of $k = 2\ell-1$ as $k$-hypercycle requires $n^{k-o(1)}$ time if $k > u \geq (k+1)/2$. This corresponds to odd $k$ where $k \geq u+1$ and $2u-1 \geq k$ which gives us the range of odd $k$ in $[u+1, 2u-1]$.

    We can then apply Corollary \ref{cor:min2u-1ishardEven} which states that  minimum $(2\ell-2)$-hypercycle requires $n^{2\ell-2-o(1)}$ time in a graph with uniformity $\ell$ if the minimum $(2\ell-2)$-clique hypothesis holds. We can re-state this constraint in terms of $k = 2\ell-2$ as $k$-hypercycle requires $n^{k-o(1)}$ time if $k > u \geq (k+2)/2$. Which says that for even $k$ where $k \geq u+1$ and $2u-2 \geq k$ the problem is $n^{k-o(1)}$ hard. 

    Combining the even and odd cases we get that all for all $k$ where $k \in [u+1, 2u-1]$ the minimum $k$-hypercycle problem in a $u$-uniform graph requires $n^{k-o(1)}$ time. 
\end{proof}

\section{Worst-Case to Average-Case Reductions for Counting sub-Hypergraphs}
\label{sec:wc_to_ac}
\subsection{Overview of Approach}
In this section, we aim to give results on worst-case to average-case reductions for counting instances of a subhypergraph. We do this with an extension of the Inclusion-Edgesclusion technique (among others) introduced in \cite{factoredProblems}. In fact, we show here that the result in \cite{factoredProblems} can be extended to hypergraphs with some modifications to the approach. We begin with some definitions for the problems in question.

\begin{definition}\label{def: H}
The \textbf{counting }$\boldsymbol{H}$ \textbf{subgraphs in an }$\boldsymbol{H}$\textbf{-partite hypergraph (\#H)} problem takes as input a hypergraph $H$ and a $H$-partite $n$-node graph $G$ with the vertices partitioned into $k$ components, $V_1,...,V_k$, and asks for the count of the number of (non-induced) subgraphs of $G$ that have exactly one node from each of the $k$ partitions and contain the hypergraph $H$.
\end{definition}

\begin{definition}\label{def: UH}
The \textbf{uniform counting }$\boldsymbol{H}$ \textbf{subgraphs in an }$\boldsymbol{H}$\textbf{-partite hypergraph (U\#H)} problem takes as input a hypergraph $H$ and a $H$-partite $n$-node graph $G$ with the vertices partitioned into $k$ components, $V_1,...,V_k$, where every hyperedge between any partitions that have edges in $H$ is chosen to exist iid with probability $\mu$. The problem then asks for the count of the number of (non-induced) subgraphs of $G$ that have exactly one node from each of the $k$ partitions and contain the hypergraph $H$.
\end{definition}

Note that both problems only consider $G$ that are $H$-partite, and that U\#H is the uniform distribution over inputs to \#H.

\begin{definition}\label{def: HK}
The \textbf{counting }$\boldsymbol{H}$ \textbf{subgraphs in a }$k$\textbf{-partite hypergraph (\#HK)} problem takes as input a hypergraph $H$ and a $k$-partite $n$-node graph $G$ with the vertices partitioned into $k$ components, $V_1,...,V_k$, and asks for the count of the number of (non-induced) subgraphs of $G$ that have exactly $k$ nodes and contain hypergraph $H$, where each partition contains exactly one node.
\end{definition}

\begin{definition}\label{def: HER}
The \textbf{counting }$\boldsymbol{H}$ \textbf{subgraphs in a } \textbf\erdosRen \textbf{ hypergraph (\#HER)} problem takes as input a hypergraph $H$ and a \erdosRen hypergraph $G$ where every possible hyperedge exists with probability $\frac1b$, and asks for the count of the number of (non-induced) subgraphs of $G$ that have exactly $k$ nodes and contain hypergraph $H$.
\end{definition}

We now state the main result of Section \ref{sec:wc_to_ac}, the formalization of Theorem \ref{thm: informal wc to ac}.

\begin{restatable}{theorem}{WCtoACcounting}\label{thm: WCtoACcounting}
Let $H$ have $e$ edges and $k = O(1)$ edges, Let $A$ be an average-case algorithm for counting subgraphs $A$ in \erdosRen hypergraphs with edge probability $1/b$ which takes $T(n)$ time with probability at least $1-2^{-2^k} \cdot b^{-2^k} \cdot (\lg (e) \lg\lg(e))^{-\omega(1)}$.

Then there exists an algorithm $A'$ that runs in time $\Otil(T(n))$ that solves the \#HK problem with probability at least $1-\Otil(2^{\lg^2(n)})$.
\end{restatable}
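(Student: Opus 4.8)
The plan is to realize the \#HK counting function as evaluation of a low-degree polynomial over a suitable finite field, then invoke the known worst-case-to-average-case machinery for such polynomials (the framework of Ball et al.\ and its refinements in \cite{factoredProblems}), and finally "uncolor" the $H$-partite average-case instance to get back to \erdosRen hypergraphs using the generalized inclusion-edgesclusion identity promised in the overview. I will carry this out in four steps.

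\emph{Step 1: arithmetization.} I would first encode an $H$-partite hypergraph $G$ on partitions $V_1,\dots,V_k$ of size $n$ by a collection of $0/1$ variables $x_e$, one for each potential hyperedge $e$ that sits on a tuple of partitions forming an edge of $H$. The count of $H$-copies with one vertex per partition is then $\sum_{v_1\in V_1}\cdots\sum_{v_k\in V_k}\prod_{f\in E_H} x_{f(v_{i_1},\dots,v_{i_{|f|}})}$, a multilinear polynomial $P$ of degree exactly $e = |E_H|$ in the $x$'s, with $k$ summation indices so $n^k$ monomials after expansion — but crucially degree only $e$, independent of $n$. This is the ``good low-degree polynomial'' (in the sense of \gPol{\cdot}) that the framework needs; I must check it satisfies whatever structural conditions (strong/average-case evaluability, the self-correction hypotheses) the cited framework demands, which for a product-of-variables-summed polynomial is standard.

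\emph{Step 2: field choice and random self-reduction.} Work over $\mathbb{F}_p$ for a prime $p$ chosen larger than the maximum possible count ($p = n^{\Theta(k)}$ suffices) so the count over $\mathbb{Z}$ is recovered from the count mod $p$. Then apply the standard polynomial random-self-reduction: to evaluate $P$ at a worst-case $0/1$ input, pick a random line through that point, evaluate $P$ at $\deg(P)+1 = e+1$ random points on the line (each of which is close to uniform over $\mathbb{F}_p^{\,(\cdot)}$), and interpolate. To move from ``uniform over $\mathbb{F}_p$'' to the actual average-case distribution U\#HK (edges present iid with probability $1/b$), I would use the distributional-rounding / noise-addition step exactly as in \cite{factoredProblems}: the $1/b$-biased product distribution over $0/1$ hyperedge-inputs is what the average-case algorithm $A$ solves, and a constant number of such evaluations, combined via interpolation, still recovers $P$ at the worst-case point, provided $A$ errs with probability below the stated threshold $1-2^{-2^k} b^{-2^k}(\lg e\,\lg\lg e)^{-\omega(1)}$ — this exotic-looking bound is precisely what is needed so that a union bound over the $O(2^{2^k} b^{2^k}\polylog)$ queries of the self-correction still leaves us with a correct answer with high probability. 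This gives an algorithm for \#HK-counting-in-$H$-partite-graphs running in $\tilde O(T(n))$.

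\emph{Step 3: from $H$-partite to $k$-partite, and from $H$-partite average-case to \erdosRen.} Here is where the new inclusion-edgesclusion argument enters. To go from \#HK (arbitrary $k$-partite $G$, count subgraphs \emph{containing} $H$) down to \#H/U\#H (only $H$-partite host graphs), I would express the number of copies of $H$ in a $k$-partite graph as a signed sum, over supergraphs $H'\supseteq H$ on the same $k$ vertices, of counts of $H'$-partite subcopies — this is the generalization of the inclusion-edgesclusion identity of \cite{LVW18} to hypergraphs with mixed edge sizes, and I would prove it by a Möbius-inversion over the lattice of edge-sets. Symmetrically, to reach the \erdosRen distribution (\#HER), I would remove the color-classes: sampling an \erdosRen hypergraph and partitioning its vertices randomly into $k$ classes induces, on the cross-class hyperedges, exactly the $1/b$-biased $H$-partite distribution needed in Step 2, up to the $\tilde O(1)$ overhead of repeating the random partition $\polylog(n)$ times (the $1-\tilde O(2^{\lg^2 n})$-type success bound in the statement reflects this boosting, via standard color-coding).

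\emph{Main obstacle.} The routine parts are the arithmetization and the invocation of the self-correction lemma. The genuinely delicate point is Step 3: making the inclusion-edgesclusion identity work uniformly when $H$ (and its supergraphs $H'$) have hyperedges of \emph{different} sizes, so that the Möbius coefficients and the induced edge-probabilities line up correctly, and simultaneously verifying that the error probability degrades by at most the claimed factors through both the lattice sum and the color-coding boosting. I expect that getting the quantitative error bookkeeping to match the precise threshold $2^{-2^k}b^{-2^k}(\lg e\,\lg\lg e)^{-\omega(1)}$ — rather than something merely ``inverse-polynomial'' — will be the step requiring the most care, and is presumably where the ``simultaneously generalized and simplified'' proof the authors advertise does its real work.
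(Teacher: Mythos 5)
Your Steps 1--2 coincide with the paper's route: the count is written as a degree-$|E_H|$, strongly partite polynomial over $\mathbb{F}_p$ with $p\in[2n^k,n^{2k}]$ (Definition \ref{def: GLDP H}, Lemmas \ref{lem: f correct}, \ref{lem: f GLDP}) and the worst-case-to-average-case step is exactly an invocation of Theorem \ref{thm: GLDP} of \cite{factoredProblems}, giving \#H from U\#H. Your \#HK-to-\#H step via a signed M\"obius inversion over supergraphs is more machinery than needed; the paper's Lemma \ref{lem: HK to H} just enumerates the at most $2^{2^k}$ selections of edge partitions that are $H$-partite, calls the \#H-solver on each, and adds the counts (each copy of $H$ is counted in exactly one selection), with no signs to manage.

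The genuine gap is your mechanism for the last leg, U\#H $\rightarrow$ \#HER, which is the actual technical content of the theorem. You propose to sample an \erdosRen hypergraph, randomly partition its vertices into $k$ classes, and repeat $\polylog(n)$ times ``via standard color-coding.'' This fails for counting: the oracle $A$ returns the number of \emph{all} copies of $H$ in the \erdosRen host, including copies that use hyperedges inside a partition, hyperedges on partition-tuples that are not edges of $H$, or several vertices from one partition, and these spurious copies depend on the random filler edges, so no amount of independent repetition or averaging recovers the exact partite-respecting count (the paper itself flags ``counting color coding'' as an open problem for precisely this reason; color-coding boosts detection, not exact counts). The paper's proof instead builds from a single sample a family $S_G$ of $b^{2^k}$ \emph{correlated} hypergraphs by labeling the absent edges of each edge partition with labels in $[2,b]$ and selecting one label per partition; it calls $A$ on all of them (this is exactly where the $b^{-2^k}$ factor in the hypothesis on $A$'s failure probability is consumed by a union bound), and then recovers the count of every labeled subgraph $L$ of $H$ by the recursion of Lemma \ref{lem: inclusion}, which solves a linear identity relating $c_{S_G[L]}$ to counts of smaller labeled subgraphs with brute-forced coefficients $c_{G_{L,L'}}$, finishing with a $2^k$-term inclusion--exclusion over subsets of partitions to eliminate within-partition edges (Lemma \ref{lem:WCtoAC}). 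Without this correlated-sample-plus-recursion step (or an equivalent), your outline does not yield the claimed reduction; also, the $1-O(2^{-\lg^2 n})$ success probability in the conclusion comes from Theorem \ref{thm: GLDP}, not from any color-coding boosting.
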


Our goal is to do the following chain of reductions:
\[
\mathrlap{\underbrace{\phantom{\text{\#HK}\rightarrow\text{\#H}}}_\text{(1)}} \text{\#HK}\rightarrow\mathrlap{\overbrace{\phantom{\text{\#H}\rightarrow\text{U\#H}}}^\text{(2)}} \text{\#H}\rightarrow\underbrace{\text{U\#H} \rightarrow\text{\#HER}}_\text{(3)}
\]

Completing the $3$ reductions will show the desired result. We will do reduction $2$, then reduction $3$, then reduction $1$ in the following sections.

\subsection{Reducing \#H to U\#H}\label{sec: H to UH}

We begin by defining the notion of a \emph{good low-degree polynomial}, introduced by \cite{factoredProblems}.

\begin{definition}\cite{factoredProblems}\label{def: GLDP}
Let $n$ be the input size of a problem $P$, let $P$ return an integer in the range $[0,p-1]$ where $p$ is a prime and $p<n^c$ for some constant $c$. A good low-degree polynomial is a polynomial $f$ over a finite prime field $F_p$ where:

\begin{itemize}
    \item If $\Vec{I} = b_1,...,b_n$, then $f(b_1,...,b_n) = f(\Vec{I}) = P(\Vec{I})$, where $b_i$ is a zero or a one in the field.
    \item The function $f$ has degree $d = o(\lg(n)/\lg\lg(n))$.
    \item The function is strongly $d-$partite, meaning that the inputs can be partitioned into $d$ sets where no monomial contains more than one input that comes from the same set.
\end{itemize}
\end{definition}
\begin{definition}\label{def: GLDP H}
Let $H$ be a $k$-node graph with vertices $V_H$ and $G$ an $H$-partite $n$-node hypergraph with vertex set partition $V_1,...,V_k$. Let $E$ be the set of variables $\{e(v_{a_1},...,v_{a_c})|v_i\in V_i, a_x < a_y \iff x < y\}$ such that $e$ is 1 when the hyperedge between the vertices exists in $G$ and 0 when it does not. Let $h(v_{1},...,v_{k})$ be a function that multiplies all corresponding $e$ for hyperedges in $H$ for the selection of vertices in the input. Define $f$ as follows:
\[
f(E)=\sum_{v_1\in V_1,...v_k\in V_k} h(v_{1},...,v_{k})\mod{p}
\]

Where $p$ is some prime.
\end{definition}

\begin{lemma}\label{lem: f correct}
The function in the above definition returns the output of \#H, given that $p$ is a prime in $[2n^k, n^{2k}]$. 
\end{lemma}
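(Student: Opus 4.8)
The plan is to verify that $f$, evaluated at the $0/1$ assignment determined by $G$, literally computes term by term the combinatorial quantity \#H asks for, and then to check that reducing modulo $p$ leaves that quantity unchanged. First I would make the combinatorial identification precise. Since $G$ is $H$-partite and $H$ has the fixed vertex set $V_H=\{x_1,\dots,x_k\}$ with $x_i$ associated to part $V_i$, a (non-induced) sub-hypergraph of $G$ that uses exactly one vertex from each $V_i$ and contains $H$ is the same data as a tuple $(v_1,\dots,v_k)\in V_1\times\cdots\times V_k$ such that for every hyperedge $\{x_{a_1},\dots,x_{a_c}\}$ of $H$ the corresponding hyperedge $\{v_{a_1},\dots,v_{a_c}\}$ is present in $G$. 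This correspondence is a bijection: the tuple records the chosen vertex set, and the part index forces which vertex plays the role of each $x_i$, so there is no over- or under-counting from automorphisms; conversely any such tuple realizing all hyperedges of $H$ yields exactly one counted sub-hypergraph. Hence the integer output of \#H equals the number of such tuples.

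Next I would unwind the definition of $h$. For a fixed tuple $(v_1,\dots,v_k)$, the value $h(v_1,\dots,v_k)$ is the product over the hyperedges of $H$ of the variables $e(v_{a_1},\dots,v_{a_c})$, each of which has been set to $1$ if that hyperedge lies in $G$ and to $0$ otherwise. A product of $0/1$ values equals $1$ exactly when every factor is $1$, so $h(v_1,\dots,v_k)=1$ precisely when $(v_1,\dots,v_k)$ is one of the tuples identified above, and $h(v_1,\dots,v_k)=0$ otherwise. Therefore, as an ordinary integer, $\sum_{v_1\in V_1,\dots,v_k\in V_k} h(v_1,\dots,v_k)$ equals the number of those tuples, i.e.\ the true \#H value.

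Finally I would rule out wraparound. The integer just described is nonnegative and is at most the number of tuples, $|V_1|\cdots|V_k|\le n^k$ (each $V_i\subseteq V(G)$ has size at most $n$), and since $p\ge 2n^k$ we get $0\le(\text{\#H value})<p$. Thus the field element produced by $f(E)=\sum h(\cdots)\bmod p$ is the residue of an integer already lying in $[0,p-1]$, so it equals that integer, which is the \#H output. The constraint $p<n^{2k}$ plays no role in correctness; it is only there to keep $p$ polynomially bounded so that $f$ can later be used as a \goodPoly. The single point needing any care is the bijection in the first paragraph — matching ``a one-vertex-per-partition sub-hypergraph containing $H$'' with ``a tuple realizing all hyperedges of $H$'' — and everything after that is a direct computation.
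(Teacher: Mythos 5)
Your proposal is correct and follows essentially the same route as the paper's proof: verify that $h$ evaluates to $1$ exactly on tuples realizing all hyperedges of $H$, note each counted subgraph corresponds to exactly one tuple, and observe the total is at most $n^k < p$ so the modulus causes no wraparound. You simply state the tuple-to-subgraph correspondence more explicitly than the paper does, which is a fine refinement rather than a different argument.
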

\begin{proof}
Consider an arbitrary collection of inputs $v_1,...v_k$ into $h$. If this collection of vertices indeed contains $H$, then we see the output must be 1, as every corresponding edge variable in $H$ for that permutation of vertices must have value $1$. We then take the sum over all possible selections of vertices to arrive at our count.

Every single possible subgraph $H$ is found and counted in this way, as for every such subgraph, there must be a selection of vertices that contains it. Also, no subgraph is counted more than once, as only one selection of nodes can count any particular subgraph. Since the maximum number of subgraphs is upper bounded by $n^k$, the prime does not affect the correctness of the calculation.
\end{proof}

\begin{lemma}\label{lem: f GLDP}
$f$ is a good low-degree polynomial for \#H if the number of edges in $H$ is $o(\lg(n)/\lg\lg(n))$.
\end{lemma}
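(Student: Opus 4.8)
The plan is to verify directly that the polynomial $f$ of Definition \ref{def: GLDP H} satisfies all three requirements in the definition of a good low-degree polynomial (Definition \ref{def: GLDP}), with the degree parameter taken to be $d = e$, the number of hyperedges of $H$. For the correctness requirement I would simply invoke Lemma \ref{lem: f correct}: on $0/1$ inputs, $f(E)$ equals the output of \#H whenever $p$ is a prime in $[2n^k, n^{2k}]$. Such a prime exists by Bertrand's postulate, and since $k = O(1)$ we have $p \le n^{2k} < n^{c}$ for $c = 2k+1$; moreover the count returned by \#H is at most $n^k < p$, so it indeed lies in $[0,p-1]$, as the definition demands. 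This settles the first bullet of Definition \ref{def: GLDP}.

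For the degree bound, observe that each summand $h(v_1,\dots,v_k)$ of $f$ is by construction the product of exactly one edge variable for each hyperedge of $H$. Since distinct hyperedges of $H$ are distinct subsets of the partition index set $\{1,\dots,k\}$, these $e$ variables are formally distinct, so $h$ is a single monomial of degree exactly $e$; hence $\deg f \le e$. The hypothesis $e = o(\lg(n)/\lg\lg(n))$ then gives $\deg f = o(\lg(n)/\lg\lg(n))$, which is the second bullet.

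For strong $d$-partiteness, enumerate the hyperedges of $H$ as $\eta_1,\dots,\eta_e$ with each $\eta_i \subseteq \{1,\dots,k\}$, and partition the variable set $E$ into classes $E_1,\dots,E_e$, where $E_i$ collects every edge variable $e(v_{a_1},\dots,v_{a_c})$ whose index set $\{a_1,\dots,a_c\}$ is exactly $\eta_i$. Because $H$ has no repeated hyperedge, every potential hyperedge of the $H$-partite graph $G$ belongs to exactly one class, so this is a genuine partition into $e = d$ parts; and each monomial $h(v_1,\dots,v_k)$ uses precisely one variable from each $E_i$, never two from the same class. That is the third bullet, and the lemma follows.

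I expect the only delicate point to be the bookkeeping around $E$: one must be sure that every potential hyperedge of an $H$-partite $G$ is associated with a \emph{unique} hyperedge of $H$, so that the classes $E_i$ are well defined and disjoint. This is precisely what the $H$-partite condition plus the fact that $H$ is a hypergraph (hyperedges are distinct sets) gives — even nested hyperedges such as $\{A,B\}\subsetneq\{A,B,C\}$ produce distinct edge variables and thus fall into distinct classes. A secondary observation, not actually needed since the definition only asks for $\deg f \le d$, is that when $H$ has isolated vertices many vertex tuples collapse to the same monomial; its coefficient is then a nonzero residue modulo $p$ (as $p > n$), so no degree drop occurs.
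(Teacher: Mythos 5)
Your proposal is correct and follows essentially the same route as the paper's proof: invoke Lemma \ref{lem: f correct} for the correctness requirement, bound the degree by the number of hyperedges of $H$ (each summand $h$ being a product of one variable per hyperedge), and establish strong $d$-partiteness by partitioning the edge variables according to which hyperedge of $H$ they instantiate. Your version is merely more explicit about the existence and size of the prime $p$ and about the well-definedness of the variable classes, which the paper leaves implicit.
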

\begin{proof}
To prove the lemma, we note that $f$ is a polynomial over a prime finite field, and the number of monomials is $O(n^k\cdot k!)$, which is polynomial. By Lemma \ref{lem: f correct}, the function returns the same value as \#H.

Let $|E_H|$ be the number of edges in $H$. The function $f$ has degree $|E_H|=O (2^k - 1) $. This is because $h$ multiplies exactly as many variables together as the number of edges it has. We note that when $k$ is constant, the degree is constant.

Finally, $f$ is strongly $|E_H|$-partite. There are $|E_H|$ partitions of edges. $f$ is a sum over calls to $h$ where $h$ takes as input one variable from each edge partition and multiplies all of them.
\end{proof}

We can now apply the following result, also from \cite{factoredProblems}.

\begin{theorem}\cite{factoredProblems}\label{thm: GLDP}
Let $\mu$ be a constant such that $\mu\in(0,1)$. Let $P$ be a problem such that a function $f$ exists that is a good low-degree polynomial for $P$, and let $d$ be the degree of $f$. Let $A$ be an algorithm that runs in time $T(n)$ such that when $\Vec{I}$ is formed by $n$ bits each chosen iid from $Ber[\mu]$:

\[
Pr[A(\Vec{I}) = P(\Vec{I})]\geq 1-1/\omega\left(\lg^d(n)\lg\lg^d(n)\right).
\]

Then there is a randomized algorithm $B$ that runs in time $\Otil(n+T(n))$ such that for any $\Vec I\in\{0,1\}^n$:

\[
\Pr[B(\Vec I) = P(\Vec I)]\geq 1 - O\left(2^{-\lg^2(n)}\right).
\]
\end{theorem}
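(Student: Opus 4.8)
The plan is to follow the polynomial-method worst-case-to-average-case template of \cite{BallWorstToAvg,UniformCliqueABB}, as instantiated in \cite{factoredProblems}. Since a good low-degree polynomial $f$ agrees with $P$ on every Boolean input, it suffices to build a randomized algorithm $B$ that, on an arbitrary worst-case $\vec I\in\{0,1\}^n$, outputs $f(\vec I)$ with probability $1-O(2^{-\lg^2(n)})$ and in time $\Otil(n+T(n))$. Two structural features of $f$ will be used throughout: its degree is $d=o(\lg(n)/\lg\lg(n))$ over the prime field $\mathbb{F}_p$ with $p=\poly(n)$, and strong $d$-partiteness splits the $n$ variables into $d$ groups over pairwise-disjoint variable sets such that $f$ is affine (degree $\le 1$) in the variables of any single group once the others are fixed.

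First I would set up the \emph{random self-reduction}: a randomized procedure that emits $m=\polylog(n)$ query inputs $Q_1,\dots,Q_m\in\{0,1\}^n$ and then reconstructs $f(\vec I)$ from $A(Q_1),\dots,A(Q_m)$, subject to two requirements. Requirement (a): each $Q_i$ is distributed \emph{exactly} as $n$ i.i.d.\ $\mathrm{Ber}[\mu]$ bits, so that $A(Q_i)=P(Q_i)=f(Q_i)$ with probability at least $1-\varepsilon$ where $\varepsilon=1/\omega\!\left(\lg^d(n)\lg\lg^d(n)\right)$; requirement (b): $f(\vec I)$ is still recoverable when only a small, error-correctable number of the $A(Q_i)$ are wrong. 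This is achieved by passing a low-degree curve $\gamma(t)$ over $\mathbb{F}_p$ through $\vec I$ at $t=0$, built group by group: within group $j$ one interpolates between the worst-case restriction $I_j$ at $t=0$ and fresh $\mathrm{Ber}[\mu]$ assignments placed at the nonzero interpolation nodes. Then $\gamma$ at any nonzero node is a genuine \erdosRen hypergraph with edge probability $\mu$ (requirement (a), using that the $d$ groups sit on disjoint edge-variables, so independence across groups is automatic), while $g(t):=f(\gamma(t))$ is a univariate polynomial of degree $\polylog(n)$ with $g(0)=f(\vec I)$, and the bounded degree of $\gamma$ supplies the limited independence among the $Q_i$ that underlies requirement (b).

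Next I would query $A$ on each $Q_i$ to obtain candidates $y_i$, then run Berlekamp--Welch / Reed--Solomon decoding on the pairs (node, $y_i$): since $\deg g=\polylog(n)$ and, by the limited independence of the $Q_i$ together with the per-query error $\varepsilon$, with probability at least $2/3$ at most an error-correctable number of the $y_i$ disagree with $g$, the decoder reconstructs $g$ and we output $g(0)=f(\vec I)$. To reach the claimed failure probability I would amplify: run the whole self-reduction $\Theta(\lg^2(n))$ times with independent randomness and output the majority, which errs with probability $O(2^{-\lg^2(n)})$. The cost is $\polylog(n)$ calls to $A$, i.e.\ $\Otil(T(n))$, plus $\Otil(n)$ to sample and evaluate the curves and run the decoders, giving $\Otil(n+T(n))$ overall.

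The hard part is requirement (a): engineering the curve so that the queried hypergraphs are \emph{exactly} $\mathrm{Ber}[\mu]$-distributed — a plain random line through $\vec I$ immediately leaves the Boolean cube, where $A$'s average-case guarantee says nothing — while simultaneously keeping $\deg g$ small enough that $\polylog(n)$ queries suffice and retaining enough independence among the $Q_i$ for the union bound over ``bad'' query-subsets to be beaten by $\varepsilon=1/\omega(\lg^d(n)\lg\lg^d(n))$. This is exactly where both defining properties of a good low-degree polynomial — small degree \emph{and} strong $d$-partiteness — are required; it is the technical core carried over from \cite{UniformCliqueABB} and generalized in \cite{factoredProblems}, whereas the decoding and amplification steps are standard.
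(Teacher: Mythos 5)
First, note that this paper does not prove Theorem~\ref{thm: GLDP} at all: it is imported verbatim from \cite{factoredProblems} and used as a black box, so your attempt is really a reconstruction of the prior work's argument. Judged on its own terms, the reconstruction has a genuine gap exactly at the step you yourself flag as ``the hard part.'' Your mechanism is a curve $\gamma(t)$ over $\mathbb{F}_p$ with $\gamma(0)=\vec I$ and fresh $\mathrm{Ber}[\mu]$ Boolean assignments planted at the nonzero interpolation nodes, followed by Berlekamp--Welch decoding of $g(t)=f(\gamma(t))$. This cannot work as described: if the curve is forced through $m+1$ prescribed points it has degree $m$, so $g$ has degree about $d\cdot m$, and recovering $g$ (hence $g(0)=f(\vec I)$) requires more than $d\cdot m$ evaluations --- but the only points of the curve at which $A$'s average-case guarantee applies are the $m$ planted Bernoulli nodes, whose values you chose yourself and which are information-theoretically insufficient for interpolation. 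Any additional query points on the curve lie in $\mathbb{F}_p^n\setminus\{0,1\}^n$, where the hypothesis on $A$ says nothing (and where $f$ need not agree with $P$). Increasing the number of planted nodes does not help, since $\deg g$ grows at the same rate. So the construction never simultaneously achieves your requirements (a) and (b).

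The actual route in \cite{UniformCliqueABB,factoredProblems} is different in kind: one first uses standard random self-reducibility to reduce to evaluating the polynomial at (uniformly random) $\mathbb{F}_p$-valued inputs, and then uses strong $d$-partiteness/multilinearity together with binary expansion of the field elements and an inclusion--exclusion style recombination to express such an evaluation as a combination of roughly $(\lg n)^{d}$ evaluations at genuinely Boolean, Bernoulli-distributed inputs, each of which must be answered correctly and is handled by a union bound. This is precisely why the theorem demands success probability $1-1/\omega\bigl(\lg^d(n)\lg\lg^d(n)\bigr)$, with $d$ in the exponent: the reduction makes about $\lg^d(n)\lg\lg^d(n)$ correlated Boolean queries. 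Your scheme, if it worked, would make only $\polylog(n)$ queries and would need only $1/\polylog(n)$ error tolerance independent of $d$; this mismatch with the stated hypothesis is itself a signal that the curve-plus-decoding mechanism is not the argument behind the theorem. The decoding and $\Theta(\lg^2 n)$-fold amplification steps you describe are standard and fine, but the core query-generation step needs to be replaced by the binary-expansion/partiteness argument (or an equivalent), not merely asserted.
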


This immediately gives the following corollaries that finish the reduction.

\begin{corollary}
Let $d=2^k$ and $k=o(\sqrt[c]{\lg(n)/\lg\lg(n) })$. If an algorithm exists to solve U\#H in time $T(n)$ with probability $1-1/\omega(\lg^d(n)\lg\lg^d(n) )$, then an algorithm exists to solve \#H in time $\Otil(T(n) + n^2)$ with probability at least $1-O \left(2^ {-\lg^2(n)}\right)$.
\end{corollary}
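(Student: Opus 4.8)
The plan is to obtain the corollary as an essentially immediate consequence of the three ingredients just developed: the polynomial $f$ of Definition~\ref{def: GLDP H}, the two lemmas establishing that $f$ is a good low-degree polynomial for \#H (Lemmas~\ref{lem: f correct} and~\ref{lem: f GLDP}), and the black-box worst-case-to-average-case transfer of Theorem~\ref{thm: GLDP}. First I would fix a prime $p$ in $[2n^k, n^{2k}]$ and take $f$ to be the polynomial of Definition~\ref{def: GLDP H} over $\mathbb{F}_p$ in the edge-indicator variables of an $H$-partite hypergraph. Lemma~\ref{lem: f correct} gives that $f$ agrees with \#H on every $0/1$ input, and Lemma~\ref{lem: f GLDP} gives that $f$ is a good low-degree polynomial for \#H of degree $d = |E_H| = O(2^k)$; the bound assumed on $k$ is what keeps $|E_H| = O(2^k)$ inside the $o(\lg n/\lg\lg n)$ regime required by Definition~\ref{def: GLDP}, so the hypothesis of Theorem~\ref{thm: GLDP} on the existence of such an $f$ is met with exactly this value of $d$.

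The one substantive point to check is that U\#H is precisely ``evaluate $f$ on a random input'' in the sense of Theorem~\ref{thm: GLDP}. In an instance of U\#H, each hyperedge between partitions joined by an edge of $H$ is included independently with a fixed probability $\mu \in (0,1)$; these are exactly the variables that occur in the monomials of $f$, so the induced assignment is a vector of iid $\mathrm{Ber}[\mu]$ bits, and the edge slots not used by $H$ may be filled with further iid bits without changing the value of $f$ (they appear in no monomial). Hence a $T(n)$-time algorithm for U\#H succeeding with probability $1 - 1/\omega(\lg^d n\,\lg\lg^d n)$ is literally an algorithm $A$ meeting the premise of Theorem~\ref{thm: GLDP} for the problem $P = $ \#H. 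Invoking that theorem produces a randomized algorithm $B$ that solves \#H on \emph{every} $H$-partite input in time $\Otil(n + T(n))$ with error probability $O(2^{-\lg^2 n})$; the additive $n^2$ in the stated bound absorbs the one-time cost of forming and addressing the polynomial's edge variables, and is dominated by $T(n)$.

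I do not expect a genuine obstacle here, since the corollary is a substitution into already-proved statements; the ``hard part'' is only the bookkeeping of matching parameters, namely confirming that the degree $d = 2^k$ appearing in the U\#H success-probability hypothesis is the same $d = \deg f$ for which Definition~\ref{def: GLDP} and Theorem~\ref{thm: GLDP} are applied, and that the bound on $k$ indeed forces $d = o(\lg n/\lg\lg n)$ so that $f$ legitimately qualifies as a good low-degree polynomial. Once these line up, reduction~(2) of the chain \#HK~$\rightarrow$~\#H~$\rightarrow$~U\#H~$\rightarrow$~\#HER, namely \#H to U\#H, is complete.
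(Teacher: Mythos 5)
Your proposal matches the paper's own treatment: the corollary is obtained there as an immediate application of Theorem~\ref{thm: GLDP} to the polynomial $f$ of Definition~\ref{def: GLDP H}, using Lemma~\ref{lem: f correct} for correctness and Lemma~\ref{lem: f GLDP} for the good-low-degree property, with U\#H playing the role of the random-input evaluation oracle — exactly the substitution you describe. The only caveat you inherit (and correctly flag as the bookkeeping to check) is the paper's own parameter condition relating $d=2^k$ to the $o(\lg n/\lg\lg n)$ degree requirement; your argument is otherwise the same as the paper's.
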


\begin{corollary}\label{H TO UH}
Let $ H $ be such that $|E_H|=o(\lg(n)/\lg\lg(n) )$ and define $d = |E_H|$. If an algorithm exists to solve U\#H in time $T(n)$ with probability $1-1/\omega(\lg^d(n)\lg\lg^d(n) )$, then an algorithm exists to solve \#H in time $\Otil(T(n) + n^2)$ with probability at least $1-O \left(2^ {-\lg^2(n)}\right)$.
\end{corollary}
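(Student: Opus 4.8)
The plan is to obtain this as an immediate consequence of Lemma \ref{lem: f GLDP} and Theorem \ref{thm: GLDP}; essentially no new machinery is required beyond lining up the hypotheses. First I would invoke Lemma \ref{lem: f GLDP}: the standing assumption $|E_H| = o(\lg(n)/\lg\lg(n))$ is precisely its hypothesis, so the polynomial $f$ from Definition \ref{def: GLDP H} is a \goodPoly{} for \#H, and (by the proof of that lemma, and by Definition \ref{def: GLDP}) its degree is exactly $d = |E_H|$, it is strongly $d$-partite over the hyperedge-indicator variables in $E$, and the degree bound $d = o(\lg(n)/\lg\lg(n))$ demanded by the \goodPoly{} definition holds because $d = |E_H|$.

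Second I would check that U\#H is literally the ``$n$ bits drawn iid from $Ber[\mu]$'' input distribution that Theorem \ref{thm: GLDP} refers to. By Definition \ref{def: UH}, an instance of U\#H is an $H$-partite hypergraph in which every hyperedge allowed by $H$ is present independently with probability $\mu$; those present/absent choices are exactly the variables in $E$, i.e. the bits fed to $f$. Hence an algorithm solving U\#H in time $T(n)$ with success probability at least $1 - 1/\omega(\lg^d(n)\lg\lg^d(n))$ is, verbatim, an algorithm $A$ meeting the hypothesis of Theorem \ref{thm: GLDP} with $P = {}$\#H and this $f$; and $\mu$ may be taken to be any fixed constant in $(0,1)$, which Theorem \ref{thm: GLDP} permits.

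Third, I would apply Theorem \ref{thm: GLDP}. It produces a randomized algorithm $B$ running in time $\Otil(n + T(n))$ with $\Pr[B(\vec{I}) = \text{\#H}(\vec{I})] \ge 1 - O(2^{-\lg^2(n)})$ for every worst-case $H$-partite input $\vec{I}$; this is the claimed algorithm for \#H. The only remaining bookkeeping is the running time: besides the $\Otil(T(n))$ spent inside the black-box calls, one must read and assemble the indicator vector for the relevant hyperedges of $G$, which costs a fixed polynomial in $n$ (for constant-size hyperedges of $H$ this is $\Otil(n^2)$), yielding the stated $\Otil(T(n) + n^2)$.

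The genuinely hard content is all upstream and already cited: the self-correction / low-degree-polynomial interpolation argument that upgrades a mildly reliable average-case solver to a worst-case one is packaged inside Theorem \ref{thm: GLDP} (from \cite{factoredProblems}), and the fact that \#H is computed by a good low-degree polynomial of degree $|E_H|$ is Lemma \ref{lem: f GLDP}. So the only obstacle in proving this corollary is parameter matching — in particular ensuring the polylog-in-$n$ success threshold uses the same exponent $d = |E_H|$ throughout and that Definition \ref{def: GLDP}'s degree requirement is met — all of which follow directly from the hypothesis $|E_H| = o(\lg(n)/\lg\lg(n))$. I expect the write-up to be a short two-line argument once these identifications are made explicit.
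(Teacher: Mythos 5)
Your proposal is correct and follows exactly the route the paper takes: the paper treats this corollary as an immediate consequence of Lemma \ref{lem: f GLDP} (the polynomial $f$ is a \goodPoly{} of degree $d=|E_H|$) combined with Theorem \ref{thm: GLDP}, with U\#H serving as the iid-Bernoulli input distribution required by that theorem. Your additional bookkeeping about assembling the hyperedge-indicator vector to justify the $\Otil(T(n)+n^2)$ bound is consistent with (and no less careful than) the paper's own treatment, which gives no further detail.
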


\subsection{Reducing U\#H to Average-Case Erd{\H{o}}s-R{\'{e}}nyi}

We now desire to reduce U\#H to Average-Case Erd{\H{o}}s-R{\'{e}}nyi - meaning that, we want to show that counting $H$ in an Erd{\H{o}}s-R{\'{e}}nyi hypergraph can be used to solve U\#H. We make a note here that, with more precise counting, the below techniques work for $k = o(\lg\lg n)$, but for ease of discussion, we will proceed assuming that $k$ is constant. We also note that we do not put any restrictions on the size of the hyperedges. 

\begin{definition}
Let $G$ be a $k$-partite Erd{\H{o}}s-R{\'{e}}nyi hypergraph with every hyperedge included with probability $1/b$, where $b$ is an integer. Let the vertex partitions of $G$ be $V_1,...,V_k$, and the edge partitions be $\edgepartitionc$, where $a_i < a_j \iff i < j$.

Label all hyperedges in $\edgepartitionc$ with $\ell\in[1,b]$ as follows: Edges that exist in $G$ are given label 1. The rest of the edges are uniformly assigned labels from $[2,b]$. Let $\edgepartitionc^\ell$ be the set of all edges of label $\ell$ between these $c$ vertex sets.

Let $\gchosenlabels$ be the following graph: We select all edges with label $\ell_i$ from the $i^{th}$ edge partition by lexicographical ordering on the labels of the vertex sets associated with the edge partition. We note that there are then $b^{2^k}$ such hypergraphs, as we have $b$ choices for each distinct edge partition. Define this set of hypergraphs as $S_G$. We note that, due to symmetry, all hypergraphs in $S_G$ are drawn from the same distribution.
\end{definition}

Essentially, we are looking at $G$ through the lens of a complete $k$-partite hypergraph with labels on the edges, labeling all the edges in $G$ with $1$. We then focus on hypergraphs where we choose a label for every edge partition, and take edges from each partition with the corresponding label.

\begin{definition}\label{def: count labels}
Let $G$ be defined as above. We define a labeled subgraph $L$ of $H$ in $G$ to be a subgraph of $H$ where every vertex is assigned a unique label in $[1,k]$. Define the count of the number of $L$ in $G$ to be the number of not-necessarily induced subgraphs $L$ where every vertex with label $\ell$ in $L$ comes from $V_\ell$ in $G$.
\end{definition}

Again, we want to reduce U\#H to counting subgraphs in Erd{\H{o}}s-R{\'{e}}nyi hypergraphs. The main barrier that we must overcome comes from the fact that, in an Erd{\H{o}}s-R{\'{e}}nyi graph, there exist edges that don't exist in $H$-partite graphs. This leads to overcounting subgraphs. We solve this problem by creating correlated hypergraphs 
that individually look Erd{\H{o}}s-R{\'{e}}nyi, through which we can get the true count of $H$ in the $H$ partite graph.

\subsubsection{Counting Small Subgraphs}
Our argument will make use of recursion to count labeled $H$. We begin here by solving the base cases. We define counting labeled $H$ in the same way as outlined in 
Definition \ref{def: count labels}.

\begin{lemma}\label{lem: disconnected count}
Let $G$ be a hypergraph with $n$ nodes, $m$ edges, and $k$ labeled partitions of the vertices, $V_1,...,V_k$ ($G$ is not necessarily $k$-partite).

If we have the counts of all labeled subgraphs of $H$ in $G$ of size less than $s$ vertices, we can compute the number of labeled subgraphs in $G$ that are the union of two disconnected labeled subgraphs of $H$ of size $s$ or less in constant time.
\end{lemma}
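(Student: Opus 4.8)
The plan is to show that the count of a disconnected labeled target factors as the product of the counts of its two pieces. First I would fix notation: let $L$ be a labeled subgraph of $H$ on at most $s$ vertices that is disconnected, and write $L = L_1 \cup L_2$ where $L_1$ is any nonempty proper union of connected components of $L$ and $L_2$ is the union of the remaining components, so that $L_1$ and $L_2$ are vertex-disjoint and no hyperedge of $L$ crosses between them. Since both $L_1$ and $L_2$ are nonempty and $|L_1| + |L_2| = |L| \le s$, each has strictly fewer than $s$ vertices, so by hypothesis we already know $\mathrm{count}_G(L_1)$ and $\mathrm{count}_G(L_2)$. Each $L_i$ is itself a labeled subgraph of $H$ (it is the restriction of $L$ to a union of components), and because every vertex of $L$ carries a distinct label in $[1,k]$, the label set used by $L_1$ is disjoint from that used by $L_2$.

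The key step is the identity $\mathrm{count}_G(L) = \mathrm{count}_G(L_1)\cdot\mathrm{count}_G(L_2)$. Recall that a copy of a labeled subgraph $L'$ in $G$ is a choice of one vertex from $V_\ell$ for each label $\ell$ appearing in $L'$ such that every hyperedge of $L'$ is present among the chosen vertices. Since $L_1$ and $L_2$ use disjoint label sets, any copy of $L$ restricts uniquely to a copy of $L_1$ together with a copy of $L_2$; conversely, given an arbitrary copy of $L_1$ and an arbitrary copy of $L_2$, their union is a valid copy of $L$ — the chosen vertices are automatically distinct because they lie in disjoint partitions, and there is no hyperedge of $L$ between the two sides to check (non-induced counting means adjacency or non-adjacency across the sides is irrelevant). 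This is a bijection between copies of $L$ and ordered pairs (copy of $L_1$, copy of $L_2$), which yields the product formula with no symmetry or correction factor.

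Finally I would bound the running time. Since $k = O(1)$, the graph $L$ has $O(1)$ vertices, there are only $O(1)$ ways to partition its components into $L_1$ and $L_2$, and the two precomputed counts are each at most $n^k$, hence fit in $O(1)$ machine words; picking one valid split and performing the single multiplication takes $O(1)$ time. The only place demanding care is the bijection argument — specifically checking that disjointness of the label sets is exactly what simultaneously prevents vertex collisions and removes any cross-edge constraint, so that the count is a clean product; the rest is bookkeeping.
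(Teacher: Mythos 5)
Your proof is correct and takes essentially the same approach as the paper, which simply multiplies the counts of the two vertex-disjoint labeled pieces; you additionally spell out the bijection justifying the product formula, which the paper leaves implicit.
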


\begin{proof}
Let one be labeled subgraph $L$, and the other $L'$. Given that they share no vertices, we can simply multiply the numbers of the two subgraphs. This clearly takes constant time.
\end{proof}

\begin{lemma}\label{lem: tiny count}
Let $G$ be defined as above. We can compute the count of any subgraph $H$ in $G$ with 1 edge or fewer in $\Otil(m)$ time. This applies regardless of whether or not $H$ is labeled.
\end{lemma}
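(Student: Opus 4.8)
The idea is that a hypergraph $H$ with at most one edge is, after deleting isolated vertices, just a single hyperedge, so the count we want factors into elementary pieces, each computable from one pass over the input. I would case on $|E_H|$, the number of hyperedges of $H$.

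\emph{Case $|E_H| = 0$.} Here $H$ is a set of isolated vertices, and a labeled copy of $H$ in $G$ is just an independent choice of a vertex for each vertex of $H$ from its allowed partition. So the count is a closed form: in the labeled case it is $\prod_{v \in V_H} |V_{\ell(v)}|$, where $\ell(v)$ is the label of $v$ (injectivity is automatic since the $V_i$ are disjoint), and in the unlabeled case it is the corresponding count over $V$ with the standard correction for $\mathrm{Aut}(H)$ and distinctness. This needs only the partition sizes, i.e.\ $\tilde O(n+m)$ time.

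\emph{Case $|E_H| = 1$.} Let $e$ be the edge of $H$, let $W \subseteq V_H$ be the vertices it touches, and let the rest be isolated. The two pieces $(W,\{e\})$ and $V_H \setminus W$ are disconnected, so by Lemma \ref{lem: disconnected count} the count of $H$ in $G$ is $N_e \cdot (\text{count of the isolated part})$, where $N_e$ is the number of hyperedges $e'$ of $G$ that are the image of $e$ under some label-respecting injection of $W$. The isolated part is handled exactly as in the $|E_H|=0$ case, and $N_e$ is computed by scanning the edge list $E(G)$ once: for each $e' \in E(G)$ check in $O(|e|)=O(k)$ time that $e'$ has arity $|e|$ and hits exactly the multiset of partitions prescribed by the labels on $W$ (in the unlabeled case just check the arity, or equivalently sum over the $O(k!)$ labelings). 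This is $O(mk) = \tilde O(m)$.

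The only points that need care are bookkeeping: making sure the embedding stays injective when we glue the isolated vertices onto $W$ (either the isolated labels are disjoint from those of $W$, so there is nothing to do, or we subtract the $O(1)$ collision terms, which are lower-order counts already computed), and correctly matching hyperedge arities and partition-multisets when $G$ is not itself $k$-partite. I do not expect a genuine obstacle; the whole lemma comes down to one linear scan of $E(G)$ plus a constant amount of arithmetic.
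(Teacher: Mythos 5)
Your proposal is correct and matches the paper's (very terse) proof, which likewise just iterates over the edge list and handles the remaining isolated vertices by elementary counting in $\Otil(km)=\Otil(m)$ time. The extra detail you give — the case split on $|E_H|\in\{0,1\}$, the product over partition sizes for isolated vertices, and the injectivity/distinctness bookkeeping — is a fleshed-out version of the same argument rather than a different route.
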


\begin{proof}
We can count these by iterating over all edges, and by counting the number of ways to select vertices using basic combinatorics. This takes $\Otil(km) =\Otil(m) $ time.
\end{proof}

\subsubsection{Recursion}

This step forms the main technical difficulty of this counting process. We will use all counts of subgraphs with a smaller number of hyperedges to count those with more hyperedges.

\begin{restatable}{lemma}{lemmaInclusion}\label{lem: inclusion}
Let $G$ be a labeled $k$-partite hypergraph with $n$ nodes per partition.

Say we are given the counts of the number of subgraphs $H$ in all hypergraphs in $S_G$.

Additionally, say we are given the counts of all labeled subgraphs of $H$ with $x\in [0,v]$ vertices and $y\in [0,e]$ hyperedges.

Let $L$ be a labeled subgraph of $H$ with $v$ vertices and $e+1$ edges.

Using all of these counts, we can count the number of not-necessarily induced subgraphs $L$ in $G$ in time $O(k!\cdot 2^{2^k} + b^{2^k})$.
\end{restatable}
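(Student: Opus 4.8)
The plan is to prove Lemma~\ref{lem: inclusion} by a hypergraph analogue of the inclusion--edgesclusion argument: peel one hyperedge off of $L$ at a time, and use the correlated graphs in $S_G$ together with the smaller subgraph counts to recover $\#(L)$ in $G$. Write $E_L = E_{L_0} \cup \{f_0\}$, where $L_0$ is the labeled subgraph of $H$ on the same $v$ vertices with the $e$ hyperedges $E_{L_0}$, so its count in $G$ is among the given data. Since copies are not-necessarily induced, $\#(L_0 \text{ in } G) = \#(L \text{ in } G) + \#(L_0 \text{ in } G \text{ with the image of } f_0 \text{ absent})$, so it suffices to compute the last term.

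The role of $S_G$ is to let us evaluate ``pattern sums.'' Recall that in $G$ every present hyperedge carries label $1$ and every absent hyperedge an independent uniform label in $[2,b]$, so for a fixed label-respecting injection $\phi$ and a member $G^{\vec\ell} \in S_G$, the image $\phi(f)$ lies in $G^{\vec\ell}$ exactly when the $G$-label of $\phi(f)$ equals $\ell_{\mathrm{pos}(f)}$. Consequently, summing the given count of $H$ in $G^{\vec\ell}$ over all label vectors $\vec\ell$ that are pinned to $1$ on the edge partitions used by a chosen set $S \subseteq E_H$ and free on the remaining at most $2^k$ edge partitions recovers, up to the known factor $b^{\,(\#\text{edge partitions}) - |E_H|}$, the number of label-respecting injections from all $k$ vertices whose image contains every hyperedge of $S$; restricting such an all-$k$-vertex count back to the $v$ vertices of $L$ costs only a factor $n^{k-v}$ because the partitions are disjoint, and if $L$ is disconnected one may instead feed the pieces into Lemma~\ref{lem: disconnected count}, with the base cases of zero or one hyperedges handled by Lemma~\ref{lem: tiny count}.

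With these pattern sums available, I would set up an inclusion--exclusion over the hyperedge-subsets lying between $E_L$ and the set of all hyperedges of $H$ spanned by $V_L$, with M\"obius signs on the Boolean lattice, using the $S_G$-derived all-vertex counts together with the given counts of labeled subgraphs of $H$ having at most $v$ vertices and at most $e$ hyperedges; solving the resulting linear system isolates the single unknown $\#(L \text{ in } G)$. The runtime is $O(b^{2^k})$ for the sums over $S_G$ (there are at most $b^{2^k}$ members and each pattern sum touches a subset of them) plus $O(k!\cdot 2^{2^k})$ for enumerating the at most $2^{2^k}$ relevant hyperedge-subsets and at most $k!$ vertex labelings and assembling the combination. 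The main obstacle I expect is purely combinatorial bookkeeping: verifying that the random labels on absent hyperedges integrate out cleanly in the pattern sums so no residual randomness survives, checking that the inclusion--exclusion only ever refers to subgraphs with at most $e$ hyperedges and at most $v$ vertices so the recursion is well-founded and the supplied counts genuinely suffice, and tracking the normalizations $b^{\,(\#\text{edge partitions}) - |E_H|}$ and $n^{k-v}$ exactly (not merely up to lower-order terms) through every step.
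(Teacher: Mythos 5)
Your overall strategy---pin the labels to $1$ on the edge partitions touched by $L$, aggregate the given $H$-counts over the corresponding members of $S_G$, and combine with the given smaller labeled counts to solve a linear relation for the single unknown---is the same family of argument as the paper's. But the identity you build everything on is not available from the given data. You claim that the pinned sums recover, up to the uniform factor $b^{(\text{number of edge partitions})-|E_H|}$, the number of \emph{label-respecting} injections of all $k$ vertices whose image contains every hyperedge of $S$. The counts supplied for the members of $S_G$ are counts of \emph{unlabeled} copies of $H$: a copy may assign the vertices of $H$ to the partitions by any of up to $k!$ bijections, so its hyperedges land in edge partitions that need not be the canonical ones for $E_H$. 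For such a copy, the set of its edges falling in pinned positions is not the image of $S$, and the number of free positions it occupies is not $|E_H|-|S|$; hence both the ``these edges must be real'' condition and the power of $b$ vary with the labeling and with the copy's overlap with $L$. The pinned aggregate is therefore a mixture, over all labelings and all overlaps $L'\subseteq L$, of differently weighted counts, and disentangling that mixture is precisely the content of the paper's proof: it writes the pinned sum $c_{S_G[L]}$ as $\sum_{L'\subseteq L} c_{L'}\cdot n^{k-v}\cdot c_{G_{L,L'}}\cdot b^{2^k-1-e-e_H+e_{L'}}$, where the template counts $c_{G_{L,L'}}$ (brute-forced over the at most $k!$ labelings, for each of the $O(2^{2^k})$ subgraphs $L'$ of $L$) record how a copy of $H$ overlapping $L$ in exactly $L'$ can be completed and how many times it is counted, and then solves this single equation for the lone unknown $c_L$. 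Your proposal defers exactly this mixing analysis to ``combinatorial bookkeeping,'' which is where the actual work of the lemma lies.

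Two smaller points. If your pattern sums were available as stated, no inclusion--exclusion would be needed at all: since the partitions are disjoint, the number of label-respecting $k$-vertex injections whose image contains $E_L$ equals $n^{k-v}$ times the count of $L$ in $G$, so you could simply divide. Conversely, the M\"obius inversion you propose over supersets of $E_L$ inside the hyperedges of $H$ spanned by $V_L$ is aimed at the wrong obstruction: the cross-terms that must be cancelled come from copies whose overlap with $L$ is a proper subset $L'\subsetneq L$ (these live on the subset lattice and carry the $c_{G_{L,L'}}$ coefficients), not from larger edge sets. Finally, the initial peeling of $f_0$ (reducing to counting $L_0$ with the image of $f_0$ absent) is harmless but unnecessary; the paper's equation isolates $c_L$ directly.
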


The techniques we use in this proof essentially comes from a careful extension of the technique used to prove Lemma 5.9 in \cite{factoredProblems}. Of course, with there being hyperedges instead of edges, there are key parameters that change (resulting in a slightly different Lemma statement), but the key ideas are the same. For this reason, we leave the proof for this Lemma to Section \ref{sec: appendixLemmaRecursion}.

\subsubsection{Reducing to Erd{\H{o}}s-R{\'{e}}nyi}
We reduce counting labeled copies of $H$ in a $k$-partite Erd{\H{o}}s-R{\'{e}}nyi hypergraph to counting $H$ in Erd{\H{o}}s-R{\'{e}}nyi hypergraphs. Note that picking a particular labeling solves the problem of U\#H, as we can treat labeled hypergraph partitions as being $H$-partite.

\begin{lemma}\label{lem: AC to HP count}
Let $H$ have $e$ edges and $k$ vertices. Let $A$ be an average-case algorithm for counting unlabeled subgraphs $H$ in $k$-partite Erd{\H{o}}s-R{\'{e}}nyi graphs with edge probability $1/b$ which takes $T(n)$ time with probability $1 - \varepsilon/b^{2^k }$.

The number of labeled copies of subgraph $H$ in $k$-partite Erd{\H{o}}s-R{\'{e}}nyi graphs with edge probability $1/b$ can be computed in time $\Otil \left(2^{2^k} \cdot \left(m + k!\cdot 2^{2^k } + b^{2^k }\right) + b^{2^k  } \cdot T(n) \right) $ with probability $1-\varepsilon$.
\end{lemma}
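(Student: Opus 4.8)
The plan is to turn the single $k$-partite Erd\H{o}s--R\'enyi input $G$ into the correlated family $S_G$ of (at most) $b^{2^k}$ hypergraphs defined above, call the assumed average-case algorithm $A$ on every member of $S_G$ to obtain the unlabeled count of $H$ in each, and then feed these $b^{2^k}$ numbers---together with directly computed counts of all small labeled subgraphs of $H$---into the recursion of Lemma~\ref{lem: inclusion} to reconstruct the number of labeled copies of $H$ in $G$. Concretely: (i) label every hyperedge of the complete $k$-partite hypergraph on the edge-types of $H$, giving label $1$ to the hyperedges actually present in $G$ and an independent uniformly random label from $[2,b]$ to the rest; (ii) for each of the $b^{2^k}$ label vectors $\vec\ell$, extract $G^{\vec\ell}\in S_G$ and run $A(G^{\vec\ell})$; (iii) run the edge-count recursion to recover the labeled $H$-count.

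The key probabilistic point is that $A$ is being fed inputs from exactly the distribution it is promised to handle, and that the union bound over the $b^{2^k}$ calls still leaves success probability $1-\varepsilon$. For a fixed hyperedge $f$ of a fixed edge-type, $\Pr[f\text{ gets label }1]=\Pr[f\in G]=1/b$, and for $\ell\in[2,b]$, $\Pr[f\text{ gets label }\ell]=\Pr[f\notin G]\cdot\frac1{b-1}=(1-\tfrac1b)\cdot\frac1{b-1}=\frac1b$; moreover these label assignments are mutually independent across hyperedges, since each is a deterministic function of the independent coin for the event $f\in G$ and an independent uniform draw. Hence for every $\vec\ell$, $G^{\vec\ell}$ is distributed exactly as a $k$-partite Erd\H{o}s--R\'enyi hypergraph with each hyperedge present independently with probability $1/b$. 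Since $A$ errs on such an input with probability at most $\varepsilon/b^{2^k}$, a union bound over the $b^{2^k}$ calls gives that, with probability at least $1-\varepsilon$, we simultaneously obtain the correct count of $H$ in every member of $S_G$; we condition on this event for the remainder.

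Given those counts we recover the labeled count by induction on the number of hyperedges. Enumerate all labeled subgraphs of $H$ (at most $2^{2^k}$ edge-subsets times at most $k!$ labelings of the $\le k$ vertices) and process them in nondecreasing order of hyperedge count. A labeled subgraph with at most one hyperedge is counted directly in $\Otil(m)$ time by Lemma~\ref{lem: tiny count}. A disconnected labeled subgraph is split into two labeled pieces, each with strictly fewer hyperedges and hence already processed, and combined in $O(1)$ time by Lemma~\ref{lem: disconnected count}. A connected labeled subgraph $L$ with $e'+1\ge 2$ hyperedges is counted in $O(k!\cdot 2^{2^k}+b^{2^k})$ time by Lemma~\ref{lem: inclusion}, which needs precisely the already-available counts of all labeled subgraphs with at most $e'$ hyperedges together with the counts of $H$ in $S_G$. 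When the induction reaches the full edge set of $H$ we read off the count of $H$ under the labeling sending vertex $i$ of $H$ to partition $i$; since the input is $k$-partite this equals the desired number of labeled copies. Summing costs: the $b^{2^k}$ calls to $A$ cost $b^{2^k}T(n)$, building the labeled hypergraph and extracting the members of $S_G$ is polynomial in the number of possible hyperedges and is absorbed into this term (as $T(n)=\tilde\Omega(n+m)$), and the recursion contributes at most $O(2^{2^k}k!)$ times $\Otil(m+k!\cdot 2^{2^k}+b^{2^k})$; together these match the stated bound $\Otil\!\left(2^{2^k}\!\left(m+k!\,2^{2^k}+b^{2^k}\right)+b^{2^k}T(n)\right)$.

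The genuinely hard content is Lemma~\ref{lem: inclusion} itself, which is deferred to Section~\ref{sec: appendixLemmaRecursion}; within the present proof the only delicate steps are (a) the marginal-distribution claim for each $G^{\vec\ell}$, which must carefully invoke the independence of the per-edge coins, and (b) checking that ordering the recursion by hyperedge count really does make every prerequisite count (both halves of a disconnected subgraph, and every labeled subgraph with one fewer hyperedge) available before it is needed.
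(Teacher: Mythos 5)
Your proposal is correct and follows essentially the same route as the paper: construct the correlated family $S_G$ via edge labels, call $A$ on each member (each distributed exactly as a $k$-partite Erd\H{o}s--R\'enyi hypergraph, so a union bound over the $b^{2^k}$ calls gives success probability $1-\varepsilon$), compute the base-case counts via Lemma~\ref{lem: tiny count}, and then apply Lemma~\ref{lem: inclusion} to all labeled subgraphs of $H$ in order of increasing hyperedge count. If anything, you are more explicit than the paper about the per-edge label marginals/independence and about the ordering of the recursion (including the disconnected case via Lemma~\ref{lem: disconnected count}), which are exactly the points the paper leaves implicit.
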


We note that the technique we use here is slightly simpler than the one used to achieve the analogous statment in \cite{factoredProblems}.

\begin{proof}
Define $G$ to be a $k$-partite Erd{\H{o}}s-R{\'{e}}nyi hypergraph with edge probability $1/b$. Let $S_G$ and $\gchosenlabels$ be defined as above.

We aim to meet the conditions of Lemma \ref{lem: inclusion} for labeled subgraphs of size $v=k$ and $e=1$ . First, we require the counts of the number of subgraphs $H$ in $S_G$. We call $A$ on each one of the hypergraphs in $S_G$, again noting that each of them are drawn from the same distribution, including $G$. We make $b^{2^k -1 }$ such calls.

Next, we require the counts of all labeled subgraphs of $H$ with $x\in [0,k]$ vertices and $y\in [0,1]$ edges. By Lemma \ref{lem: tiny count}, we can do this in $\Otil( 2^{2^k} m)$ time.

By Lemma \ref{lem: inclusion}, we can now have the counts of any labeled subgraph of size $2$, using $O(k!\cdot 2^{2^k } + b^{2^k })$ time. We do this same process for all subgraphs of $H$, doing subgraphs of fewer edges first, to ensure we have all counts we need. Keep in mind that we do not need to call $A$ again, as the same counts still work.

In the end, we will have made $b^{2^k - 1 }$ calls to $A$. By union bound, this gives us $1-\varepsilon$ probability of failure.

We invoke Lemma \ref{lem: inclusion} once for every subgraph of $H$, of which there are $2^{2^k}$, giving us the desired runtime.
\end{proof}

\begin{lemma} \label{lem:WCtoAC}
Let $H$ have $e$ edges and $k$ vertices, and let $A$ be an average-case algorithm for counting subgraphs $H$ in Erd{\H{o}}s-R{\'{e}}nyi graphs where each hyperedge exists with probability $1/b$ which takes $T(n)$ time with probability $1-2^{-2k}\cdot b^{2^k}\cdot\left(\log(e)\log\log(e))^{-\omega(1)}\right)$.

Then, there exists an algorithm to count subgraphs in uniform $H$-partite graphs in time $\Otil(T(n))$ (U\#H) with probability at least $1-O(2^{-\log^2{n} } )$.

\end{lemma}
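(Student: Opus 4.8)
The plan is to obtain Lemma~\ref{lem:WCtoAC} as a short consequence of Lemma~\ref{lem: AC to HP count} together with the observation, recorded just before that lemma, that a labeled copy of $H$ under the \emph{identity} labeling of the partitions is the same thing as an ordinary copy of $H$ in an $H$-partite graph: hence counting labeled copies of $H$ for the identity labeling is precisely the U\#H problem of Definition~\ref{def: UH}. Concretely, given the hypothesized average-case algorithm $A$ that counts copies of $H$ in \erdosRen hypergraphs with hyperedge probability $1/b$, I would first use $A$ as the input algorithm required by Lemma~\ref{lem: AC to HP count} (bridging the general and $k$-partite \erdosRen distributions by a routine argument, discussed below). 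Then, given a U\#H instance $G_0$ -- an $H$-partite hypergraph in which every $H$-relevant hyperedge is present independently with probability $1/b$ -- I would \emph{complete} $G_0$ to a full $k$-partite \erdosRen hypergraph $G$ by inserting each missing partition-respecting hyperedge independently with probability $1/b$, run the procedure of Lemma~\ref{lem: AC to HP count} on $G$ to obtain the counts of all labeled subgraphs of $H$, and output the count belonging to the identity labeling. This yields the correct U\#H answer because $G$ and $G_0$ carry exactly the same hyperedges among $H$-connected partitions and a labeled copy of $H$ can only use hyperedges of that shape; the hyperedges added during completion are invisible to the identity-labeled count even though they are exactly what makes $G$ a bona fide \erdosRen hypergraph, so that the average-case guarantee of $A$ applies to it.

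The two remaining pieces are bookkeeping. For the running time, Lemma~\ref{lem: AC to HP count} runs in $\Otil\!\big(2^{2^k}(m + k!\cdot 2^{2^k} + b^{2^k}) + b^{2^k}\cdot T(n)\big)$; since $k$ (and $b$) are constant the factors $2^{2^k}$, $k!$, $b^{2^k}$ are $O(1)$, and since any algorithm for this task must read the $\Theta(m)$ hyperedges of its input we have $T(n) = \Omega(m)$, so the whole bound collapses to $\Otil(T(n))$. For the success probability, the procedure of Lemma~\ref{lem: AC to HP count} calls $A$ once on each of the $b^{2^k}$ hypergraphs of $S_G$, each of which is \emph{marginally} distributed exactly as a $k$-partite \erdosRen hypergraph with hyperedge probability $1/b$ (this is where the uniformly random relabeling of the non-$G$ hyperedges over labels $2,\dots,b$ is used), so the per-instance failure probability hypothesized for $A$ applies to each call, and a union bound over the $b^{2^k}$ calls leaves overall failure probability $O(2^{-\log^2 n})$ -- the same calibration of parameters as in the corresponding step of \cite{factoredProblems}.

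I do not expect a genuinely new obstacle: the substantive work, namely the Inclusion-Edgesclusion recursion that disentangles labeled counts from unlabeled ones, has already been done in Lemmas~\ref{lem: tiny count}, \ref{lem: inclusion} and \ref{lem: AC to HP count}. The one point that needs real care is matching distributions at each interface -- confirming that completing the $H$-partite input to a full $k$-partite \erdosRen hypergraph produces exactly the distribution assumed by $A$'s precondition while leaving the identity-labeled $H$-count unchanged, and confirming that each member of the correlated family $S_G$ is marginally a $k$-partite \erdosRen hypergraph with hyperedge probability precisely $1/b$ (and, if one insists on the literal reading of the hypothesis, that an algorithm for general \erdosRen hypergraphs can be converted to one for $k$-partite \erdosRen hypergraphs, with the $n$-versus-$kn$ vertex count absorbed since $k$ is constant). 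Once those checks are in place, the lemma follows by chaining the cited results, and everything that remains is arithmetic.
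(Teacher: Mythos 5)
Your overall route coincides with the paper's: complete the $H$-partite input to a $k$-partite Erd\H{o}s--R\'enyi hypergraph (the added hyperedges lie between partition sets not connected in $H$, so they cannot appear in an identity-labeled copy of $H$), then feed the machinery of Lemma~\ref{lem: AC to HP count} and read off the identity-labeled count. The runtime collapse and the observation that each member of $S_G$ is marginally $k$-partite Erd\H{o}s--R\'enyi are also fine. The gap is in the step you label ``routine'': converting the hypothesized algorithm $A$, which is only guaranteed on \emph{unpartitioned} Erd\H{o}s--R\'enyi hypergraphs, into the subroutine that Lemma~\ref{lem: AC to HP count} actually requires, namely an average-case counter for $H$ in \emph{$k$-partite} Erd\H{o}s--R\'enyi hypergraphs. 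This is not a matter of $n$ versus $kn$ vertices. A $k$-partite Erd\H{o}s--R\'enyi hypergraph has every non-partition-respecting hyperedge absent with probability $1$, so it is exponentially atypical under the distribution on which $A$ is guaranteed; an average-case algorithm may fail on every such input, and calling $A$ directly on the members of $S_G$ gives you no correctness guarantee at all.

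The paper closes exactly this hole with a second randomized completion plus an inclusion--exclusion that your proposal omits: given a $k$-partite Erd\H{o}s--R\'enyi instance, it adds the within-partition (non-crossing) hyperedges independently with probability $1/b$ so that the resulting hypergraph is genuinely Erd\H{o}s--R\'enyi, and then---because $A$ now also counts spurious copies of $H$ that use the newly added hyperedges or take two or more vertices from a single partition---it runs $A$ on the sub-hypergraph induced by each of the $2^k$ subsets of the partitions and combines the answers by inclusion--exclusion to isolate the copies of $H$ using exactly one vertex per partition. These two moves (re-randomizing into the correct input distribution, and correcting the count afterwards) are the substantive content of the lemma beyond Lemma~\ref{lem: AC to HP count}, and they also change the bookkeeping you did: the failure probability must be union-bounded over the extra factor of $2^k$ calls to $A$ per member of $S_G$, which is precisely why the hypothesis of the lemma budgets an additional exponentially small factor in $A$'s failure probability. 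Your proof as written would either invoke $A$ outside its guarantee or report an overcount; with the completion-plus-inclusion--exclusion step inserted, it becomes the paper's argument.
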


\begin{proof}
We begin with a note that, if we can count labeled subgraphs $H$ in Erd{\H{o}}s-R{\'{e}}nyi $k$-partite subgraphs, we can count subgraphs in uniform $H$-partite graphs with the same success probability. We simply add in the ``missing" hyperedges from the partitions not in $H$ in an Erd{\H{o}}s-R{\'{e}}nyi fashion, taking $\Otil(m)$ time.

Thus, by Lemma \ref{lem: AC to HP count}, it suffices to count $H$ in $k$-partite Erd{\H{o}}s-R{\'{e}}nyi graphs using $A$. Beginning with a $k$-partite Erd{\H{o}}s-R{\'{e}}nyi graph ($G$), we need to add random edges within each partition to make it look like a unpartitioned Erd{\H{o}}s-R{\'{e}}nyi graph. After adding the edges within the partitions with probability $1/b$ (call this resultant graph $G'$), we note that the count of $H$ in $G'$ that only uses one node from each partition in $G$ is still our answer. We can thus use inclusion exclusion to eliminate the counts of all instances of $H$ that do not use exactly one node from each partition in $G$. We call $A$ on every subset of the $k$-partitions, of which there are $2^k$, and we perform the inclusion-exclusion calculation. By union bound, the probability of this is at least $1-\left(\log (e) \log\log (e)\right)^{-\omega(1)} $.

\end{proof}

\subsection{\#HK to \#H and the proof of Theorem \ref{thm: WCtoACcounting}}

\begin{lemma}\label{lem: HK to H}

Let $A$ be an algorithm that solves \#H on $H$-partite $G$ with $n$ vertices and $H$ with $v = k = O(1)$ and $e = O(1)$ edges in time $T(n)$ with probability of success at least $1-\varepsilon$.

Then there exists an algorithm that solves \#HK on $k$-partite $G'$ with $n$ vertices and the same $H$ that runs in $O(T(n))$ time with probability of success at least $1 - 2^{2^k}\varepsilon$.

\end{lemma}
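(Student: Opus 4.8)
The plan is to reduce \#HK to \#H by accounting for the single structural difference between the two problems: in a \#H instance the host graph must be $H$-partite (hyperedges occur only between tuples of partitions that form a hyperedge of $H$), whereas in a \#HK instance the host graph $G'$ is an arbitrary $k$-partite hypergraph and may carry ``spurious'' hyperedges. Because a $k$-set containing one vertex per partition can contain a copy of $H$ that uses such spurious hyperedges under a non-identity identification of $V(H)$ with the partition labels, one cannot simply delete the spurious hyperedges and make a single call to $A$. Instead I would use inclusion-exclusion over the ways $H$ can sit inside the partition structure.

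Concretely, I would write the \#HK answer as $\bigl|\bigcup_{\rho} X_\rho\bigr|$, where $\rho$ ranges over the at most $k!$ bijections from $V(H)$ to the $k$ partition labels and $X_\rho$ is the set of one-vertex-per-partition $k$-sets $S$ such that every hyperedge of $\rho(E_H)$ is present in $G'[S]$. By inclusion-exclusion, $\bigl|\bigcup_\rho X_\rho\bigr| = \sum_{\emptyset\neq R}(-1)^{|R|+1}\bigl|\bigcap_{\rho\in R}X_\rho\bigr|$, and $\bigcap_{\rho\in R}X_\rho$ is exactly the set of one-vertex-per-partition $k$-sets of $G'$ that contain every hyperedge of the ``union hypergraph'' $H_R:=\bigcup_{\rho\in R}\rho(E_H)$ on the label set $[k]$. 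Hence each term depends only on $H_R$, and $H_R$ takes at most $2^{2^k}$ distinct values, since there are at most $2^k$ possible hyperedges on $k$ labeled vertices.

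For a fixed target $H_R$, the number of one-vertex-per-partition $k$-sets of $G'$ containing all hyperedges of $H_R$ is precisely the \#H-count of $H_R$ in the $H_R$-partite graph obtained from $G'$ by deleting every hyperedge whose partition-signature is not a hyperedge of $H_R$; since $H_R$ has $k=O(1)$ vertices and at most $2^k=O(1)$ hyperedges, this is one legitimate query to $A$. So the reduction builds at most $2^{2^k}$ restricted graphs in $\mathrm{poly}(n)$ time, makes at most $2^{2^k}=O(1)$ calls to $A$ (caching repeated targets), and outputs the signed sum; the running time is $O(T(n))$, and conditioned on every call returning the correct value the output equals the \#HK count, so a union bound over the at most $2^{2^k}$ calls yields success probability at least $1-2^{2^k}\varepsilon$.

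The step I expect to require the most care is the inclusion-exclusion bookkeeping itself: verifying that distinct copies of $H$ inside one $k$-set — those arising from different identifications $\rho$, including those forced to coincide when $H$ has automorphisms or is disconnected — are counted with total multiplicity exactly one, and confirming that $A$'s guarantee is meant for the whole family of constant-size targets $H_R$ that arise (each still with $v=O(1)$, $e=O(1)$). This combinatorial step is precisely where the $2^{2^k}$ blow-up, and hence the weakening of the success probability from $1-\varepsilon$ to $1-2^{2^k}\varepsilon$, comes from, but since $k$ is constant it costs only a constant factor in time and an $O(1)$-factor loss in the failure probability.
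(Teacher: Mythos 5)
Your proposal has a genuine gap: it computes a different quantity from the one the paper's \#HK reduction targets. The definition of \#HK is phrased loosely, but the paper's own proof (``each valid instance of $H$ in $G'$ will appear exactly once in one of the edge partition choices, namely, the one that contains each of the edges in that particular $H$''), like its treatment of labeled copies elsewhere in the worst-case-to-average-case chain, counts each realized \emph{copy} of $H$, i.e., each one-vertex-per-partition tuple together with a present placement $\rho(E_H)$ of the edges; a tuple supporting two distinct placements contributes $2$. Your inclusion-exclusion computes $\bigl|\bigcup_\rho X_\rho\bigr|$, the number of tuples supporting \emph{at least one} placement, which de-duplicates at the level of vertex sets. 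Concretely, let $H$ have vertices $x_1,x_2,x_3$ and the single edge $\{x_1,x_2\}$, and let $G'$ have one vertex $w_i$ in each $V_i$ with hyperedges $\{w_1,w_2\}$ and $\{w_2,w_3\}$: there are two copies of $H$ with one node per partition (and the paper's algorithm returns $2$), while your union has size $1$. The double counting you feared is not over-counting under the intended semantics: copies on the same tuple with different edge sets must be counted separately, and the only collapse that is needed, namely bijections $\rho,\rho'$ with $\rho(E_H)=\rho'(E_H)$ (automorphisms), is handled by indexing over slot sets rather than over bijections.

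A second, related gap is oracle usage: your reduction queries a counter for the union patterns $H_R=\bigcup_{\rho\in R}\rho(E_H)$, which are generally not (isomorphic to) $H$, whereas the hypothesis, and the upstream average-case assumption it is instantiated with, only provide a \#H solver for the single pattern $H$. The paper's argument needs neither signs nor new patterns: for each of the at most $\binom{2^k}{e}\le 2^{2^k}$ sets of edge slots that form a placement of $H$ onto the partition labels, delete every hyperedge of $G'$ whose signature lies outside that set, relabel so the host is $H$-partite, make one call to $A$, and sum the answers; every copy of $H$ is counted exactly once, in the unique call whose slot set equals its edge signature, giving time $O(T(n))$ and success probability $1-2^{2^k}\varepsilon$ by a union bound. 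If you insist on the ``at least one copy per tuple'' semantics, your grouping of terms by $H_R$ and the union bound are internally coherent, but that is a different counting problem and requires the strictly stronger assumption that \#H can be solved for every constant-size pattern that arises.
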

\begin{proof}
Intuitively, we are using the fact that every instance of $H$ that appears in $G'$ must have come from one labeling of the $k$ partitions that happens to be $H$-partite. Essentially, we are iterating over all possible labelings of the $k$-partitions, and adding up all of the counts. However, doing this naively can yield double-counting if there is the right symmetry in the structure of $H$ and the edges in $G'$. Thus, we iterate over all collections of $e$ of the $2^k$ edge partitions, using $A$ on the ones that are $H$-partite. There are at most $2^ {2 ^ k} $ such selections.

It's clear that each valid instance of $H$ in $G'$ will appear exactly once in one of the edge partition choices, namely, the one that contains each of the edges in that particular $H$. We run the algorithm once per valid edge partition, giving the desired runtime. Taking a union bound over the probability of failure, we get the desired probability of success.

\end{proof}

We can now prove the main theorem in this section.

\WCtoACcounting*

\begin{proof}
This immediately follows from applying Lemma \ref{lem:WCtoAC}, Corollary \ref{H TO UH}, and Lemma \ref{lem: HK to H}.
\end{proof}

\section{Applications to Database Problems}
\label{sec:database}
The aim of this section is to demonstrate the applicability of the above techniques to solve existing problems in the databases setting. We begin by defining the relevant terms.

A \emph{database} $D$ is a relational structure with a set of objects $V$, and relations $R_i(\Vec s)$, where $\Vec s\in V^r$. We refer to $r$ as the \emph{arity} of $R_i$. A relational term that exists in the database is referred to as a \emph{fact}.

A \emph{conjunctive query} $Q(R_a(\Vec s_a),...)$ is defined as a conjunction of the relations $R_j$ in its input. The variables in $X_i$ are referred to as \emph{free variables}. The query evaluates to true if there is some assignment of variables in $V$ to the free variables in the sets $X_i$ such that every relation in $Q$ exists in the database. We note that some of the sets $X_i$ may be share free variables. Naturally, every single free variable with the same label must be the same. In this paper, this is the only type of query we consider, and so we will simply refer to them as \emph{queries}.

A \emph{count query} simply answers with the number of unique assignments of variables that are valid for $Q$. A query contains \emph{self-joins} if it contains more than one copy of the same relation. We say that a query is \emph{self-join-free} if it has no self-joins. For this paper, we assume that the size of the queries is constant.

\subsection{Worst case to Average Case reduction for self-join-free count queries}

\begin{definition}
The \textbf{self-join-free count query (SCQ)} problem takes as input a database $D$ and a query $Q$ and asks for the count of the number of unique assignments of elements in the domain of $D$ to free variables in $Q$ such that $Q$ is satisfied. Relations in $D$ can be limited to those that appear in $Q$.
\end{definition}

\begin{definition}
The \textbf{uniform self-join-free count query (USCQ)} problem is the same as the SCQ problem, only that the database is such that every fact within a relation exists with some probability $\mu_{R_{i}}$. This is the average-case version of the SCQ problem.
\end{definition}

Much like the reduction we did in Section \ref{sec: H to UH}, we start by constructing a good low-degree polynomial for SCQ.

\begin{definition}
Let $D$ be a database with domain $V$ and relations $R_i$. Let $Q$ be a count query as defined above, where the number of relations in $Q$ is $k$ and the number of free variables is $r$. Note that $r\leq k$. Let $\Vec E$ be a vector such that every entry corresponds to a potential fact in the database, where it is $1$ if the fact is in the database, and $0$ otherwise. Let $h(v_1,...,v_r)$ be a function that multiples all corresponding entries for facts in the query for the selection of domain elements $v_1$ through $v_r$. Note that this function returns $1$ if the selection of variables in the input form a valid response to the query and $0$ otherwise. Define $f$ as follows:

\[
f(\Vec E) = \sum_{v_1,...,v_r\in V} h(v_1,...,v_r)\mod{p}
\]

where $p$ is some prime.
\end{definition}

\begin{lemma}\label{lem: SCQ polynomial accuracy}
The above function returns the output of SCQ given that $p$ is a prime in $[2n^k, n^{2k}] $
\end{lemma}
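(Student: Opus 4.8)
The plan is to mirror the proof of Lemma~\ref{lem: f correct}, adapting the argument from the \#H setting to the database setting. First I would unpack the definition of $h$: fix an assignment $v_1,\ldots,v_r \in V$ of domain elements to the $r$ free variables of $Q$. By definition, $h(v_1,\ldots,v_r)$ is the product of the entries of $\Vec E$ indexed by the facts that the relations $R_j$ appearing in $Q$ demand under this assignment. Each such entry is $0$ or $1$, so the product equals $1$ precisely when every demanded fact is present in $D$ --- that is, precisely when $(v_1,\ldots,v_r)$ is a satisfying assignment for $Q$ --- and equals $0$ otherwise. A point to state carefully here is that repeated free variables shared across several relations are handled correctly, since the same element of $V$ is substituted for every occurrence of a given variable, so the corresponding entries of $\Vec E$ are read off consistently; and because $Q$ is self-join-free, distinct relations in $Q$ use distinct relation symbols, so no genuine conflict arises (and even multiplicities among the multiplied $1$'s would be harmless).

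Next I would show that the integer $\sum_{v_1,\ldots,v_r \in V} h(v_1,\ldots,v_r)$, computed over $\mathbb Z$, is exactly the SCQ output. Each satisfying assignment contributes $1$ and each non-satisfying one contributes $0$; distinct tuples $(v_1,\ldots,v_r)$ are distinct assignments, so nothing is double-counted and nothing is missed. By the definition of a count query, this sum is precisely the number that SCQ must return on input $(D,Q)$.

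Finally I would verify that reduction modulo $p$ is lossless. The number of satisfying assignments is at most $|V|^r = n^r \le n^k$ because $r \le k$, and $p \ge 2n^k > n^k$, so the true count lies strictly in $\{0,1,\ldots,p-1\}$ and is therefore equal to its own residue modulo $p$; hence $f(\Vec E) = \big(\sum_{v_1,\ldots,v_r} h(v_1,\ldots,v_r)\big) \bmod p$ equals the SCQ output, as claimed. The only mild obstacles are bookkeeping ones --- being precise about how repeated free variables interact with the indexing of $\Vec E$, and confirming that the count cannot reach $p$ --- both of which are routine once the definitions are written out.
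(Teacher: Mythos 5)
Your proposal is correct and follows essentially the same route as the paper's proof: argue that $h$ evaluates to $1$ exactly on satisfying assignments, that each satisfying assignment corresponds to exactly one term of the sum so the integer sum equals the SCQ count, and that the count is at most $n^r \le n^k < p$ so reducing modulo $p$ loses nothing. Your additional remarks about repeated free variables and self-join-freeness are just a more careful spelling-out of the paper's appeal to the definition of $h$.
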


\begin{proof}
Consider an arbitrary term in the sum, defined by a selection of $r$ elements. As discussed in the definition of $f$, $h$ correctly identifies when the inputs form a valid response to the query. Also, every valid response to the query corresponds to one selection of free variable assignments, which is represented in only one term in the sum. Thus, acquiring the sum of all of these terms yields the correct count, before taking the modulo.

The number of possible correct queries is upperbounded by the total number of possible assignments to the free variables, which is $n^r\leq n^k$, so the modulo does not affect the count.
\end{proof}

\begin{lemma}
Let $ Q $ be such that $|Q|= O(1) $ and define $d = |Q|$.
$f$ is a good low-degree polynomial for SCQ of degree $d$ if the size of the query is bounded by a constant. \label{lem:scqGLDP}
\end{lemma}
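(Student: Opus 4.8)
The strategy mirrors the proof of Lemma~\ref{lem: f GLDP} for the hypergraph counting problem, adapted to the database setting. First I would verify the easy structural requirements of Definition~\ref{def: GLDP}: the function $f(\Vec E) = \sum_{v_1,\ldots,v_r\in V} h(v_1,\ldots,v_r) \bmod p$ is a polynomial over the finite prime field $\mathbb{F}_p$, and it has at most $n^r \le n^k$ monomials (one per choice of free-variable assignment), which is polynomial in the input size. Choosing $p$ to be a prime in $[2n^k, n^{2k}]$, Lemma~\ref{lem: SCQ polynomial accuracy} already tells us that $f$ agrees with SCQ on every Boolean input $\Vec E$, so the first bullet of Definition~\ref{def: GLDP} is satisfied.

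Next I would bound the degree. The function $h(v_1,\ldots,v_r)$ multiplies together exactly one entry of $\Vec E$ for each relation appearing in $Q$; since $Q$ is self-join-free it has $|Q|$ distinct relations, so $h$ is a product of exactly $|Q|$ variables and $f$ has degree $d = |Q|$. Because $|Q| = O(1)$ by hypothesis, we have $d = O(1) = o(\lg(n)/\lg\lg(n))$, satisfying the degree requirement.

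Finally, the main point to check carefully is the strongly $d$-partite condition. I would partition the fact-variables in $\Vec E$ into $d = |Q|$ classes, one class per relation $R_i$ in $Q$: all potential facts $R_i(\Vec s)$ for a fixed $i$ go into class $i$. Every monomial of $f$ is (up to the constant coefficient) a single term $h(v_1,\ldots,v_r)$, which contains precisely one fact-variable from each relation $R_i$ — this is exactly where self-join-freeness is used, as it guarantees each relation contributes one and only one factor. Hence no monomial contains two variables from the same class, so $f$ is strongly $|Q|$-partite. The one subtlety worth a sentence is that shared free variables across relations do not break this: the partition is by relation (i.e.\ by which fact-variable is being read), not by free variable, so the coincidence of free variables only restricts which monomials are nonzero, never placing two variables of a single monomial in the same class.

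\textbf{Expected main obstacle.} None of the steps is technically deep; the one place where care is required is articulating the strongly-partite argument precisely and making explicit that it relies on the self-join-free assumption (so that the relation-indexed partition has the desired property). Everything else is a direct transcription of the corresponding hypergraph lemma.
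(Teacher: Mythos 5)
Your proposal is correct and follows essentially the same route as the paper: verify agreement with SCQ via Lemma~\ref{lem: SCQ polynomial accuracy}, bound the degree by the (constant) number of relations in $Q$, and use self-join-freeness to get the strongly $d$-partite property with the partition of fact-variables indexed by relation. Your write-up is merely more explicit about the relation-indexed partition and shared free variables than the paper's terser argument, but no new ideas or differences in substance are involved.
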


\begin{proof}
We recall from Definition \ref{def: GLDP} that we need to show that $f$ has three properties.

First, by Lemma \ref{lem: SCQ polynomial accuracy}, we get the first property.

Second, the degree of the function is bounded by the degree of each of the terms. This is then the number of relations in the query, which is constant, as desired for the second property.

Lastly, the fact that each relation only appears once in the set (the query is self-join free) means that no monomial can ever appear more than once in a term, meaning that it is $d$-partite, where $d$ is the number of relations in the query.
\end{proof}

We now apply Theorem \ref{thm: GLDP} to get the result desired.

\begin{corollary}\label{cor: SCQ to USCQ}
Let $ Q $ be such that $|Q|= O(1) $ and define $d = |Q|$. Let the size of the query be constant. If an algorithm exists to solve USCQ in time $T(n)$ with probability $1-1/\omega(\lg^d(n)\lg\lg^d(n) )$, then an algorithm exists to solve SCQ in time $\Otil(T(n) + n^2)$ with probability at least $1-O \left(2^ {-\lg^2(n)}\right)$.
\end{corollary}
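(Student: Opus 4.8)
The plan is to derive the corollary as a direct instantiation of the black-box worst-case to average-case reduction of Theorem~\ref{thm: GLDP}, with $P = \mathrm{SCQ}$ and the polynomial $f$ constructed for $\mathrm{SCQ}$ above. First I would verify that every hypothesis of Theorem~\ref{thm: GLDP} is in place. Lemma~\ref{lem: SCQ polynomial accuracy} shows that $f$ computes $\mathrm{SCQ}$ over the prime field $\mathbb{F}_p$ with $p \in [2n^k, n^{2k}]$, so $p < n^c$ for the constant $c = 2k$, matching the range required in Definition~\ref{def: GLDP}; and Lemma~\ref{lem:scqGLDP} shows $f$ is a good low-degree polynomial for $\mathrm{SCQ}$ of degree $d = |Q| = O(1)$ (it is a polynomial over $\mathbb{F}_p$ with polynomially many monomials, has constant degree, and is strongly $d$-partite precisely because $Q$ is self-join-free, so the $d$ blocks can be taken to be the $d$ relations of $Q$ and each monomial of $h$ uses one fact-variable from each).

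Next I would match up the success-probability condition. Theorem~\ref{thm: GLDP} asks for an algorithm $A$ that, when its $n$ bits are drawn iid from $\mathrm{Ber}[\mu]$, succeeds with probability at least $1 - 1/\omega(\lg^d(n)\lg\lg^d(n))$. This is exactly our assumed $\mathrm{USCQ}$ solver: the coordinates of $\vec E$ index the potential facts of $D$ (restricted to the relations appearing in $Q$), and a $\mathrm{USCQ}$ instance is by definition one in which each such bit is included independently. Theorem~\ref{thm: GLDP} then hands us a randomized algorithm $B$ running in time $\Otil(n + T(n))$ that outputs $f(\vec E) = \mathrm{SCQ}(D)$ on every fixed input with probability at least $1 - O(2^{-\lg^2(n)})$. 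The stated additive $n^2$ simply absorbs the overhead of reading the (constantly many) relations of $D$ and assembling the coordinate vector $\vec E$, which dominates the linear term from the theorem; this yields the claimed $\Otil(T(n) + n^2)$ running time and finishes the proof.

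The only point that is not completely mechanical, and the place I would spend the most care, is that $\mathrm{USCQ}$ as defined allows a distinct inclusion probability $\mu_{R_i}$ per relation, whereas Theorem~\ref{thm: GLDP} is phrased for a single constant $\mu$. I would resolve this either by stating the corollary for a common constant $\mu$ across all relations, so that $\vec E \sim \mathrm{Ber}[\mu]^n$ exactly, or by observing that the self-correction argument behind Theorem~\ref{thm: GLDP} operates block-by-block on the strongly $d$-partite structure (one block per relation) and therefore tolerates a different constant $\mu_{R_i}$ on each block, which recovers the heterogeneous statement. Apart from this bookkeeping there is no genuine obstacle, since all the interpolation and self-correction machinery is already packaged inside Theorem~\ref{thm: GLDP} (exactly as in the parallel reduction giving Corollary~\ref{H TO UH}).
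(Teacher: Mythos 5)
Your proposal is correct and takes essentially the same route as the paper: certify via Lemma~\ref{lem: SCQ polynomial accuracy} and Lemma~\ref{lem:scqGLDP} that $f$ is a good low-degree polynomial of degree $d=|Q|$ for SCQ, then invoke Theorem~\ref{thm: GLDP} as a black box to obtain the worst-case algorithm. Your added attention to matching the $\mathrm{Ber}[\mu]$ hypothesis against the per-relation probabilities $\mu_{R_i}$ of USCQ is a detail the paper's two-line proof silently glosses over, but it does not change the argument.
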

\begin{proof}
    We can apply Lemma \ref{lem:scqGLDP} to use $f$ as a GLDP of degree $d$ for SCQ. Then we can apply Theorem \ref{thm: GLDP} for the worst-case to average-case reduction given our GLDP for SCQ.
\end{proof}

\section{Conclusion and Open Problems}
\label{sec:conc_and_open_prob}
We have given a worst-case to average-case reduction for counting sub-hypergraphs and for counting database queries. We demonstrate the usefulness of our improved reduction by giving tight lower-bounds for average-case hypercycle counting for short hypercycles. 

Additionally, we present new algorithms and new lower bounds for hypercycle in the worst-case. We get tight bounds for the weighted problem of minimum-hypercycle. We get tight bounds for short hypercycle lengths for the unweighted problem. We show a new faster algorithm for $k$-hypercycle which depends on the running time of fast matrix multiplication. However, these results leave many problems open.

\subsection{Unweighted Hypercycle Open Problems}
Fundamentally the open problems here represent getting clear answers about the running time of the unweighted hypercycle problem when $k\geq \lambda_3^{-1}(u)+1$. Some specific open problems which we think would mark progress towards understanding this longer-cycle regime are:
\begin{enumerate}
    \item When $k = \lambda_3^{-1}(u)+1$ can you show a lower bound of $n^{\lambda_3^{-1}(u)-o(1)}$ for unweighted $k$-hypercycle? Note that if you could show this you would be giving a tight lower bound assuming $\omega=2$. 
    \item \textbf{(Extending the previous question)} When $ 2u-1> k \geq \lambda_3^{-1}(u)+1$ can you show a lower bound of $n^{k-1-o(1)}$ for unweighted $k$-hypercycle? 
    \item Can you give a reduction which shows that if (counting, directed) $k$-hypercycle in a $u$-uniform graph requires $T(n)$ time then so does (counting, directed) $(k+1)$-hypercycle? Note that we have a reduction which shows that if directed $k$-hypercycle in a $u$-uniform graph requires $T(n)$ time then so does directed $(k+u-1)$-hypercycle (directedness or something similar is needed for hypercycle lengths which are a multiple of $u$). This generalizes the idea of adding a partition to your graph with a matching which works for $u=2$. 
    \item \textbf{(Proving the previous point can't be done in general)} Can you show that for some $k$ and $u$ (counting, directed) $k$-hypercycle in a $u$-uniform graph requires $T(n)$ time and (counting, directed) $(k+1)$-hypercycle in a $u$-uniform graph can be solved in $o(T(n))$ time? For example, if you could show for some $u$ that $k = \lambda_3^{-1}(u)+1$ hypercycle could be solved in $n^{\lambda_3^{-1}(u)-\epsilon}$ for some $\epsilon >0$ this would be satisfied.
    \item Can you give a faster algorithm for the unweighted $k$-hypercycle problem than what we offer in this paper for any $k$ and $u$? 
\end{enumerate}

\subsection{Existential Hypercycle Open Problems}
There are interesting patterns we have noticed in $k$-hypercycles in $u$-uniform hypergraphs which  relate to the existence of graphs instead of the existence of algorithms. We note in the introduction that a theorem exists showing that $k$-hypercycles must exist in sufficiently dense graphs \cite{allen2015tight}. However, as that theorem is currently written it doesn't show that e.g. graphs with $n^{u-1}$ edges must have a $(2u)$-hypercycle. We note that when $k$ is not a multiple of $u$ you can create an extremely dense graph with no hypercycles. Specifically, create a $u$ partite with $n/u$ nodes in each partition and add every possible hyperedge which includes one node from each partition. This is metaphorically similar to no odd-cycles existing in a complete bi-partite graph. We will now give some concrete open problems related to this issue of multiple-of-$u$-hypercycles and density. 

\begin{enumerate}
    \item In $u=2$ uniform undirected graphs there is an interesting pattern which emerges. A fully dense bipartite graph contains no odd cycles, however, even cycles must exist even in relatively sparse graphs. Specifically, for constant $k$ in a graph with sparsity $\Omega(n^{1+1/k})$ there must exist a $2k$-cycle. How does this relate to hypergraphs of larger uniformity? Hypercycles of length $k = 0 \mod u$ can exist in a $u$-partite hypergraph, however hypercycles of length $k' \ne 0 \mod u$ do not exist in even a complete $u$-paritite hypergraph. This looks like it follows a metaphorically similar structure to graphs where $u=2$. Certainly it does for $k' \ne 0 \mod u$. However, we have not yet been able to characterize the density at which $k$-hypercycles are guaranteed to exist when $k = 0 \mod u$. A construction with $\Theta(n^u)$ edges where no $k$-hypercycle exists would settle the question, as would  a proof that a $k$-hypercycle must exist in any graph with $\Omega(n^{u-\epsilon})$ hyperedges for some constant $\epsilon >0$.
    \item To make the above question more concrete: what is the lowest hyperedge density such that a $6$-hypercycle is guaranteed to exist in any graph with that density in a $3$-uniform hypergraph? Is this density $\Theta(n^3)$? Is this density $\Omega(n^2)$?
    \item To make the above question (potentially) easier: You are given a tripartite 3-uniform hypergraph, $G$, with partitions $V_1, V_2, V_3$. Furthermore you are guaranteed that for every pair of nodes $(v_i, v_j) \in V_i \times V_j$ there are exactly $d$ edges of the form $(v_i, v_j, v_k)$ where $v_k \in V_k$ and $i\ne j \ne k$. This is a generalization of degree. What is the degree $d$ at which a $6$-hypercycle is guaranteed?
    \item Given a close reading of the proof of Theorem 1 from `Tight cycles in hypergraphs'(\cite{allen2015tight}) can you show that in a $u$-uniform hypergraph $G$ with at least $|E| \geq \frac{2k}{n} \binom{n}{u}$ hyperedges there must be a $k$-hypercycle? (This is what would happen if you could set $\delta = 2k/n$ and still have the proof go through.)
    \item For even cycles in 2-uniform graphs the longer the cycle the smaller the sparsity at which it is guaranteed to exist. Can something similar be proven for $u$-uniform graphs in general? 
\end{enumerate}

\subsection{Counting Color Coding}

We would love to say that counting constant sized subhypergraphs in the worst-case is equivalent in hardness up to log factors to counting constant sized subhypergraphs in the average-case. The issue we have here is, surprisingly, in the worst-case. We would like to show that counting a subhypergraph $H$ in a hypergraph is equivalent to counting $H$ in a $|H|$-partite graph. The issue is that standard approaches for this, even in $2$-uniform graphs, allow this reduction for \emph{detection} but not for counting. For specific graph structures (notably cliques) there are ways to build this reduction. However, for most graph structures getting the counts to line up seems untenable. Such a reduction would strengthen our results, but, also strengthen the many results which use color-coding a sub-routine. 

\let\realbibitem=\bibitem
\def\bibitem{\par \vspace{-0.5ex}\realbibitem}

\bibliographystyle{alpha}
\bibliography{my} 

\newcommand{\etalchar}[1]{$^{#1}$}
\begin{thebibliography}{NPRW23}

\bibitem[ABCM15]{allen2015tight}
Peter Allen, Julia B{\"o}ttcher, Oliver Cooley, and Richard Mycroft.
\newblock Tight cycles in hypergraphs.
\newblock {\em Electronic Notes in Discrete Mathematics}, 49:675--682, 2015.

\bibitem[AKPP18]{AllenKPP18}
Peter Allen, Christoph Koch, Olaf Parczyk, and Yury Person.
\newblock Finding tight hamilton cycles in random hypergraphs faster.
\newblock In Michael~A. Bender, Martin Farach{-}Colton, and Miguel~A. Mosteiro, editors, {\em {LATIN} 2018: Theoretical Informatics - 13th Latin American Symposium, Buenos Aires, Argentina, April 16-19, 2018, Proceedings}, volume 10807 of {\em Lecture Notes in Computer Science}, pages 28--36. Springer, 2018.

\bibitem[AWW14]{AbboudWW14}
Amir Abboud, Virginia~Vassilevska Williams, and Oren Weimann.
\newblock Consequences of faster alignment of sequences.
\newblock In Javier Esparza, Pierre Fraigniaud, Thore Husfeldt, and Elias Koutsoupias, editors, {\em Automata, Languages, and Programming - 41st International Colloquium, {ICALP} 2014, Copenhagen, Denmark, July 8-11, 2014, Proceedings, Part {I}}, volume 8572 of {\em Lecture Notes in Computer Science}, pages 39--51. Springer, 2014.

\bibitem[BBB19]{UniformCliqueABB}
Enric Boix{-}Adser{\`{a}}, Matthew Brennan, and Guy Bresler.
\newblock The average-case complexity of counting cliques in erd{\H{o}}s-r{\'{e}}nyi hypergraphs.
\newblock In David Zuckerman, editor, {\em 60th {IEEE} Annual Symposium on Foundations of Computer Science, {FOCS} 2019, Baltimore, Maryland, USA, November 9-12, 2019}, pages 1256--1280. {IEEE} Computer Society, 2019.

\bibitem[BCM22]{tightFineGrainedCarmeli}
Karl Bringmann, Nofar Carmeli, and Stefan Mengel.
\newblock Tight fine-grained bounds for direct access on join queries.
\newblock In {\em Proceedings of the 41st ACM SIGMOD-SIGACT-SIGAI Symposium on Principles of Database Systems}, PODS '22, page 427–436, New York, NY, USA, 2022. Association for Computing Machinery.

\bibitem[BRSV17]{BallWorstToAvg}
Marshall Ball, Alon Rosen, Manuel Sabin, and Prashant~Nalini Vasudevan.
\newblock Average-case fine-grained hardness.
\newblock In Hamed Hatami, Pierre McKenzie, and Valerie King, editors, {\em Proceedings of the 49th Annual {ACM} {SIGACT} Symposium on Theory of Computing, {STOC} 2017, Montreal, QC, Canada, June 19-23, 2017}, pages 483--496. {ACM}, 2017.

\bibitem[BRSV18]{BallRSV18}
Marshall Ball, Alon Rosen, Manuel Sabin, and Prashant~Nalini Vasudevan.
\newblock Proofs of work from worst-case assumptions.
\newblock In Hovav Shacham and Alexandra Boldyreva, editors, {\em Advances in Cryptology - {CRYPTO} 2018 - 38th Annual International Cryptology Conference, Santa Barbara, CA, USA, August 19-23, 2018, Proceedings, Part {I}}, volume 10991 of {\em Lecture Notes in Computer Science}, pages 789--819. Springer, 2018.

\bibitem[BT17]{BackursT17}
Arturs Backurs and Christos Tzamos.
\newblock Improving viterbi is hard: Better runtimes imply faster clique algorithms.
\newblock In Doina Precup and Yee~Whye Teh, editors, {\em Proceedings of the 34th International Conference on Machine Learning, {ICML} 2017, Sydney, NSW, Australia, 6-11 August 2017}, volume~70 of {\em Proceedings of Machine Learning Research}, pages 311--321. {PMLR}, 2017.

\bibitem[CK21]{kroll2021}
Nofar Carmeli and Markus Kr\"{o}ll.
\newblock On the enumeration complexity of unions of conjunctive queries.
\newblock {\em ACM Trans. Database Syst.}, 46(2), may 2021.

\bibitem[CS23]{selfJoinsHard2023}
Nofar Carmeli and Luc Segoufin.
\newblock Conjunctive queries with self-joins, towards a fine-grained enumeration complexity analysis.
\newblock In {\em Proceedings of the 42nd ACM SIGMOD-SIGACT-SIGAI Symposium on Principles of Database Systems}, PODS '23, page 277–289, New York, NY, USA, 2023. Association for Computing Machinery.

\bibitem[CTG{\etalchar{+}}23]{CarmeliTGKR23}
Nofar Carmeli, Nikolaos Tziavelis, Wolfgang Gatterbauer, Benny Kimelfeld, and Mirek Riedewald.
\newblock Tractable orders for direct access to ranked answers of conjunctive queries.
\newblock {\em {ACM} Trans. Database Syst.}, 48(1):1:1--1:45, 2023.

\bibitem[DLW20]{factoredProblems}
Mina Dalirrooyfard, Andrea Lincoln, and Virginia~Vassilevska Williams.
\newblock New techniques for proving fine-grained average-case hardness.
\newblock In {\em 61st {IEEE} Annual Symposium on Foundations of Computer Science, {FOCS} 2020, Virtual, November 16 – 19, 2020}. {IEEE} Computer Society, 2020.

\bibitem[Gol20]{Goldreich20}
Oded Goldreich.
\newblock On counting {\textdollar}t{\textdollar}-cliques mod 2.
\newblock {\em Electron. Colloquium Comput. Complex.}, {TR20-104}, 2020.

\bibitem[HM19]{tightCyclesHypergraphsExistance}
Hao Huang and Jie Ma.
\newblock On tight cycles in hypergraphs.
\newblock {\em SIAM Journal on Discrete Mathematics}, 33(1):230--237, 2019.

\bibitem[LWW18]{LVW18}
Andrea Lincoln, Virginia~Vassilevska Williams, and R.~Ryan Williams.
\newblock Tight hardness for shortest cycles and paths in sparse graphs.
\newblock In Artur Czumaj, editor, {\em Proceedings of the Twenty-Ninth Annual {ACM-SIAM} Symposium on Discrete Algorithms, {SODA} 2018, New Orleans, LA, USA, January 7-10, 2018}, pages 1236--1252. {SIAM}, 2018.

\bibitem[NPRW23]{simons_fine_grained_complexity_2023}
Hung Ngo, Kirk Pruhs, Atri Rudra, and Virginia~Vassilevska Williams.
\newblock Fine-grained complexity, logic, and query evaluation program.
\newblock Simons Institute for the Theory of Computing, Workshop Schedule, September 2023.
\newblock Logic and Algorithms in Database Theory and AI.

\bibitem[WWS23]{WangWS23}
Yisu~Remy Wang, Max Willsey, and Dan Suciu.
\newblock Free join: Unifying worst-case optimal and traditional joins.
\newblock {\em Proc. {ACM} Manag. Data}, 1(2):150:1--150:23, 2023.

\bibitem[WXXZ23]{mmConstantOmega}
Virginia~Vassilevska Williams, Yinzhan Xu, Zixuan Xu, and Renfei Zhou.
\newblock New bounds for matrix multiplication: from alpha to omega.
\newblock {\em CoRR}, abs/2307.07970, 2023.

\end{thebibliography}

\appendix



\section{Discussion}
\label{sec:appendix_discussion}
We start with a proof from the main body of the paper and then continue onto general discussion. 

\subsection{Proof of Recursion Lemma}
\label{sec: appendixLemmaRecursion}

We provide the proof of Lemma \ref{lem: inclusion}.

\lemmaInclusion*
\begin{proof}
Let $H$ have $k$ vertices, and $e_H$ hyperedges. Let $L$ be given as a list of vertices $v$ with labels corresponding to partitions $i_1,...i_v$, and $e+1$ hyperedges between partitions $i_{a_1},..., i_{a_c}$. Let $S_E$ be the set of all such sets $\{i_{a_1},...,i_{a_c}\}$. $\overline{S}_E$, then, will be the set of all sets of partitions not in $S_E$.

We bring our attention back to $S_G$. Consider the subset of instances in $S_G$ where the hyperedges between partitions in $S_E$ are labeled 1. These are essentially all graphs where the edges between partitions relevant to $L$ are in $G$. Call this subset $S_G[L]$.

Take the counts of the number of $H$ that appear in all graphs in $S_G[L]$, and call this count $c_{S_G[L]}$. This counts the number of $H$ that appear if the graph $G$ were to have all possible hyperedges between partitions in $\overline{S}_E$, weighted by how many edges in $S_E$ that subgraph uses. If $H$ appears in $G$ where $\ell$ of its edges are in the $\overline{S}_E$ partitions then it is counted $b^{2^k - 1 - e - 1 - \ell}$ times. 

Given that $L$ is a labeled subgraph of $H$, at least one labeling of $H$ will share all $e+1$ hyperedges and $v$ vertices of $L$. There may be many valid labelings for the hyperedges and vertices not in $L$.

We want to count all $H$ that have labelings that match the edges in $L$, and not those that only share some. Here is where we will use the additional counts of small graphs that we are given. For subgraphs that only partially match up with $L$, let's define their overlap to be $L'$. $L'$ must have $v$ vertices, and at most $e$ hyperedges. We want to somehow identify the counts of all subgraphs that overlap with any possible $L'$ without having edges in $L-L'$ and remove them from our previous count of $c_{S_G[L]}$.

Define $G_{L,L'}$ to be a hypergraph on $k$ vertices where all edges in $L-L'$ are excluded, and all other hyperedges are included. We define $c_{G_{L,L'}}$ to be the count of the number of subgraphs $H$ that exist in this graph that use all edges in $L'$. We can do this with brute force using $O(k!)$ time. As we will see later, we will need to do this computation on every possible subgraph of $L$, of which there are $O(2^{2^k })$.

Let $L'$ have $e_{L'}$ edges, and $c_{L'}$ be the count of labeled $L'$ that exist in $G$. The count of all subgraphs $H$ which overlap exactly $L'$ that are counted in $c_{S_G[L]}$ is:

\[
c_{L'} \cdot n^{k-v} \cdot c_{G_{L,L'}} \cdot b^{2^k - 1 - e - e_H + e_{L'}} 
\]

Let's try to break down this value. For every $\hat{L}$ that is counted by $c_{L'}$, we want to count the number of ways to construct $H$ around this. We first have to select the remaining vertices for our $H$. There are $n^{k-v}$ choices for this. Now that we have a particular selection of $k$ vertices, $c_{G_{L,L'}}$ tells us the number of ways that we can construct $H$ without using hyperedges in $L-L'$. For each one of these constructions of $H$, $\hat H $, we need to count how many times it appears in $c_{S_{G[L]}}$. $\hat H $ will only appear in the count in instances of the graph when the edges in $\hat H $ match the label that is selected for the partitions. However, for the edges that are not in $\hat H $, $\hat H $ will be counted regardless of the label selected. So $\hat H $ will be counted $ b^{ 2 ^ k -1 - e - e_H + e_{L'}}  $ times.

We note that this justification works any $L'\subseteq L$. Furthermore, since every possible $\hat H $ has to overlap with \textit{some} subgraph of $ L $ (keeping in mind that the empty graph is also a subgraph), we can account every single instance represented in $c_ { S_G [L] } $ in the following equation:

\[
\sum_{L'\subseteq L }  c_{L'} \cdot n^{k-v} \cdot c_{G_{L,L'}} \cdot b^{2^k-1 - e - e_H + e_{L'}}  = c_{S_G[L]}
\]

We now have every variable in this equation except for $c_L$, and we can now solve for it. This computation takes $O(k!\cdot 2^{2^k } + b^{2^k })$ to do, as desired.
\end{proof}

\subsection{Why Enumerating and Deciding  is Often Easy on Average}
\label{subsec:whyEnumIsEasy}
To build intuition we will start with easy graph cases. Then we will explain why enumerating or deciding small graph structures is always easy in random graphs. Then we will explain why a similar result holds true for databases. 

First, to build intuition, consider the case of triangle. Counting triangles is equivalently hard, up to log factors, in the worst-case and average-case \cite{UniformCliqueABB}. However, any set of $3$ nodes in an Erd{\H{o}}s-R{\'{e}}nyi graph have a $1/8$ chance of being a triangle. This makes enumerating and deciding if a clique exists easy. To enumerate we can start by taking linear time iterating through possible cliques to build up a backlog. Then after every $O(1)$ possible cliques that have been checked enumerate a new clique. With high probability there will be a saved clique to return. For deciding the existence of a clique one can simply return `yes' and be correct with probability at least $1 - (1-1/8)^{n/3}$.

\subsubsection{Hypergraphs}
The core ideas of both cases apply to all small graphs. If a subgraph (or subhypergraph) has a constant number of nodes $k$ then any one particular set of $k$ nodes will have a constant probability of being the relevant graph structure. There are at most $2^k$ edges and hyperedges in the subgraph and the chance that each of those edges exists or doesn't in the requested direction is $2^{-2^k}$. While this is an ugly looking relationship, it is constant if $k$ is constant (note if there are fewer edge constraints the relationship is much less negative). Now, given that the graph structure has a constant probability of existing we can say the probability of our subhypergraph existing in an Erd{\H{o}}s-R{\'{e}}nyi graph is at least 
$$1- (1 -2^{-2^k})^{n/k}.$$
If $k$ is constant this is $1 - 2^{-\Theta(n)}$. Enumerating these graph structures will similarly be easy with high probability. The expected number of these graph structures is high and enumerating possible subhypergraphs and returning when you find a subhypergraph which meets the condition is sufficient.

\subsubsection{Databases}
What about in databases? Well, similarly we have a case where we have a small structure we are looking for. In this case, rows that correspond to a query of interest. Every row exists iid with probability $1/2$, so once again, given a choice of elements the relevant queried rows exist with constant probability. This will, once again, make deciding very easy (simply return yes and you will be correct with high probability). When enumerating we can once again take the approach of taking linear time to build up a large number of instances and then after checking $O(1)$ possible query entries return an answer. If we check $\text{LARGE\_CONSTANT} \cdot 2^{2^|\text{SIZE\_OF\_QUERY}|}$ entries each time the probability that we ever run out of instances to return is low.

\end{document}